\pdfoutput=1
\documentclass[11pt]{article}
\usepackage[letterpaper,margin=1in]{geometry}
\usepackage[T1]{fontenc}
\usepackage{lmodern}
\usepackage{microtype}
\usepackage{amsmath,amssymb,amsthm,mathtools}
\usepackage{booktabs,array,multirow}
\usepackage{float}
\usepackage{graphicx}
\usepackage[numbers,sort&compress]{natbib}
\usepackage[dvipsnames]{xcolor}
\usepackage{url}
\usepackage{xspace}
\usepackage{algorithm}
\usepackage[noend]{algpseudocode}
\usepackage{longtable}
\usepackage{authblk}
\usepackage{hyperref}
\hypersetup{
  colorlinks=true,
  linkcolor=MidnightBlue,
  citecolor=MidnightBlue,
  urlcolor=MidnightBlue,
  pdftitle={Vertex Ordering for Exact Triangle Listing: Concave Geometry and Certified Rounding},
  pdfauthor={Kourosh Mirsohi, Melika Gorgi, Erfan Hajihashemi},
  pdfsubject={Vertex ordering for exact triangle listing},
  pdfkeywords={triangle listing, vertex ordering, concave minimization, orientation polytope, graphical zonotope, Min UnCut, configuration model, iterated local search}
}
\newtheorem{theorem}{Theorem}
\newtheorem{proposition}[theorem]{Proposition}
\newtheorem{corollary}[theorem]{Corollary}
\newtheorem{lemma}[theorem]{Lemma}
\theoremstyle{remark}

\newcommand{\Lam}{\Lambda}
\newcommand{\Lopt}[1]{\Lambda^{*}_{#1}}
\newcommand{\dout}{d^{+}}
\newcommand{\Nout}{N^{+}}
\newcommand{\smin}{\operatorname{smin}}
\newcommand{\Cpp}{C^{++}}
\newcommand{\Cpm}{C^{+-}}
\newcommand{\ZG}{Z_G}
\newcommand{\TILS}{T+ILS\xspace}
\newcommand{\DOVE}{DOVE\xspace}
\newcommand{\relint}{\operatorname{relint}}
\newcommand{\repo}{\url{https://github.com/Mirsohi/triangle-ordering}}
\newcommand{\repopath}[1]{\nolinkurl{#1}}
\makeatletter\g@addto@macro\UrlBreaks{\do\_}\makeatother

\newif\ifresultspending
\newcommand{\pend}{\textcolor{red}{\textbf{??}}}
\resultspendingfalse

\newcommand{\SuiteDoveMean}{4.36}
\newcommand{\SuiteTMean}{4.43}
\newcommand{\SuiteTIKMean}{5.70}

\newcommand{\SuiteTIKITR}{32/0/0}

\newcommand{\SuiteTIKvsDove}{31/1/0}

\newcommand{\HeldCount}{9 of 10}
\newcommand{\HeldDoveMean}{10.61}
\newcommand{\HeldTMean}{7.71}
\newcommand{\HeldTIKMean}{14.83}

\newcommand{\HeldTIKITR}{9/0/0}

\newcommand{\HeldTIKvsDove}{9/0/0}

\newcommand{\SuiteTIKCI}{[3.95, 7.67]\%}
\newcommand{\SuiteTIKvsDoveP}{<0.001}
\newcommand{\SuiteTIKvsT}{31/1/0}
\newcommand{\HeldTIKCI}{[9.53, 20.93]\%}
\newcommand{\HeldTIKvsDoveP}{0.0039}
\newcommand{\HeldTIKvsT}{9/0/0}

\providecommand{\PHAllCILSTBestMean}{}\renewcommand{\PHAllCILSTBestMean}{7.62}

\providecommand{\PHAllCILSTBestvsTILS}{}\renewcommand{\PHAllCILSTBestvsTILS}{16/1/24}

\providecommand{\PHAllColBestMean}{}\renewcommand{\PHAllColBestMean}{$-$13.53}

\providecommand{\PHAllCount}{}\renewcommand{\PHAllCount}{41}

\providecommand{\PHAllILSTMean}{}\renewcommand{\PHAllILSTMean}{7.70}

\providecommand{\PHAllILSTvsTILS}{}\renewcommand{\PHAllILSTvsTILS}{19/1/21}

\providecommand{\PHAllTILSMean}{}\renewcommand{\PHAllTILSMean}{7.71}

\providecommand{\PHAllTimeRatioTILS}{}\renewcommand{\PHAllTimeRatioTILS}{0.9905}

\providecommand{\PHAllWorkRatioTILS}{}\renewcommand{\PHAllWorkRatioTILS}{0.9197}

\providecommand{\PHHeldTimeRatioTILS}{}\renewcommand{\PHHeldTimeRatioTILS}{0.9615}

\providecommand{\NSCheck}{\pend}
\providecommand{\NSFrozenDoveMean}{\pend}
\providecommand{\NSFrozenTILSCI}{\pend}
\providecommand{\NSFrozenTILSITR}{\pend}
\providecommand{\NSFrozenTILSMean}{\pend}

\providecommand{\NSFrozenTMean}{\pend}
\providecommand{\NSGapMax}{\pend}

\providecommand{\NSGapMean}{\pend}
\providecommand{\NSHeldCount}{\pend}

\providecommand{\NSNeighSeeds}{\pend}

\providecommand{\NSReDoveMean}{\pend}
\providecommand{\NSReMaxOver}{\pend}
\providecommand{\NSReOverRuns}{\pend}
\providecommand{\NSReReused}{\pend}
\providecommand{\NSReRuns}{\pend}
\providecommand{\NSReTILSCI}{\pend}
\providecommand{\NSReTILSCappedCI}{\pend}
\providecommand{\NSReTILSCappedMean}{\pend}
\providecommand{\NSReTILSITR}{\pend}
\providecommand{\NSReTILSMean}{\pend}

\providecommand{\NSReTILSNoFinalMean}{\pend}
\providecommand{\NSReTILSvsDove}{\pend}
\providecommand{\NSReTILSvsDoveP}{\pend}
\providecommand{\NSReTILSvsT}{\pend}

\providecommand{\NSReTMean}{\pend}
\providecommand{\NSRefChanged}{\pend}

\providecommand{\NSTimeFrozenCI}{\pend}
\providecommand{\NSTimeFrozenGeo}{\pend}
\providecommand{\NSTimeFrozenWorkGeo}{\pend}

\providecommand{\NSTimeReCI}{\pend}
\providecommand{\NSTimeReGeo}{\pend}
\providecommand{\NSTimeReWorkGeo}{\pend}

\IfFileExists{supp_macros.tex}{
\providecommand{\NSHeldCount}{}\renewcommand{\NSHeldCount}{9}

\providecommand{\NSCheck}{}\renewcommand{\NSCheck}{passed for all 130 supplementary orders}

\providecommand{\NSNeighSeeds}{}\renewcommand{\NSNeighSeeds}{45 of 45}
\providecommand{\NSRefChanged}{}\renewcommand{\NSRefChanged}{8}

\providecommand{\NSGapMean}{}\renewcommand{\NSGapMean}{8.33}
\providecommand{\NSGapMax}{}\renewcommand{\NSGapMax}{18.82}

\providecommand{\NSFrozenTMean}{}\renewcommand{\NSFrozenTMean}{$-$1.08}
\providecommand{\NSFrozenDoveMean}{}\renewcommand{\NSFrozenDoveMean}{2.40}
\providecommand{\NSFrozenTILSMean}{}\renewcommand{\NSFrozenTILSMean}{7.22}

\providecommand{\NSFrozenTILSITR}{}\renewcommand{\NSFrozenTILSITR}{9/0/0}
\providecommand{\NSFrozenTILSCI}{}\renewcommand{\NSFrozenTILSCI}{[3.86, 11.09]\%}
\providecommand{\NSReTMean}{}\renewcommand{\NSReTMean}{4.66}
\providecommand{\NSReDoveMean}{}\renewcommand{\NSReDoveMean}{4.32}
\providecommand{\NSReTILSMean}{}\renewcommand{\NSReTILSMean}{7.50}

\providecommand{\NSReTILSITR}{}\renewcommand{\NSReTILSITR}{9/0/0}
\providecommand{\NSReTILSCI}{}\renewcommand{\NSReTILSCI}{[3.95, 11.53]\%}

\providecommand{\NSReTILSCappedMean}{}\renewcommand{\NSReTILSCappedMean}{7.43}
\providecommand{\NSReTILSCappedCI}{}\renewcommand{\NSReTILSCappedCI}{[3.91, 11.42]\%}
\providecommand{\NSReTILSNoFinalMean}{}\renewcommand{\NSReTILSNoFinalMean}{7.44}
\providecommand{\NSReRuns}{}\renewcommand{\NSReRuns}{27}
\providecommand{\NSReOverRuns}{}\renewcommand{\NSReOverRuns}{7}

\providecommand{\NSReReused}{}\renewcommand{\NSReReused}{1}
\providecommand{\NSReTILSvsDove}{}\renewcommand{\NSReTILSvsDove}{9/0/0}
\providecommand{\NSReTILSvsDoveP}{}\renewcommand{\NSReTILSvsDoveP}{0.0039}
\providecommand{\NSReTILSvsT}{}\renewcommand{\NSReTILSvsT}{9/0/0}

\providecommand{\NSReMaxOver}{}\renewcommand{\NSReMaxOver}{3.7\%}
\providecommand{\NSTimeFrozenGeo}{}\renewcommand{\NSTimeFrozenGeo}{1.0138}
\providecommand{\NSTimeFrozenCI}{}\renewcommand{\NSTimeFrozenCI}{[0.9751, 1.0561]}

\providecommand{\NSTimeFrozenWorkGeo}{}\renewcommand{\NSTimeFrozenWorkGeo}{0.8466}
\providecommand{\NSTimeReGeo}{}\renewcommand{\NSTimeReGeo}{1.0102}
\providecommand{\NSTimeReCI}{}\renewcommand{\NSTimeReCI}{[0.9954, 1.0245]}

\providecommand{\NSTimeReWorkGeo}{}\renewcommand{\NSTimeReWorkGeo}{0.9229}
}{}
\providecommand{\RConfigs}{}\renewcommand{\RConfigs}{40}
\providecommand{\RCheapTMin}{}\renewcommand{\RCheapTMin}{4.5}
\providecommand{\RCheapTMax}{}\renewcommand{\RCheapTMax}{12.0}
\providecommand{\RCheapTMean}{}\renewcommand{\RCheapTMean}{8.4}
\providecommand{\RChangedLowS}{}\renewcommand{\RChangedLowS}{0.2}
\providecommand{\RCheapTLowS}{}\renewcommand{\RCheapTLowS}{8.4}

\providecommand{\RReuseTLowS}{}\renewcommand{\RReuseTLowS}{10.1}

\providecommand{\RChangedSmallS}{}\renewcommand{\RChangedSmallS}{2.0}

\providecommand{\RChangedMidS}{}\renewcommand{\RChangedMidS}{18.0}

\providecommand{\RChangedOneS}{}\renewcommand{\RChangedOneS}{83.1}

\providecommand{\RChangedHighS}{}\renewcommand{\RChangedHighS}{94.8}
\providecommand{\RCheapTHighS}{}\renewcommand{\RCheapTHighS}{8.4}

\providecommand{\RReuseTHighS}{}\renewcommand{\RReuseTHighS}{4.0}

\providecommand{\RNeighHighS}{}\renewcommand{\RNeighHighS}{1.7}
\providecommand{\RNeighTHighS}{}\renewcommand{\RNeighTHighS}{8.6}

\providecommand{\RHTwoWinsLowS}{}\renewcommand{\RHTwoWinsLowS}{8/8}
\providecommand{\RHTwoPLowS}{}\renewcommand{\RHTwoPLowS}{0.0039}
\providecommand{\RHTwoWinsSmallS}{}\renewcommand{\RHTwoWinsSmallS}{1/8}

\providecommand{\RHThreeGapMax}{}\renewcommand{\RHThreeGapMax}{0.7}
\providecommand{\RHThreeTimeFactorMin}{}\renewcommand{\RHThreeTimeFactorMin}{81}
\providecommand{\RHFourRatioMin}{}\renewcommand{\RHFourRatioMin}{0.07}
\providecommand{\RHFourRatioMax}{}\renewcommand{\RHFourRatioMax}{0.21}

\providecommand{\RHFiveRatioMin}{}\renewcommand{\RHFiveRatioMin}{1.50}
\providecommand{\RHFiveRatioMax}{}\renewcommand{\RHFiveRatioMax}{9.75}
\providecommand{\RHFiveRatioLowSMin}{}\renewcommand{\RHFiveRatioLowSMin}{1.50}
\providecommand{\RHFiveRatioLowSMax}{}\renewcommand{\RHFiveRatioLowSMax}{2.55}
\providecommand{\RGammaStarMin}{}\renewcommand{\RGammaStarMin}{13}
\providecommand{\RGammaStarMax}{}\renewcommand{\RGammaStarMax}{422}
\providecommand{\RGammaStarLowSMin}{}\renewcommand{\RGammaStarLowSMin}{13}
\providecommand{\RGammaStarLowSMax}{}\renewcommand{\RGammaStarLowSMax}{69}
\providecommand{\RGammaStarCheapTMin}{}\renewcommand{\RGammaStarCheapTMin}{546}
\providecommand{\RGammaStarCheapTMax}{}\renewcommand{\RGammaStarCheapTMax}{3109}
\providecommand{\RGammaMarkerMax}{}\renewcommand{\RGammaMarkerMax}{6.0}
\providecommand{\RGammaHashMin}{}\renewcommand{\RGammaHashMin}{4.3}
\providecommand{\RGammaHashMax}{}\renewcommand{\RGammaHashMax}{8.5}
\providecommand{\RIncListLowS}{}\renewcommand{\RIncListLowS}{0.09}

\providecommand{\RIncListSmallS}{}\renewcommand{\RIncListSmallS}{0.71}

\providecommand{\RIncListMidS}{}\renewcommand{\RIncListMidS}{4.28}

\providecommand{\RIncListOneS}{}\renewcommand{\RIncListOneS}{38.72}

\providecommand{\RIncListHighS}{}\renewcommand{\RIncListHighS}{382.93}

\providecommand{\RPTwo}{}\renewcommand{\RPTwo}{40/0/0}
\providecommand{\RStabViol}{}\renewcommand{\RStabViol}{0}
\providecommand{\RStabChecks}{}\renewcommand{\RStabChecks}{1,600}
\providecommand{\RStabLocSlack}{}\renewcommand{\RStabLocSlack}{44}
\providecommand{\RStabUniSlack}{}\renewcommand{\RStabUniSlack}{872}
\providecommand{\RFrameZeroRedMin}{}\renewcommand{\RFrameZeroRedMin}{5.0}
\providecommand{\RFrameZeroRedMax}{}\renewcommand{\RFrameZeroRedMax}{13.3}

\providecommand{\RPHGainLowS}{}\renewcommand{\RPHGainLowS}{1.7}

\providecommand{\RPHGainSmallS}{}\renewcommand{\RPHGainSmallS}{0.5}

\providecommand{\RPHGainMidS}{}\renewcommand{\RPHGainMidS}{0.0}

\providecommand{\CubicTILSn}{}\renewcommand{\CubicTILSn}{0.28}



\title{Vertex Ordering for Exact Triangle Listing:\\ Concave Geometry and Certified Rounding}
\author[1]{Kourosh Mirsohi}
\author[2]{Melika Gorgi}
\author[3]{Erfan Hajihashemi}
\affil[1]{Department of Computer Science, University of California, Irvine}
\affil[2]{Center for Complex Biological Systems, University of California, Irvine}
\affil[3]{Department of Electrical Engineering and Computer Science, University of California, Irvine}
\date{}

\begin{document}
\maketitle

\begin{abstract}
Exact forward triangle listers orient each edge of a graph by a vertex order. The order does not change the output, but it changes the work. For the adaptive lister, which scans the smaller of the two forward neighborhoods, the number of candidate tests is a known sum of minima of forward degrees, but how to optimize it had not been characterized. We show that this work is a concave function of the forward degrees, so choosing an order is concave minimization over the orientation polytope, whose vertices are the forward-degree vectors of orders. A single sort therefore rounds any point of the polytope, including the out-degree vector of any orientation, to an order with no more work, with a computable lower bound on the improvement. Minimizing the work is nevertheless NP-hard and APX-hard on triangle-free graphs of maximum degree three, and under the Unique Games Conjecture it has no polynomial-time constant-factor approximation. We also analyze logistic score relaxations and the large-graph limit of degree-based orders; on random cubic graphs, the expected work of every degree-based order is asymptotically at least $2.43$ times the expected optimum. Tangent refinement followed by iterated local search reduces the work by \HeldTIKMean\% on average on nine held-out networks fixed by a pre-registered protocol, and by \NSFrozenTILSMean\% in a supplementary rescoring, specified in advance, against a stronger reference that adds the published Neigh local search. The gains are in candidate tests; we observe no reproducible wall-clock speedup.
\end{abstract}

\section{Introduction}
Listing all triangles of a graph is a basic primitive in graph mining \citep{chiba1985arboricity,schank2005triangles,latapy2008main,benson2016higher} and in graph learning, where subgraph counts, which message-passing networks cannot in general capture \citep{chen2020substructures}, enrich node and graph features \citep{bouritsas2023gsn}. In network biology, triangles define clustering coefficients \citep{watts1998smallworld}, form the skeletons of closed three-node motifs such as the feed-forward loop \citep{milo2002motifs,shenorr2002motifs,alon2007motifs}, and enter graphlet analyses of brain and protein networks \citep{sporns2004motifs,przulj2007graphlets}. In molecular simulations whose three-body terms require all three pairwise distances within a cutoff \citep{pinzon2025threebody}, the admissible triplets are exactly the triangles of the cutoff graph; our guarantees apply to this enumeration, but we do not evaluate simulation runtime.

Motif analysis repeats triangle listing on many randomized copies of a network \citep{milo2002motifs}, so the cost of listing is incurred many times. Even a single listing of a large graph can be costly, motivating parallel, distributed, and external-memory listers \citep{suri2011counting,chu2011triangle,hu2013massive,shun2015multicore}. Fast exact listers orient every edge from its earlier to its later endpoint under a vertex order and intersect the resulting forward neighborhoods \citep{chiba1985arboricity,ortmann2014triangle}. The output does not depend on the order, but the work does. On Wiki-Talk (2.4M vertices), the adaptive lister studied here (defined below) performs 4.6 times more candidate tests under the reverse of the degeneracy order than under ascending degree (Appendix~\ref{app:prereg}).

\paragraph{Problem.} Let $G=(V,E)$ be a simple undirected graph with $n$ vertices and $m$ edges, and let $\pi$ be a strict total order of $V$. The \emph{forward degree} $\dout_\pi(v)$ is the number of neighbors of $v$ that come after $v$ in $\pi$, and these neighbors form its \emph{forward neighborhood} $N^+_\pi(v)$. The adaptive forward lister \citep{yu2020aot} handles each edge by scanning the smaller of the forward neighborhoods of its two endpoints. Each scanned vertex is tested for membership in the other neighborhood, and we call each such test a \emph{candidate test}. The total number of candidate tests is
\begin{equation}
 \Lam(\pi)=\sum_{\{u,v\}\in E}\min\{\dout_\pi(u),\dout_\pi(v)\},
 \label{eq:work}
\end{equation}
which we call the \emph{work} of $\pi$ (Proposition~\ref{prop:exact}). The sum itself was given by \citet{yu2020aot}; what had not been characterized is its dependence on the order. We study the \emph{ordering problem} $\Lopt{G}=\min_\pi\Lam(\pi)$ and write $\Lam_G(\pi)$ when the graph matters. For listers that always scan the forward neighborhood of one fixed endpoint, two related costs are NP-hard to minimize \citep{lecuyer2023tailored}.

\paragraph{Prior work and open questions.} In practice, vertices are ordered by degree or by degeneracy \citep{chiba1985arboricity,matula1983smallest,ortmann2014triangle}. \citet{lecuyer2023tailored} designed tailored orders, Split and Check, and a local search, Neigh, that can beat degeneracy orders, but for two non-adaptive costs that differ from $\Lam$. Differentiable relaxations of permutations \citep{mena2018gumbel,cuturi2019sorting,blondel2020sorting} offer another route: optimize a smooth surrogate and round. What has been missing is an understanding of the objective itself: its structure, how to round without increasing the work, its hardness, why gradient-based optimization can fail, and how much degree information alone can capture.

\paragraph{Our results.} We show that $\Lam$ depends on the order only through the vector of forward degrees and is a concave function of it. The fractional forward-degree vectors, which split one unit per edge between its two endpoints, form the \emph{orientation polytope} of $G$. Its vertices are exactly the forward-degree vectors of orders, so the ordering problem is concave minimization over this polytope (Section~\ref{sec:geometry}). Our main contributions are:
\begin{enumerate}\setlength{\itemsep}{0pt}
\item \textbf{Geometry and rounding.} A single sort, the \emph{tangent step}, rounds any point of the polytope, including the out-degree vector of any orientation, to an order whose work is no larger, with a computable lower bound on the improvement; iterating it gives \emph{tangent refinement}. The same geometry gives an exact mixed-integer program without acyclicity constraints, displayed in Eq.~\eqref{eq:mip} (Theorem~\ref{thm:rounding}, Corollary~\ref{cor:free}, Propositions~\ref{prop:rules} and~\ref{prop:bounds}).
\item \textbf{Complexity.} Through an exact correspondence with Max-Cut on subdivided graphs, we show that minimizing $\Lam$ is NP-hard and APX-hard even on triangle-free graphs of maximum degree three, and under the Unique Games Conjecture (UGC) it has no polynomial-time constant-factor approximation (Theorem~\ref{thm:hard}). A Min UnCut connection with the same UGC consequence was known for the non-adaptive cost $\Cpm$ \citep[\S 5.3.3]{lecuyer2023thesis}; ours is for the adaptive cost and rests on an exact identity for the optimum on 2-subdivisions.
\item \textbf{Relaxation landscape.} Pairwise-logistic (Bradley--Terry) relaxations orient each edge softly by a logistic function of unconstrained vertex-score differences at a fixed temperature. Every finite stationary point of an objective that is concave in the resulting soft forward degrees is a global maximizer. Our smoothed work is strictly concave in them on the affine hull of the polytope, so it has no finite local minimum, yet its infimum is hard to approximate, and its minimizing sequences saturate the score differences, where small gradients certify nothing about the score-sorted order (Theorem~\ref{thm:landscape}).
\item \textbf{Random graphs.} On configuration-model random graphs with bounded degrees, the expected work per vertex of every \emph{degree-based} order, which places vertices in consecutive blocks determined only by their degrees, has an explicit large-graph limit; in this limit the best layouts we found are V-shaped and do 8.7--21.4\% less work than ascending degree, depending on the degree law. Yet on large random cubic graphs, the expected work of every degree-based order is at least $2.438-o(1)$ times the expected optimum (Theorem~\ref{thm:mf}, Proposition~\ref{prop:cubic}).
\item \textbf{Algorithms and evaluation.} Iterated local search reaches the optimum on all 27 small instances on which a solver certifies the optimum of the program. Our method, tangent refinement followed by iterated local search (\TILS), reduces work on 32 public development graphs by \SuiteTIKMean\% on average relative to the best of standard and tailored reference orders in both directions, improving on every graph. On nine held-out networks whose selection was pre-registered, it reduces work by \HeldTIKMean\% on average relative to the pre-registered reference. In a supplementary rescoring that was specified in advance but is not part of the original pre-registration, the same orders reduce work by \NSFrozenTILSMean\% against a stronger reference that adds the published local search Neigh, improving on all nine; rerun from this stronger reference, \TILS needs fewer candidate tests than the gradient-based relaxation \DOVE (Section~\ref{sec:relax}) on all nine (Section~\ref{sec:experiments}). The ablations attribute most of the gain to the plateau moves and kicks of the local search, not to the tangent prefix (Sections~\ref{sec:ablation} and~\ref{sec:posthoc}).
\end{enumerate}
All gains are in candidate tests. The geometric-mean listing time changes by at most a few percent, not reproducibly across machines, and ordering costs more than it saves, even when orders are reused across related graphs (Section~\ref{sec:reuse}), so we claim no end-to-end speedup. Table~\ref{tab:novelty} in Appendix~\ref{app:extra} separates prior ingredients from new results.

\section{Related Work}
\paragraph{Ordered listing.}
Triangle finding and listing date back to \citet{itai1978minimum}, with matrix-multiplication and output-sensitive bounds \citep{alon1997cycles,bjorklund2014listing} and approximate and parallel counting methods \citep{tsourakakis2008fast,arifuzzaman2013patric}. Arboricity bounds sparse listing \citep{chiba1985arboricity}, degeneracy orders are computed efficiently \citep{matula1983smallest,batagelj2003core}, and forward and node-iterator variants are compared by \citet{schank2005triangles} and \citet{ortmann2014triangle}. \citet{yu2020aot} make the adaptive intersection explicit and express its running time through the sum in Eq.~\eqref{eq:work}, but give no structural or hardness result for choosing the order. \citet{lecuyer2023tailored} formalize two non-adaptive costs (Proposition~\ref{prop:rules}), prove their minimization NP-hard by reductions from Set Cover and NAE-3SAT, and propose three tailored orders: Split places vertices by decreasing degree alternately at the two ends, Check places them greedily at the front or back, and Neigh is a local search that moves each vertex to its best position among its neighbors. \citet[\S 5.3.3]{lecuyer2023thesis} further discusses relaxations, bounds and approximations for these costs, including the minimization of $\Cpp$ over arbitrary orientations and, attributed there to Makarychev, a connection between $\Cpm$ and Min UnCut that gives an $O(\sqrt{\log n})$ approximation and, under the UGC, rules out a constant-factor approximation for $\Cpm$. To our knowledge, the adaptive cost $\Lam$ is not studied in these works. Data structures such as 2-3 cuckoo filters \citep{eppstein2017cuckoo} reduce the cost per candidate test, whereas the order changes the number of tests. For $k$-cliques, kClist and color-based orders are standard \citep{danisch2018kcliques,li2020kclique}, and memory-locality orders optimize a different objective \citep{wei2016gorder}.
\paragraph{Orientations and concave minimization.}
The set of fractional out-degree vectors is the graphical zonotope, whose vertices correspond to acyclic orientations \citep{stanley1980decompositions,greene1983interpretation,postnikov2009permutohedra} and whose integral points are the out-degree vectors of orientations \citep{hakimi1965degrees}. It is the base polytope of a submodular function, so linear functions over it are minimized by sorting \citep{edmonds1970submodular}, and with this linear minimization oracle, successive linearization, a classical method for concave minimization \citep{pardalos1986concave,horst1996global,mangasarian1997concave}, coincides with a Frank--Wolfe step of full step size \citep{frank1956algorithm,jaggi2013revisiting}. Convex separable costs of out-degrees are minimized over orientations by path reversal \citep{harvey2006semimatchings,borradaile2017egalitarian}, and minimum-entropy orientation is a related concave degree objective \citep{cardinal2008entropy}. What is new is that $\Lam$ is concave over this polytope, which makes the tangent step an exact rounding to an order with certified non-increasing work and removes the need for acyclicity constraints in our exact program.
\paragraph{Continuous relaxations and evaluation.}
Differentiable sorting, ranking and permutation operators relax global permutation structure \citep{mena2018gumbel,grover2019neuralsort,cuturi2019sorting,blondel2020sorting,prillo2020softsort,petersen2021sortnet,struski2025lapsum}, and gradient-based methods have been applied to algorithmic decisions such as join ordering \citep{schwabe2025join}. Closest in spirit, \citet{nerem2024birkhoff} construct differentiable extensions with rounding guarantees over permutations; our rounding instead follows from the concavity of $\Lam$ and applies to any point of the orientation polytope. Pairwise logistic comparisons are classical \citep{bradley1952rank,hunter2004mm,burges2005ranknet}. \citet{deng2023notears} optimize topological orders by discrete swaps, and learned orderings for other exact algorithms \citep{zhao2020graphordering,wang2022rlqvo,langedal2025coloring,min2025clique} optimize different objectives. Because weak baselines inflate reported gains \citep{armstrong2009improvements,melis2018state,lipton2019troubling} and evaluation references can themselves be unreliable \citep{gorgi2026auditing}, we evaluate published orders in both directions, compare against the strongest published local search available to us, and pre-register the held-out analysis \citep{nosek2018preregistration}.

\section{The Work of Ordered Listing}\label{sec:geometry}
We write $d(v)$ for the degree of $v$ and $\mathcal T(G)$ for the set of triangles of $G$; an order orients each edge from its earlier to its later endpoint.

\begin{proposition}[Exact output and work]\label{prop:exact}
For every $\pi$, adaptive forward listing outputs each triangle exactly once and performs exactly $\Lam(\pi)$ candidate tests; hence $\Lam(\pi)\ge|\mathcal T(G)|$.
\end{proposition}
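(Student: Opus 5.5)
The plan is to fix the algorithm precisely and then count. Under $\pi$, orient each edge $\{u,v\}$ from its earlier to its later endpoint, so that every edge becomes an arc $(u,v)$ with $u<_\pi v$. The adaptive forward lister visits each such arc once. It scans the smaller of $N^+_\pi(u)$ and $N^+_\pi(v)$, breaking ties by a fixed rule, and for each scanned $w$ it tests membership of $w$ in the other forward neighborhood, reporting $\{u,v,w\}$ whenever the test succeeds. The proof then has three parts: the work count, correctness, and the lower bound.

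\textbf{Work count.} Each arc is handled exactly once, and handling it scans exactly $\min\{\dout_\pi(u),\dout_\pi(v)\}$ vertices. Each scanned vertex triggers exactly one membership test. Summing over $E$ gives exactly $\Lam(\pi)$ candidate tests. Nothing here depends on the tie-breaking rule, because when the two forward degrees are equal either choice scans the same number of vertices.

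\textbf{Correctness.} Handling arc $(u,v)$ reports $w$ exactly when $w\in N^+_\pi(u)\cap N^+_\pi(v)$. This condition says that $\{u,v,w\}$ is a triangle with $u<_\pi v<_\pi w$. Note that $w$ appears in the scanned set and passes the test whichever neighborhood is scanned, so the reported set is exactly this intersection. Conversely, a triangle $\{a,b,c\}$ with $a<_\pi b<_\pi c$ has a unique earliest pair $(a,b)$, which is an arc of the orientation, and $c$ lies in both forward neighborhoods. So the triangle is reported when arc $(a,b)$ is handled, and it is reported at no other arc. The arc $(a,c)$ fails because $c\notin N^+_\pi(c)$, and the arc $(b,c)$ fails because $a$ is not forward of $b$. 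Hence every triangle is output exactly once.

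\textbf{Lower bound.} Every output corresponds to a distinct successful candidate test, so the number of tests is at least the number of outputs. This gives $\Lam(\pi)\ge|\mathcal T(G)|$.

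\textbf{Expected difficulty.} The only step needing care is pinning down the lister's semantics, namely that exactly one candidate test is counted per scanned vertex and that nothing else is counted. I would state the model of \citet{yu2020aot} explicitly before counting. After that, each step is a direct verification.
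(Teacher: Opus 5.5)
Your proof is correct and follows the paper's argument: the per-edge cost $\min\{\dout_\pi(u),\dout_\pi(v)\}$ summed over edges, each triangle found exactly at the arc joining its two earliest vertices, and every output being a distinct successful test. One small slip: for arc $(a,c)$ the relevant fact is $b\notin N^+_\pi(c)$, not $c\notin N^+_\pi(c)$; your characterization of the reported set (triangles with $u<_\pi v<_\pi w$) already implies uniqueness without checking that case separately.
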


\begin{figure}[t]
\centering
\includegraphics[width=0.55\textwidth,trim=0 0 286 16,clip]{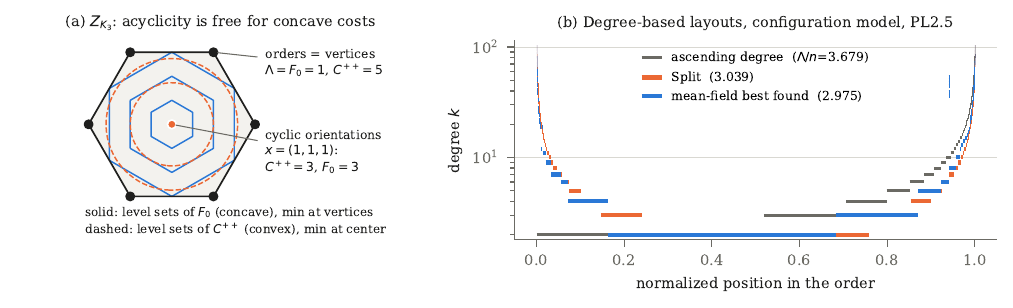}
\caption{Acyclicity is free for concave costs. The orientation polytope $Z_{K_3}$ lies in the plane $x_1+x_2+x_3=3$; its vertices are the forward-degree vectors of the six orders, and its center $(1,1,1)$ is the out-degree vector of both cyclic orientations. Concave costs such as $\Lambda=F_0$ (solid level sets) attain their minimum at vertices (Corollary~\ref{cor:free}); the convex $C^{++}$ (dashed level sets) attains it at the center (Proposition~\ref{prop:rules}).}
\label{fig:geometry}
\end{figure}

\subsection{Degree-space geometry}
Write $e_v$ for the unit vectors of $\mathbb R^V$. The \emph{orientation polytope} is the Minkowski sum
\[
\ZG=\sum_{\{u,v\}\in E}[e_u,e_v].
\]
Each point splits one unit per edge between its endpoints, and the integral points of $\ZG$ are exactly the out-degree vectors of orientations of $G$, acyclic or not \citep{hakimi1965degrees}. The vertices of $\ZG$ are exactly the out-degree vectors of acyclic orientations, i.e., of orders \citep{stanley1980decompositions,greene1983interpretation}, as Figure~\ref{fig:geometry} illustrates for $K_3$. Define
\[
F_0(x)=\sum_{\{u,v\}\in E}\min(x_u,x_v),
\]
a concave function on $\mathbb R^V$ with $\Lam(\pi)=F_0(\dout_\pi)$.

\begin{theorem}[Certified tangent rounding]\label{thm:rounding}
Let $F$ be concave on an open convex set containing $\ZG$, let $x\in\ZG$, and let $a$ be a supergradient of $F$ at $x$. Let $\pi_a$ sort $V$ by ascending $a_v$, breaking ties by vertex label, and set
\[
 \delta(x,a)=a^\top x-\textstyle\sum_{\{u,v\}\in E}\min(a_u,a_v).
\]
Then $\delta(x,a)\ge0$ and $F(\dout_{\pi_a})\le F(x)-\delta(x,a)$. For $F_0$, $\Cpm$ (Proposition~\ref{prop:rules}) and $F_\epsilon$ (Section~\ref{sec:relax}), computing $a$ and $\delta$ takes $O(m+n)$ time and computing $\pi_a$ takes $O(n\log n)$ time.
\end{theorem}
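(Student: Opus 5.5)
The plan is to combine the supergradient inequality with the fact that sorting solves linear minimization over $\ZG$. The key identity is that for any weight vector $a\in\mathbb R^V$,
\[
\min_{y\in\ZG} a^\top y=\sum_{\{u,v\}\in E}\min(a_u,a_v),
\]
and the minimum is attained at $y=\dout_{\pi_a}$. This holds because $\ZG$ is a Minkowski sum, so a linear function is minimized summand by summand. On the segment $[e_u,e_v]$, the minimum of $a^\top y$ is $\min(a_u,a_v)$, attained at the endpoint with the smaller score. Under $\pi_a$, each edge is oriented from the endpoint with the smaller $a$ (ties broken by label) to the later one. That earlier endpoint receives the unit of forward degree, so $a^\top \dout_{\pi_a}=\sum_{\{u,v\}\in E}a_{\text{earlier}}=\sum_{\{u,v\}\in E}\min(a_u,a_v)$. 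Ties cause no problem, since when $a_u=a_v$ either endpoint gives the same value.

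Nonnegativity of $\delta$ follows at once. Since $x\in\ZG$, we have $a^\top x\ge\min_{y\in\ZG}a^\top y=\sum_{\{u,v\}\in E}\min(a_u,a_v)$, so $\delta(x,a)\ge0$. For the improvement bound, apply the supergradient inequality at $x$ to the point $y=\dout_{\pi_a}$, which lies in $\ZG$ and hence in the open domain of $F$:
\[
F(y)\le F(x)+a^\top(y-x)=F(x)-\bigl(a^\top x-a^\top y\bigr)=F(x)-\delta(x,a).
\]
This uses $a^\top y=\sum_{\{u,v\}\in E}\min(a_u,a_v)$ from the previous paragraph. The main subtlety is confirming that $\dout_{\pi_a}$ is exactly the linear minimizer, including the tie convention. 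Everything else is a one-line inequality.

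For the complexity claims, I would exhibit the supergradients explicitly.

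\textbf{$F_0$.} Choose per edge a supergradient of $\min(x_u,x_v)$: put $e_u$ if $x_u<x_v$, $e_v$ if $x_v<x_u$, and either one (say the smaller label) on a tie. Summing over edges gives $a_v=\#\{\text{incident edges where } v \text{ is the selected minimizer}\}$. This takes one pass over the edges, so $O(m+n)$ time.

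\textbf{$\Cpm$ and $F_\epsilon$.} Assuming $\Cpm$ is likewise a sum over edges of concave functions of $(x_u,x_v)$, and $F_\epsilon$ is a smooth separable-per-edge function, each edge contributes a constant-time gradient or supergradient term. Again $a$ is obtained in $O(m+n)$.

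Then $\delta$ is computed by one pass for $a^\top x$ and one pass over the edges for $\sum_{\{u,v\}\in E}\min(a_u,a_v)$, which is $O(m+n)$. Finally, $\pi_a$ is a comparison sort by the key $(a_v,\text{label})$, costing $O(n\log n)$. Since the exact forms of $\Cpm$ and $F_\epsilon$ appear only later, this last step rests on their per-edge structure. The main point is that they are concave and evaluable edge by edge.
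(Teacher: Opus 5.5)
Your proof is correct and follows the paper's argument. It rests on three steps: the edgewise (Minkowski-sum) bound $a^\top x\ge\sum_{\{u,v\}\in E}\min(a_u,a_v)$; the fact that $\dout_{\pi_a}$ attains this bound because each edge's earlier endpoint carries the smaller coefficient; and the supergradient inequality applied at $y=\dout_{\pi_a}$. Your one-sided tie choice for the $F_0$ supergradient is as valid as the paper's split of $\tfrac12$ to each endpoint, and the step you hedged is filled by the explicit gradients: $a_v=d(v)-2x_v$ for $\Cpm$ (vertex-separable rather than per-edge) and the closed-form per-edge derivative of $\smin_\epsilon$ for $F_\epsilon$, each computable in one scan.
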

Each edge contributes to $a^\top x$ at least $\min(a_u,a_v)$, which $\pi_a$ attains, and the supergradient inequality at $\dout_{\pi_a}$ gives the bound (Appendix~\ref{app:proofs}). The step is one Frank--Wolfe iteration with full step size, with the linear minimization over $\ZG$ solved by sorting \citep{edmonds1970submodular}, and $\delta(x,a)$ is its Frank--Wolfe gap \citep{jaggi2013revisiting}. Because $F$ is concave, the gap lower-bounds the improvement rather than the suboptimality.

\begin{corollary}[Acyclicity is free for concave costs]\label{cor:free}
If $F$ is concave on an open convex set containing $\ZG$, then
\[
\min_\pi F(\dout_\pi)=\min_{x\in\ZG}F(x)=\min_{\text{orientations }D}F(\dout_D),
\]
and any orientation or point of $\ZG$ is converted to an order of no larger cost, in $O(m+n\log n)$ time for $F_0$.
\end{corollary}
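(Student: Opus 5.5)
The plan is to derive the corollary directly from Theorem~\ref{thm:rounding} by a chain of inequalities, using only that every relevant set of vectors sits inside $\ZG$. First I record the three inclusions: the forward-degree vectors of orders are vertices of $\ZG$ \citep{stanley1980decompositions,greene1983interpretation}; the out-degree vectors of all orientations are integral points of $\ZG$ \citep{hakimi1965degrees}; and every order is an orientation. From the inclusions,
\[
\min_{x\in\ZG}F(x)\;\le\;\min_{D}F(\dout_D)\;\le\;\min_\pi F(\dout_\pi).
\]
The first minimum exists because $\ZG$ is compact and a concave function on an open convex set is continuous there. The other two are minima over finite sets.

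For the reverse inequality, fix any $x\in\ZG$, in particular a minimizer. Because $F$ is finite and concave on an open set containing $x$, it has a nonempty superdifferential at $x$. This is the standard fact for the convex function $-F$ at an interior point of its domain, and I expect it to be the only point that needs a citation. Choose a supergradient $a$. Theorem~\ref{thm:rounding} then gives $F(\dout_{\pi_a})\le F(x)-\delta(x,a)\le F(x)$, since $\delta(x,a)\ge0$. Hence $\min_\pi F(\dout_\pi)\le \min_{x\in\ZG}F(x)$, and all three quantities coincide.

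The same step gives the conversion claim: apply it to $x=\dout_D$ for an orientation $D$, or to an arbitrary point of $\ZG$. For the running time with $F_0$, first form $x=\dout_D$ in $O(m+n)$ by counting tails. The theorem states that computing a supergradient $a$ takes $O(m+n)$ and the sort takes $O(n\log n)$, for $O(m+n\log n)$ in total. For concreteness, one choice of $a$ is to take, for each edge, the unit vector of an endpoint achieving the minimum, which is a supergradient of that edge's $\min$ term, and to sum these vectors over the edges. Nothing here looks hard; the existence of the supergradient is the only step that leans on outside convex analysis.
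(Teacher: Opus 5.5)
Your proof is correct and follows the paper's own argument. The inclusions of order and orientation degree vectors in $\ZG$ give one chain of inequalities, and Theorem~\ref{thm:rounding} applied at a minimizer, which exists by compactness and continuity, gives the reverse. Two of your details are sound additions: your remark that a supergradient exists at every point of $\ZG$, because $F$ is finite and concave on an open set containing it, fills in a step the paper leaves implicit; and your arbitrary tie-breaking in the $F_0$ supergradient is as valid as the paper's half-weights at ties.
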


For $F_0$ at an integral $x$, a supergradient is $a_v=\sum_{e\ni v}\omega_{e,v}$, where each edge gives weight one to its endpoint with the smaller value of $x$, zero to the other, and one half to each at ties. Iterating $x\leftarrow\dout_{\pi_a(x)}$ while $\Lam$ strictly decreases defines \emph{tangent refinement}~(T), a successive-linearization method \citep{mangasarian1997concave} with an exact integer objective. It terminates because $\Lam$ is a nonnegative integer, but it need not reach a global optimum (Appendix~\ref{app:proofs} gives a $K_{3,3}$ example).

\subsection{Three scan rules, three costs}\label{sec:costs}
The forward kernel handles an oriented edge $u\to v$ by intersecting $\Nout(u)$ with $\Nout(v)$ (we drop the subscript $\pi$); its three natural scan rules give the three cost models in the literature.

\begin{proposition}[Cost models and curvature]\label{prop:rules}
Scanning the smaller set costs $\Lam$; always scanning $\Nout(v)$ costs $\Cpm=\sum_v\dout(v)(d(v)-\dout(v))$; always scanning $\Nout(u)$ costs $\Cpp=\sum_v\dout(v)^2$. As functions of $x=\dout$, $\Lam$ and $\Cpm$ are concave and $\Cpp$ is strictly convex. Consequently Corollary~\ref{cor:free} applies to $\Lam$ and $\Cpm$, while for $\Cpp$ the minimum over all orientations can be strictly smaller than over orders (on $K_3$, 3 versus 5). That orientation minimum is computable in polynomial time by path reversal \citep{harvey2006semimatchings,borradaile2017egalitarian} and is a certified lower bound for orders.
\end{proposition}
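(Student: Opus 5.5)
I would prove Proposition~\ref{prop:rules} in three parts: first the three costs as formulas in $x=\dout$, then their curvature, and finally the $K_3$ separation and the lower-bound remark.

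\textbf{Cost formulas.} For an oriented edge $u\to v$ ($u$ before $v$), scanning the smaller forward set costs $\min(\dout(u),\dout(v))$ tests; summing over $E$ gives $\Lam$, as in Proposition~\ref{prop:exact}. If we always scan $\Nout(v)$, each vertex $v$ is charged $\dout(v)$ once for every in-edge. The in-degree of $v$ is $d(v)-\dout(v)$, so the total is $\Cpm$. If we always scan $\Nout(u)$, each $u$ is charged $\dout(u)$ once for every out-edge, which gives $\sum_u\dout(u)^2=\Cpp$.

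\textbf{Curvature.} $F_0$ is a sum of minima of linear functions, so it is concave. $\Cpm=\sum_v(d(v)x_v-x_v^2)$ is separable with second derivative $-2$ in each coordinate, so it is (strictly) concave on all of $\mathbb R^V$. $\Cpp=\sum_v x_v^2$ has Hessian $2I$, so it is strictly convex. Both $\Lam$ and $\Cpm$ are defined and concave on all of $\mathbb R^V$, which is an open convex set containing $\ZG$, so Corollary~\ref{cor:free} applies to them directly.

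\textbf{$K_3$ example.} Every order of $K_3$ has forward degrees $(2,1,0)$, so $\Cpp=4+1+0=5$. A cyclic orientation has out-degrees $(1,1,1)$, so $\Cpp=3$. This also fits convexity: over the plane $\sum x_v=3$, the sum of squares is minimized at the center.

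\textbf{Lower bound.} Every order induces an orientation, so the minimum over orientations is at most the minimum over orders; this is the certified lower bound. For polynomial-time computability I would cite the known result that convex separable functions of out-degrees are minimized over orientations by repeatedly reversing directed paths from a high out-degree vertex to a low one until none improves \citep{harvey2006semimatchings,borradaile2017egalitarian}.

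The only step needing care is this citation, specifically matching "sum of squares of out-degrees" to the semi-matching / egalitarian-orientation setting. I would make it explicit by modeling each edge as a job assigned to one of its endpoints, with the load of a vertex equal to its out-degree. The rest is routine.
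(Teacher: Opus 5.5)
Your proposal is correct and follows the paper's proof essentially step for step: the same per-edge charging for the three scan rules, the same curvature argument (a sum of minima of linear functions, plus separable concave and strictly convex quadratics), the same $K_3$ computation ($3$ versus $5$), and the same path-reversal citation, with the lower bound holding because orders are orientations. The paper additionally states the optimality criterion explicitly: no directed path from $u$ to $w$ with $\dout(u)\ge\dout(w)+2$. It also notes that Borradaile et al.\ state this criterion for in-degrees, which transfers by reversing all edges.
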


\subsection{Lower bounds, complexity and an exact program}\label{sec:complexity}
\begin{proposition}[Certificates]\label{prop:bounds}
(a) $\Lam(\pi)\ge\Lopt{G}\ge\max\{|\mathcal T(G)|,\beta(G)\}$ for every $\pi$, where $\beta(G)$ is the minimum number of edges whose deletion makes $G$ bipartite; $\Lopt{G}=0$ if and only if $G$ is bipartite.

(b) For integral $x\in\ZG$, $F_0(x)=\sum_{k\ge1}|E(U_k)|$ with $U_k=\{v:x_v\ge k\}$. Hence $\Lopt{G}$ equals the optimum of the mixed-integer program~\eqref{eq:mip}, which has no acyclicity constraints.
\end{proposition}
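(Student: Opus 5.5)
We prove (a) and (b) separately. Part (b) rests on a layer-cake identity, and the exact program then follows from Corollary~\ref{cor:free}.

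For (a), the inequality $\Lam(\pi)\ge\Lopt{G}$ holds by definition. The bound $\Lopt{G}\ge|\mathcal T(G)|$ is Proposition~\ref{prop:exact}.

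For $\Lopt{G}\ge\beta(G)$, we fix an order $\pi$ and let $S=\{v:\dout_\pi(v)=0\}$ be the set of sinks. Every edge with both endpoints outside $S$ contributes at least $1$ to $\Lam(\pi)$. So it suffices to show that the edges inside $V\setminus S$ form a set whose deletion leaves a bipartite graph. This need not hold for a single set $S$, so we use the identity from (b) instead. With $x=\dout_\pi$ and $U_1=V\setminus S$, we get $\Lam(\pi)\ge|E(U_1)|+|E(U_2)|+\cdots$. After deleting $E(U_1)$, every remaining edge meets $S$. The set $S$ is independent, because if two sinks were adjacent, one of them would point to the other. Hence every remaining edge joins $S$ to $V\setminus S$, and the remaining graph is bipartite with sides $S$ and $V\setminus S$. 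This gives $\beta(G)\le|E(U_1)|\le\Lam(\pi)$.

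For the characterization of $\Lopt{G}=0$: if $G$ is bipartite with sides $A$ and $B$, we order all of $B$ before all of $A$. Every edge then points into $A$, so each edge has an endpoint in $A$ with forward degree $0$, and $\Lam=0$. Conversely, $\Lopt{G}=0$ implies $\beta(G)=0$, so $G$ is bipartite.

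For the identity in (b), we take integral $x\ge0$ and write $\min(x_u,x_v)=\sum_{k\ge1}\mathbf 1[x_u\ge k,\,x_v\ge k]$. Summing over edges and exchanging the order of summation gives $F_0(x)=\sum_k|E(U_k)|$.

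To obtain the exact program, we apply Corollary~\ref{cor:free} to the concave function $F_0$. Because the integral points of $\ZG$ are exactly the out-degree vectors of orientations \citep{hakimi1965degrees}, $\Lopt{G}$ equals the minimum of $\sum_k|E(U_k)|$ over orientations, and no acyclicity constraints are needed.

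We then linearize the objective. For each vertex $v$ and level $k\le d(v)$ we introduce a binary variable $z_{v,k}=\mathbf 1[x_v\ge k]$, constrained by $z_{v,k}\ge z_{v,k+1}$ and $x_v=\sum_k z_{v,k}$. For each edge $e=\{u,v\}$ we introduce orientation variables $y_{e,u}+y_{e,v}=1$ with $x_v=\sum_{e\ni v}y_{e,v}$. For each edge and level we add a variable $w_{e,k}\ge z_{u,k}+z_{v,k}-1$ with $w_{e,k}\ge0$. We minimize $\sum_{e,k}w_{e,k}$.

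Because the objective is minimized, each $w_{e,k}$ equals $\max(0,z_{u,k}+z_{v,k}-1)$ at an optimum, so $\sum_k w_{e,k}=\min(x_u,x_v)$ for each edge. The $y$ variables may be continuous: once $x$ is integral, any point of $\ZG$ with integral coordinates is realized by some orientation (Hakimi, or by integrality of the flow polytope). Even if $x$ were fractional, the corollary says that such points do not help.

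The main obstacle is matching the displayed program~\eqref{eq:mip} exactly, and in particular checking that its formulation forces integrality of $x$ through the $z$ variables. If it does not, the argument falls back to concavity: the minimum of $F_0$ over $\ZG$ is attained at a vertex of $\ZG$, which is integral.
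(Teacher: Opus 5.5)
Your proof is correct and follows essentially the same route as the paper. In (a), the sink set $S=\{v:\dout_\pi(v)=0\}$ is independent, so deleting $E(V\setminus S)$ leaves a bipartite graph. In (b), the layer-cake identity is combined with Corollary~\ref{cor:free} and the fact that monotone binaries force $b_{v,k}=\mathbf 1[x_v\ge k]$.

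Two of your remarks are unnecessary. You say the bound ``need not hold for a single set $S$'', but your own argument proves it for exactly that $S$, so the detour through (b) is redundant. The closing hedge is moot because $x_v=\sum_k b_{v,k}$ with binary $b_{v,k}$ already forces integrality. Had it not, concavity would not have rescued the linearized objective anyway.
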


To state the program, name the endpoints of each edge as $e=\{u_e,v_e\}$ in a fixed but arbitrary way and let $\mu_e=\min\{d(u_e),d(v_e)\}$. A continuous split $p_e$ assigns the share $p_e$ of edge $e$ to $u_e$ and $1-p_e$ to $v_e$, binary variables $b_{v,k}$ encode $x_v$ in unary, and $w_{e,k}$ indicates that both endpoints of $e$ lie in the level set $U_k$:
\begin{equation}\label{eq:mip}
\begin{aligned}
\min\quad & \sum_{e\in E}\sum_{k=1}^{\mu_e}w_{e,k}\\
\text{s.t.}\quad & \sum_{e:\,u_e=v}p_e+\sum_{e:\,v_e=v}(1-p_e)=\sum_{k=1}^{d(v)}b_{v,k} && v\in V,\\
& b_{v,k}\ge b_{v,k+1} && v\in V,\ 1\le k<d(v),\\
& w_{e,k}\ge b_{u_e,k}+b_{v_e,k}-1 && e\in E,\ 1\le k\le\mu_e,\\
& p_e\in[0,1],\quad b_{v,k}\in\{0,1\},\quad w_{e,k}\ge0.
\end{aligned}
\end{equation}
For a graph without isolated vertices it has $m$ continuous splits, $2m$ binary variables, $\sum_e\mu_e$ continuous indicators, $n$ equations, $2m-n$ monotonicity constraints and $\sum_e\mu_e$ linking constraints. Because the objective coefficients are positive, every optimal solution sets $w_{e,k}=\max\{0,b_{u_e,k}+b_{v_e,k}-1\}$, and the monotone binaries force $b_{v,k}=\mathbf 1[x_v\ge k]$, which gives the layer identity; the integral point $x$ need not come from an acyclic orientation, and Corollary~\ref{cor:free} shows that this costs nothing. From any incumbent we read off $x_v=\sum_kb_{v,k}$, round it to an order by one tangent step, and recompute the work of that order exactly, so every reported upper bound is the exact integer work of an order. Appendix~\ref{app:proofs} gives the proof and valid triangle and odd-cycle cuts.

To show hardness, we use the \emph{2-subdivision} $S_2(H)$ of a graph $H$, which replaces every edge $uv$ by a path $u\,a\,b\,v$ through two new vertices. It has no triangles, so all of its work is excess work beyond the output. For any graph $H$, $\beta(H)=|E(H)|-\mathrm{MaxCut}(H)$, the minimum number of edges that a cut of $H$ leaves uncut (Min UnCut). Recall that $\Lam_G(\pi)$ is the work of $\pi$ on $G$ and $\Lopt{G}$ its minimum.
\begin{theorem}[Complexity]\label{thm:hard}
For every graph $H$, $\Lopt{S_2(H)}=\beta(H)$, and linear-time maps convert orders of $S_2(H)$ into cuts of $H$ and back without increasing the objective value. Consequently, minimizing $\Lam$ is NP-hard and APX-hard even on triangle-free graphs of maximum degree three, and under the UGC no polynomial-time algorithm approximates $\Lopt{G}$ within any constant factor, even on triangle-free graphs.
\end{theorem}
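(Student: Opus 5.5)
The plan is to prove the identity $\Lopt{S_2(H)}=\beta(H)$ by two explicit linear-time maps, one from cuts to orders and one from orders to cuts. The hardness statements then transfer from Min UnCut. Throughout I use the per-path decomposition of the work. Every edge of $S_2(H)$ lies on exactly one path $u\,a\,b\,v$ replacing an edge $uv$ of $H$, so
\[
\Lam(\pi)=\sum_{uv\in E(H)}\bigl[\min(x_u,x_a)+\min(x_a,x_b)+\min(x_b,x_v)\bigr],\qquad x=\dout_\pi .
\]
Call $w$ a \emph{sink} if $x_w=0$, i.e.\ $w$ comes after all its neighbors. Two adjacent vertices cannot both be sinks, since one is oriented toward the other.

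\textbf{Cuts to orders (upper bound).} Given a cut $(A,B)$ of $H$, I would place all original vertices of $B$ first, then the subdivision vertices in an order fixed per path, then all of $A$ last, so that every vertex of $A$ is a sink. The per-path contributions are as follows.
\begin{itemize}
\item For a cut edge with $u\in B$ and $v\in A$, put $b$ before $a$. Then $x_a=0$ and $x_b=2$, and the path contributes $0$.
\item For an uncut edge inside $A$, $u$ and $v$ are sinks, so only $\min(x_a,x_b)$ can contribute. Putting $a$ before $b$ gives $x_a=2$ and $x_b=1$, hence contribution $1$.
\item For an uncut edge inside $B$, put $a$ before $b$. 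Then $x_a=1$ and $x_b=0$, and the contribution is $\min(x_u,1)\le1$.
\end{itemize}
Hence $\Lam(\pi)\le$ the number of uncut edges, and this map runs in linear time.

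\textbf{Orders to cuts (lower bound).} Given $\pi$, set $A=\{v\in V(H):\dout_\pi(v)=0\}$. Suppose a path $u\,a\,b\,v$ contributes $0$. Then each of its three edges has a sink endpoint, and no two adjacent vertices are sinks. On a path with four vertices, the only such patterns are exactly $\{u,b\}$ or exactly $\{a,v\}$ being sinks. In both cases exactly one of $u,v$ lies in $A$, so $uv$ is cut. Every uncut edge therefore comes from a path of contribution at least $1$, giving uncut edges $\le\Lam(\pi)$. Since sinks are global properties ($x_w=0$ refers to the whole graph), this argument is sound.

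Combining the two directions gives $\Lopt{S_2(H)}=\beta(H)$, with both maps value-preserving in the stated direction. I expect this lower-bound case analysis to be the crux, since it alone forces the cut structure. I would double-check it for vertices of degree one in $H$.

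\textbf{Hardness transfer.} $S_2(H)$ is triangle-free, and it has maximum degree $3$ when $H$ is cubic. For NP- and APX-hardness I would use the gap version of Max-Cut on cubic graphs (Berman--Karpinski). It is hard to distinguish $\mathrm{MaxCut}\ge c\,m$ from $\mathrm{MaxCut}\le s\,m$ with $s<c<1$. This yields a constant gap $\beta\le(1-c)m$ versus $\beta\ge(1-s)m$, which transfers exactly through the two maps.

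For the UGC statement I would use that, under the UGC, Min UnCut has no constant-factor approximation. This follows from the near-perfect regime of Max-Cut in Khot--Kindler--Mossel--O'Donnell, where one cannot distinguish $\beta\le\varepsilon m$ from $\beta\ge\Omega(\sqrt\varepsilon)m$. Since the ratio $\Omega(\sqrt\varepsilon)/\varepsilon$ is unbounded as $\varepsilon\to0$, the exact correspondence rules out any constant factor for $\Lopt{G}$ on triangle-free graphs. A bounded-degree version would need degree reduction, which the statement does not claim.
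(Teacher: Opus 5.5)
Your proof is correct. The cut-to-order map is essentially the paper's, up to exchanging the names of the two sides. The paper 2-colors each path so that the color-0 class $Y$ is independent, and orders the original vertices of the other class $X$ first, then the subdivision vertices of $X$, then $Y$. This gives exactly your per-path costs: 0 on cut paths, and 1 on the edge $ab$ or on the edge $ua$ for the two kinds of uncut paths.

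The genuine difference is the order-to-cut direction. The paper argues globally. With $S$ the sinks and $C=V(S_2(H))\setminus S$, one has $\Lam(\pi)\ge|E(C)|$, and $S_2(H)-E(C)$ is bipartite. Let $F\subseteq E(H)$ collect the edges whose path meets $E(C)$. Every cycle of length $L$ in $H-F$ lifts to a cycle of length $3L$ in that bipartite graph, so $H-F$ is bipartite, and a BFS 2-coloring of it leaves at most $|F|\le|E(C)|\le\Lam(\pi)$ edges of $H$ uncut.

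You instead take the cut directly from the sinks among the original vertices. You use that the only independent vertex covers of $u\,a\,b\,v$ are $\{u,b\}$ and $\{a,v\}$, so every zero-cost path is a cut edge. Hence the number of uncut edges is at most the number of paths of positive cost. Your argument is shorter and purely local, with no parity argument and no BFS. The paper's version instead presents the lower bound as the general certificate $\Lam\ge|E(C)|\ge\beta(G)$ of Proposition~\ref{prop:bounds}(a) combined with $\beta(S_2(H))=\beta(H)$. That is what lets it state that this certificate is attained on 2-subdivisions.

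For the hardness transfer, your gap argument needs only the identity $\Lopt{S_2(H)}=\beta(H)$ and gives NP-hardness of approximation within a fixed constant. The paper instead transfers approximation schemes. It uses the order-to-cut map together with $\beta(H)\le|E(H)|/2\le\mathrm{MaxCut}(H)$ to turn a $(1+\varepsilon)$-approximation of $\beta$ into a $(1-\varepsilon)$-approximation of Max-Cut on cubic graphs. If you want APX-hardness in the reduction-based sense, rather than as inapproximability within some constant, that one line is what to add.
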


\begin{proof}[Proof sketch]
The lower bound follows from Proposition~\ref{prop:bounds}a together with $\beta(S_2(H))=\beta(H)$. For the upper bound, a cut of $H$ colors $S_2(H)$ so that one color class $Y$ is independent and each uncut edge of $H$ leaves exactly one monochromatic edge; placing $Y$ last, with original vertices first, makes exactly these edges cost one. The cut-to-order map preserves the objective exactly, and the order-to-cut map does not increase it. Consequently the optima coincide, and approximation guarantees transfer in the required direction, so the hardness of Max-Cut on cubic graphs \citep{alimonti2000cubic} and of Min UnCut under the UGC \citep{khot2002unique,khot2007optimal} carries over (Appendix~\ref{app:complexity}).
\end{proof}

The NP-hardness reductions of \citet{lecuyer2023tailored} for the non-adaptive costs, from Set Cover and NAE-3SAT, do not give inapproximability. For $\Cpm$, the Min UnCut connection discussed by \citet[\S 5.3.3]{lecuyer2023thesis} does, with the same UGC consequence as Theorem~\ref{thm:hard}. Our contribution is the corresponding result for the adaptive cost $\Lam$, through a different construction: the exact identity $\Lopt{S_2(H)}=\beta(H)$, which also gives APX-hardness on subcubic triangle-free graphs and shows that the bound $\beta$ of Proposition~\ref{prop:bounds}a is attained on 2-subdivisions. Because $\beta(H)=|E(H)|-\mathrm{MaxCut}(H)$ is additive, Max-Cut approximation ratios do not transfer.

\section{Optimizing Orders}\label{sec:opt}
\subsection{Smooth relaxation and its landscape}\label{sec:relax}

Scores $s\in\mathbb R^V$ and a temperature $\tau>0$ define soft edge directions
$p_{uv}=\sigma((s_v-s_u)/\tau)$, where $\sigma$ is the logistic function.
As $\tau\to0$, $p_{uv}$ tends to one when $u$ precedes $v$ in the order
$\pi_s$ that sorts $V$ by ascending score. The soft forward degrees are
$x_u(s)=\sum_{v\in N(u)}p_{uv}$, where $N(u)$ is the set of neighbors of $u$.
To smooth the minimum, we use $\smin_\epsilon(a,b)=\tfrac12\big(a+b-\sqrt{(a-b)^2+\epsilon^2}+\epsilon\big)$, which satisfies $0\le\smin_\epsilon(a,b)-\min(a,b)\le\epsilon/2$. With $F_\epsilon(x)=\sum_{\{u,v\}\in E}\smin_\epsilon(x_u,x_v)$, the relaxed work is $J(s)=F_\epsilon(x(s))$. Both $J$ and its exact gradient
can be evaluated in $O(m+n)$ time, and $J(s)\to\Lam(\pi_s)$ as
$\tau,\epsilon\to0$ for scores that are distinct on every edge
(Appendix~\ref{app:proofs}).

Our gradient-based baseline, \DOVE (differentiable ordering via edges),
minimizes $J$ with Adam \citep{kingma2015adam}, anneals the temperature,
and standardizes the scores after each step. It starts from a given order
and returns the best exact work of $\pi_s$ seen during the run.

\begin{theorem}[Score landscape]\label{thm:landscape}
Let $G$ have at least one edge, fix $\tau>0$, and let $L(s)$ be the weighted Laplacian with edge weights $p_{uv}(1-p_{uv})$.
(i) The map $s\mapsto x(s)$ is onto the relative interior $\relint\ZG$ and is injective modulo adding a constant on each connected component.
(ii) If $F$ is concave and differentiable, every finite stationary point of $J(s)=F(x(s))$ maps to a global \emph{maximizer} of $F$ on $\ZG$.
(iii) With $a=\nabla F(x(s))$ and $\lambda_*(s)$ the smallest nonzero eigenvalue of $L(s)$,
\[
 \delta(x,a)\le\frac{\tau\,\|x(s)-\dout_{\pi_a}\|_2}{\lambda_*(s)}\,\|\nabla J(s)\|_2 .
\]
\end{theorem}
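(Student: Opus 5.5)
The plan is to treat the three parts separately, with everything resting on the chain rule through the map $s\mapsto x(s)$ and its Jacobian. Write $p_{uv}=\sigma((s_v-s_u)/\tau)$, so $p_{uv}+p_{vu}=1$. Since $\sigma'=\sigma(1-\sigma)$, we have
\[
\partial x_u/\partial s_w=\tfrac1\tau\sum_{v\in N(u)}p_{uv}(1-p_{uv})\big(\mathbf 1[w=v]-\mathbf 1[w=u]\big).
\]
So the Jacobian is $Dx(s)=-L(s)/\tau$, which is symmetric. The chain rule then gives the identity used throughout,
\[
\nabla J(s)=-\tfrac1\tau L(s)\,a,\qquad a=\nabla F(x(s)).
\]

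\textbf{Part (i).} Each point $x(s)$ splits every edge strictly between $0$ and $1$, so $x(s)\in\relint\ZG$. The relative interior is characterized by strict inequalities on the submodular constraints for non-trivial subsets within each component. For surjectivity, fix $y\in\relint\ZG$. I would consider the strictly convex (modulo constants per component) potential
\[
\Phi(s)=\sum_{\{u,v\}}\tau\log\!\big(1+e^{(s_v-s_u)/\tau}\big)+\dots-y^\top s,
\]
whose gradient is $x(s)-y$ up to sign convention. Equivalently, this is the dual of maximizing edge-split entropy subject to out-degree $y$. Coercivity of $\Phi$ on the quotient by per-component constants follows from $y$ being in the relative interior: every cut $S$ has $y(S)$ strictly between $|E(S)|$ and $|E(S)|+|\delta(S)|$. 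Hence a minimizer exists, giving $x(s)=y$. Injectivity modulo constants follows from strict convexity of $\Phi$ on that quotient, since its Hessian is $L(s)/\tau$, which is positive definite off the per-component constant vectors.

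\textbf{Part (ii).} At a stationary point, $L(s)a=0$. Because all edge weights $p(1-p)$ are positive, $a$ is constant on each connected component. A linear functional constant on components is constant on $\ZG$, since $\sum_{v\in C}x_v=|E(C)|$ for every component $C$. Therefore the supergradient inequality gives, for every $z\in\ZG$,
\[
F(z)\le F(x)+a^\top(z-x)=F(x).
\]
So $x(s)$ is a global maximizer.

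\textbf{Part (iii).} By Theorem~\ref{thm:rounding}'s proof, $\delta(x,a)=a^\top(x-y)$ with $y=\dout_{\pi_a}$, because $\pi_a$ attains $\sum_e\min(a_u,a_v)=a^\top y$. Both $x,y\in\ZG$ have equal sums on each component, so $x-y$ is orthogonal to $\ker L(s)$. Decompose $a=a_0+a_\perp$ with $a_0\in\ker L$. Then
\[
\delta=a_\perp^\top(x-y)\le\|a_\perp\|\,\|x-y\|\le\|L a\|\,\|x-y\|/\lambda_*=\tau\|\nabla J\|\,\|x-y\|/\lambda_*.
\]

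The main obstacle is the surjectivity in (i): proving coercivity of $\Phi$ from the strict cut inequalities characterizing $\relint\ZG$. I would argue along rays $s=t\,r$ and order vertices by $r$ into level sets. The asymptotic slope is then $\sum_k(\text{gap}_k)\big(|E(\text{upper set})|+\text{crossing}-y(\text{upper set})\big)>0$, up to orientation of the convention.
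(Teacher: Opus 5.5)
Your proposal is correct and follows the paper's proof. The paper also uses a log-partition potential whose gradient is $x(s)$, namely $A(r)=\sum_{uv}\log(e^{r_u}+e^{r_v})$ with $r=-s/\tau$. It likewise uses strict convexity on vectors with zero sum on each component for injectivity, $\nabla J=-\tau^{-1}L(s)a$ for (ii), and the projection of $a$ onto $(\ker L)^\perp$ for (iii).

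The only difference is the coercivity step in (i). The paper gets it in one line from the pointwise bound $\log(e^{r_u}+e^{r_v})\ge\max(r_u,r_v)$, i.e.\ $A\ge h_{\ZG}$. Combined with a ball of radius $c$ around $y$ in the affine hull, this gives $A(r)-r^\top y\ge c\|r\|$ directly and avoids your cut-inequality and recession-slope computation.
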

Part (i) is the exponential-family mean map for edge orientations \citep{wainwright2008graphical,aldous2018scores}. Part (ii) follows from $\nabla J=-L\,a/\tau$: stationarity makes $a$ constant on components, and concavity then gives a global maximum. It holds for any differentiable objective that is concave in the soft degrees, for example $F_\epsilon$, $\Cpm$ and minimum-entropy orientation \citep{cardinal2008entropy} (Appendix~\ref{app:entropy}).

As $F_\epsilon$ is strictly concave on the affine hull of $\ZG$, $J$ has no finite local minimum. Minimizing sequences drive score gaps to infinity, where $\lambda_*\to0$ and the bound (iii) on the gap of the tangent order $\pi_a$ becomes vacuous. A small gradient also says nothing about the score-sorted order $\pi_s$: take scores that are distinct on every edge and sort to an arbitrary, possibly poor order, and scale them by $c\to\infty$. The weights $p_{uv}(1-p_{uv})$ tend to zero, so $\nabla J\to0$, while $\pi_{cs}$ stays the same order and $J(cs)\to F_\epsilon(\dout_{\pi_s})$. The infimum is nevertheless hard to reach: one tangent step turns any scores $s$ into an order with $\Lam\le J(s)$, so for $\epsilon\le1/m$ finding scores within $1/2$ of $\inf J$ is NP-hard, and under the UGC no constant-factor approximation of $\inf J$ is possible on non-bipartite graphs (Corollary~\ref{cor:relaxhard}, Appendix~\ref{app:complexity}). These statements assume that soft degrees are evaluated and compared exactly (a real-RAM model of the relaxation); a finite-precision version would need a representation of scores and a robust rounding rule. Absence of spurious local minima thus does not make the problem tractable; the difficulty lies at infinity.

The theorem fixes $\tau$ and leaves scores unnormalized, whereas \DOVE anneals $\tau$ and standardizes scores, so its weaker empirical performance is a separate finding (Appendices~\ref{app:fixedtau}, \ref{app:mirror} and~\ref{app:budget} give a fixed-temperature diagnostic, a monotone mirror step and budget variations).

\subsection{Tangent refinement with iterated local search}\label{sec:tils}
Our practical method, \TILS (Algorithm~\ref{alg:tils} in Appendix~\ref{app:protocol}), runs tangent refinement, then iterated local search (ILS) \citep{lourenco2003ils} over insertion moves, and finally one additional tangent pass. A move picks a random edge $\{u,v\}$ and reinserts $u$ immediately before or after $v$; re-evaluating only the edges at vertices whose forward degree changes gives the exact change $\Delta$ in work. Plateau moves ($\Delta=0$) are accepted with probability $\rho$, and after $S$ chunks of 256 proposals without improvement the search restarts from the incumbent with $\kappa$ random moves (a \emph{kick}); $(\rho,\kappa,S)=(1,10,50)$ was selected on six synthetic development graphs. The initial tangent phase uses at most 10\% of the time budget, and the final pass is reserved within the budget (Appendix~\ref{app:protocol}). The exact incumbent ensures that \TILS is never worse than its starting order. The geometry supplies the guaranteed rounding and refinement steps, while the ablations show that the plateau moves and kicks account for most of the search improvement (Section~\ref{sec:ablation}). The initial tangent phase turns out to be neutral (Section~\ref{sec:posthoc}); we keep it because it is part of the pre-registered configuration, and the final pass is cheap and never increases work.

\section{Random-Graph Limits}\label{sec:random}
How much can degree information alone achieve, and does optimizing beyond degree help on large graphs?

Let $\mathrm{CM}_n(p)$ be the configuration model \citep{bollobas1980configuration,molloy1995critical,newman2001random,vanderhofstad2016random} with i.i.d.\ degrees $D_1,\dots,D_n$ of law $p$ on $\{1,\dots,K\}$ and mean $\bar d$. Let $q_k=kp_k/\bar d$ be the size-biased degree law seen along a random edge. A \emph{degree-based order} places degree classes (the sets of vertices of equal degree), or fixed fractions of them, as consecutive blocks with uniformly random order inside each block. For example, Split, which places vertices alternately at the two ends, splits each degree class between the two ends. The layout (the sequence of blocks and their fractions) is fixed as $n$ grows, and the assignment of vertices to blocks depends only on degrees and vertex labels, not on the pairing (Appendix~\ref{app:mf} handles loops, multi-edges and odd degree sums). In the limit, the normalized position (rank divided by $n$) of a degree-$k$ vertex is a random variable $\xi_k\in[0,1]$ determined by the layout, and a neighbor of a vertex at position $t$ is later with probability
\[
h(t)=\sum_k q_k\Pr(\xi_k>t).
\]

\begin{theorem}[Mean-field work]\label{thm:mf}
Let $(k,t)$ and $(k',t')$ be i.i.d., with $k\sim q$ and $t\sim\xi_k$ conditional on $k$. Let
$Q_{k,t}(j)=\Pr[\mathrm{Bin}(k-1,h(t))\ge j]$. For every degree-based order sequence $\pi_n$ as above,
\[
\frac{\mathbb E\Lam(\pi_n)}{n}\;\longrightarrow\;
\bar d\sum_{j\ge1}\mathbb E\!\left[
\mathbf 1\{t<t'\}\,
Q_{k,t}(j-1)\,
Q_{k',t'}(j)
\right].
\]
\end{theorem}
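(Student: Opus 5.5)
The plan is to write $\Lam(\pi_n)$ as a sum over edges and use the layer identity: $\min\{a,b\}=\sum_{j\ge1}\mathbf 1\{a\ge j\}\mathbf 1\{b\ge j\}$. Orient each edge from its earlier endpoint $u$ to its later endpoint $v$. Then $v$ is one of the forward neighbours of $u$, so $\dout(u)=1+(\text{forward neighbours of }u\text{ other than }v)$. The forward degree $\dout(v)$ counts only the other neighbours of $v$.

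Hence $\mathbf 1\{\dout(u)\ge j\}=\mathbf 1\{\#\text{other later neighbours of }u\ge j-1\}$. This shift is the source of $Q_{k,t}(j-1)$ for the earlier endpoint versus $Q_{k',t'}(j)$ for the later one.

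Summing over the $\bar d n/2+o(n)$ edges, and symmetrizing over which endpoint comes first, gives the factor $\bar d$ together with the indicator $\mathbf 1\{t<t'\}$. Ties in position have probability zero in the limit, or are handled by the labels. The limit thus reduces to computing, for a uniformly random edge $\{u,v\}$ of $\mathrm{CM}_n(p)$, the limit of
\[
\Pr\bigl[\pi(u)<\pi(v),\ \dout(u)\ge j,\ \dout(v)\ge j\bigr],
\]
and then exchanging the sum over $j$ with the limit. That exchange is justified by dominated convergence, since degrees are bounded by $K$ and so only $j\le K$ contribute.

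For the local computation I would use the standard exploration of the configuration model. Condition on the degree sequence and on the layout. The block assignment depends only on degrees and labels, so it is independent of the uniform pairing. A uniformly random half-edge pairing makes the endpoint degrees of a random edge asymptotically i.i.d.\ with law $q$. Given the degrees, the endpoints' normalized positions converge to independent copies $\xi_k$ and $\xi_{k'}$ by uniform randomization within blocks.

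The remaining $k-1$ half-edges of $u$ are paired with uniformly random other half-edges, up to $O(1/n)$ corrections. Each partner's degree is therefore $q$-distributed and its position is drawn from $\xi$ of that degree, independently. So each such partner is later than $u$, at position $t$, with probability $h(t)+o(1)$. The $k-1$ indicators are asymptotically independent, and the count converges to $\mathrm{Bin}(k-1,h(t))$. The same holds for $v$'s remaining $k'-1$ half-edges, jointly and independently of $u$'s, because the two explorations touch disjoint $O(1)$ sets of half-edges with probability $1-O(1/n)$.

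Loops, multi-edges and the parity correction affect $O(1)$ edges in expectation, each contributing at most $K$ work, so they add only $o(n)$ to the total.

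The main obstacle is making the joint convergence rigorous: positions are not exactly independent of one another, because the ranks within blocks are a random permutation and the class sizes $n_k$ fluctuate. My first attempt would be a coupling. Replace ranks with i.i.d.\ uniform labels inside each block, with block boundaries at the empirical class fractions; these concentrate at their means by the law of large numbers. Then bound the total variation distance between the depth-one neighbourhood of a random edge and the idealized two-level branching structure by $O(1/n)$ plus the deviation of the empirical degree distribution. Bounded degrees make every quantity uniformly integrable, so convergence in probability of the local statistics gives convergence of $\mathbb E\Lam(\pi_n)/n$.
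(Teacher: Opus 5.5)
Your proposal is correct and follows essentially the same route as the paper's proof. Both use a uniformly random edge, local tree-likeness giving i.i.d.\ size-biased degrees and asymptotically independent block positions, forward degrees $1+\mathrm{Bin}(k-1,h(t))$ and $\mathrm{Bin}(k'-1,h(t'))$ for the earlier and later endpoints, symmetrization, and $O(1)$ loops and multi-edges.

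The differences are cosmetic. The paper applies $\mathbb E\min(A,B)=\sum_{j\ge1}\Pr(A\ge j)\Pr(B\ge j)$ to the independent limit variables at the end, not before taking limits. It handles the random edge count by conditioning on the degree sequence, using $m/n\to\bar d/2$ almost surely and bounded convergence. One small correction: deleting a loop or multi-edge also lowers the earlier endpoint's forward degree, so it can change $\Lam$ by up to $2K-1$, not $K$. This is still $O(K)$ per edge, so your conclusion stands.
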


The formula describes a uniformly random edge whose earlier endpoint has forward degree
$1+\mathrm{Bin}(k-1,h(t))$, where the $1$ counts the other endpoint, and whose later endpoint has forward degree $\mathrm{Bin}(k'-1,h(t'))$. Local tree-likeness makes the two asymptotically independent (proof in Appendix~\ref{app:mf}). The limit depends only on the block layout and takes milliseconds to evaluate, so we can search over layouts numerically.

\begin{table}[t]
\centering\small\setlength{\tabcolsep}{3pt}
\caption{Mean-field limit of $\Lambda/n$ for degree-based orders. PL$_\gamma$: $p_k\propto k^{-\gamma}$ on $[2,100]$; Poisson$(c)$: conditioned on $k\ge1$, truncated at $c+12\sqrt{c}+20$. Best: lowest value among Split and the layouts found by local search (not certified optimal); last column: its reduction relative to ascending degree.}
\label{tab:mf}
\begin{tabular}{lrrrrr}
\toprule
Law & Ascending & Descending & Split & Best & vs.\ asc.\\
\midrule
Poisson(4) & 2.415 & 2.615 & 2.071 & 2.071 & 14.3\%\\
Poisson(8) & 9.899 & 10.400 & 8.807 & 8.807 & 11.0\%\\
Poisson(16) & 40.980 & 41.956 & 37.395 & 37.395 & 8.7\%\\
PL2.1 & 9.380 & 14.682 & 8.626 & 8.080 & 13.9\%\\
PL2.5 & 3.679 & 4.928 & 3.039 & 2.975 & 19.1\%\\
PL3.0 & 1.498 & 1.684 & 1.177 & 1.177 & 21.4\%\\
\bottomrule
\end{tabular}
\end{table}

\begin{figure}[t]
\centering
\includegraphics[width=0.85\textwidth,trim=200 0 0 16.5,clip]{fig1.pdf}
\caption{Degree-based layouts in the configuration model with $p_k\propto k^{-2.5}$ on $[2,100]$ (PL2.5): degree of the vertex at each normalized position in the order, with mean-field limits of $\Lambda/n$ (Theorem~\ref{thm:mf}) in the legend. The best layout found is V-shaped, like Split.}
\label{fig:layouts}
\end{figure}

\paragraph{Consequences.} Table~\ref{tab:mf} evaluates the limit. Sampled graphs with $n=10^5$, averaged over graph seeds, agree with it to within 1\%. (1) Ascending degree, the common default, is not optimal: the best layouts found have 8.7--21.4\% less work. (2) The best layouts found are \emph{V-shaped} (Figure~\ref{fig:layouts}), with large degrees at both ends and small degrees in the middle. Split \citep{lecuyer2023tailored} is such a V. It matches the best layout found for the Poisson and PL3.0 laws but is 2--7\% worse for heavier tails. (3) Optimizing beyond degree retains a large gain at the sizes we tested. From the best-found degree-based order at $n=10^5$, T removes another 19.8--28.8\% of the work, \DOVE 22.4--38.0\%, and \TILS, with about five times \DOVE's time, 27.8--57.9\%. The \TILS values change by $-0.2$ to $+0.9$ percentage points from $n=10^4$ (Table~\ref{tab:scaling-full}, Appendix~\ref{app:extra}).

\paragraph{Degree is not enough.} On cubic graphs, where every degree is three, a degree-based order behaves like a uniformly random order, and Theorem~\ref{thm:mf} gives the limit $59/60$ per vertex. The next result makes this exact for finite cubic graphs and bounds the optimum, where $\alpha(G)$ denotes the independence number.

\begin{proposition}[Degree is not enough on cubic graphs]\label{prop:cubic}
Let $G$ be a cubic graph with $n$ vertices and $t$ triangles, and let $\pi$ be a uniformly random order. Then $|\mathbb E\Lam(\pi)-\tfrac{59}{60}n|\le9t$ and $\Lopt{G}\le3n-6\alpha(G)$. Hence, if $G$ is triangle-free, $\mathbb E\Lam(\pi)=\tfrac{59}{60}n$ while $\Lopt{G}\le\tfrac67n$. On the uniformly random cubic graph, every order chosen independently of the edges, in particular every degree-based order, has $\mathbb E\Lam/n\to\tfrac{59}{60}$, whereas $\Lopt{G}\le(15-36\ln\tfrac32+o(1))\,n\approx0.4033\,n$ asymptotically almost surely. Hence $\mathbb E\Lam(\pi)/\mathbb E\Lopt{G}\ge2.438-o(1)$.
\end{proposition}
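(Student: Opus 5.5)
The plan is to handle the three claims separately: an exact local computation for the random order, an explicit construction for the upper bound on $\Lopt{G}$, and a transfer to the random cubic graph via known independence-number bounds.

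\emph{Expected work of a random order.} First assume $G$ is triangle-free, and fix an edge $\{u,v\}$. Since $G$ is cubic and has no triangle, $u$, $v$, the two other neighbors of $u$ and the two other neighbors of $v$ are six distinct vertices. The forward degrees $\dout_\pi(u)$ and $\dout_\pi(v)$ depend only on the relative order of these six vertices, and under a uniformly random $\pi$ that relative order is uniform. So $\mathbb E\min\{\dout_\pi(u),\dout_\pi(v)\}$ is a universal constant $c$, computable by symmetry. Conditioning on which of $u,v$ comes first, that endpoint has forward degree $1+\mathrm{Bin}$-like counts over its two other neighbors, and the other endpoint has the analogous count over its own two. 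This is a finite calculation, and I expect it to give $c=59/90$. Summing over the $3n/2$ edges gives $\tfrac{59}{60}n$, which matches the $K=3$ case of Theorem~\ref{thm:mf}.

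\emph{The error term $9t$.} For a general cubic graph, the six-vertex picture fails only for edges $\{u,v\}$ with a common neighbor, i.e.\ edges lying in a triangle. There are at most $3t$ such edges. For each of them both the true contribution and the ``idealized'' constant $c$ lie in $[0,3]$, so each edge changes the sum by at most $3$. This gives $|\mathbb E\Lam(\pi)-\tfrac{59}{60}n|\le 3\cdot 3t=9t$.

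\emph{Upper bound on $\Lopt{G}$.} Take a maximum independent set $I$ and place it last, after an arbitrary order of $V\setminus I$. Every vertex of $I$ then has forward degree $0$, so edges meeting $I$ cost nothing. The remaining edges lie in $G-I$, and there are $3n/2-3\alpha(G)$ of them. For such an edge with $u$ before $v$, the later endpoint $v$ has $u$ behind it, so $\dout(v)\le 2$. Hence the total is at most $2(3n/2-3\alpha)=3n-6\alpha(G)$.

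\emph{Consequences.} In the triangle-free case, Staton's bound $\alpha\ge 5n/14$ for triangle-free subcubic graphs gives $\Lopt{G}\le \tfrac67 n$. For the uniformly random cubic graph, I would invoke the known a.a.s.\ lower bound $\alpha\ge(6\ln\tfrac32-2-o(1))n$ (Frieze--Suen / Wormald's greedy analysis). Substituting gives $3n-6\alpha\le(15-36\ln\tfrac32+o(1))n$.

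For the expectation of the random order, the triangle count of the random cubic graph has bounded expectation ($\mathbb E t\to 4/3$). If the order is independent of the edges, averaging the first part gives $\mathbb E\Lam/n=\tfrac{59}{60}+O(1/n)$.

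To pass from the a.a.s.\ bound to $\mathbb E\Lopt{G}$, use the trivial deterministic bound $\Lopt{G}\le 9n/2$ on the exceptional event, whose probability is $o(1)$. This gives $\mathbb E\Lopt{G}\le(0.4033+o(1))n$, and dividing yields the ratio $\ge 2.438-o(1)$.

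\emph{Main obstacle.} The main work is the exact enumeration giving $59/90$ per edge; I would do it by conditioning on the positions of $u$ and $v$ and integrating over uniform positions in $[0,1]$, as in Theorem~\ref{thm:mf}. The rest relies only on cited independence-number bounds.
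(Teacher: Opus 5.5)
Your proof is correct. The random-order part, the $9t$ error term and the passage to $\mathbb E\Lopt{G}$ via $\Lopt{G}\le 9n/2$ all follow the paper. The paper gets $59/90$ by enumerating the $720$ relative orders of the six vertices. Your integral is the same computation, and it closes as expected:
\[
\mathbb E\min\{\dout(u),\dout(v)\}=2\int_0^1\!\!\int_0^{t'}\big[(1-t'^2)+(1-t^2)(1-t')^2\big]\,dt\,dt'=2\Big(\tfrac14+\tfrac7{90}\Big)=\tfrac{59}{90}.
\]
Here $u$ sits at $t<t'$, and the two terms are the conditional probabilities that the minimum is at least one and at least two.

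The genuinely different step is $\Lopt{G}\le3n-6\alpha(G)$. The paper uses maximality of $I$ to make $G[C]$, with $C=V\setminus I$, a union of paths and cycles. It orients these alternately, with one defect edge per odd cycle, and bounds each edge of $G[C]$ by the forward degree $3-\deg_C(w)\le2$ of its sink endpoint $w$. You instead observe that the later endpoint of any edge has forward degree at most $2$, so once $I$ is placed last the order on $C$ is irrelevant.

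Your argument is shorter and needs no maximality. It also gives $\Lopt{G}\le(d-1)(|E|-d|I|)$ verbatim for any independent set $I$ of a $d$-regular graph. The paper's structured order instead buys actual work well below the bound, since an edge whose sink has two neighbors in $C$ costs at most one. That structured order is the construction evaluated in Table~\ref{tab:cubic}.

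One line should be added in the random-graph step. Independence of the order from the edges does not by itself let you apply the first part, which concerns a uniformly random order on a fixed graph. You need that the law of the uniform random cubic graph is invariant under relabeling the vertices. Then every fixed order $\pi$ satisfies $\mathbb E_G\Lam_G(\pi)=\mathbb E_G\mathbb E_\sigma\Lam_G(\sigma)$ with $\sigma$ uniform, and orders independent of $G$ follow by conditioning on the order. Averaging $|\mathbb E_\sigma\Lam_G(\sigma)-\tfrac{59}{60}n|\le9t$ with $\mathbb Et\to4/3$ then gives the claim.
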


\emph{Proof idea.} For an edge in no triangle, the forward degrees of its endpoints depend only on the relative order of six distinct vertices, and averaging over the 720 patterns gives $59/90$ per edge. For the bound, place a maximum independent set last and orient the paths and cycles induced by the other vertices alternately, so that each remaining edge costs at most two; the bounds on $\alpha$ are due to \citet{staton1979independence} and \citet{frieze1994independence} (Appendix~\ref{app:complexity}). In the limit, the optimum has over 58\% less work than any degree-based order in expectation, a constant-factor gap. At $n=10^5$, \TILS reaches $\Lam\approx\CubicTILSn\,n$ (Table~\ref{tab:cubic}, Appendix~\ref{app:complexity}).

\section{Experiments}\label{sec:experiments}
\paragraph{Protocol.}
All work values are recomputed from exported orders, and triangle counts are checked to be order-invariant; these checks passed for all 192 development and 54 held-out saved orders. The \emph{reference} $R^\star$ of a graph is the lowest-$\Lam$ order among up to eight candidates, namely Natural, a seeded Random order, ascending Degree, core ordering, smallest-last degeneracy \citep{matula1983smallest}, Split, Check and Neigh. On the four public development graphs used for ablations (Bitcoin-Alpha, P2P-Gnutella05, Loc-Brightkite and GitHub), Neigh is run with five seeds; on the 32 development graphs, we use the stored best of five seeds. Every candidate is evaluated \emph{in both directions}, because Split, Check and Neigh target the reversal-invariant $\Cpm$, while their $\Lam$ values can differ under reversal. Reversal lowers the best forward-direction reference by up to 1.85\%. Neigh minimizes $\Cpm$, not $\Lam$, so the comparisons below show that optimizing $\Lam$ directly improves on the evaluated orders; they do not compare diversified search with an equally strong deterministic local search for $\Lam$ itself.

In the pre-registered held-out run, the Neigh binary was unavailable on the evaluation machine, so the held-out $R^\star$ excludes Neigh. A supplementary comparison declared in advance adds Neigh, giving the stronger reference $R^\star_N$ (Section~\ref{sec:posthoc}). Every optimization method starts from the applicable reference, keeps an exact incumbent, and is charged all of its time. \DOVE runs its default 200-step schedule, and \TILS receives the same wall-clock time. The \emph{reduction} of an order relative to a reference $R$ is $100(\Lam_R-\Lam)/\Lam_R$ in \%. Means over graphs are unweighted means of these per-graph reductions, not reductions of the total work of a pooled workload, so we also report geometric-mean work ratios $\Lam/\Lam_R$. Differences between reductions are reported in percentage points (Appendix~\ref{app:protocol}). Table~\ref{tab:notation} summarizes the references and metrics.

\begin{table}[t]
\centering\small
\caption{Methods, references and metrics used in the experiments.}
\label{tab:notation}
\begin{tabular}{@{}>{\raggedright\arraybackslash}p{0.2\textwidth}>{\raggedright\arraybackslash}p{0.75\textwidth}@{}}
\toprule
Symbol & Meaning\\
\midrule
$\Lam_G(\pi)$, $\Lopt{G}$ & Work (candidate tests) of $\pi$ on $G$, Eq.~\eqref{eq:work}, and its minimum over orders. $\mathcal T(G)$ is the set of triangles.\\
T & Tangent refinement: repeated tangent steps while the work strictly decreases (Section~\ref{sec:geometry}).\\
ILS, \TILS, ILS-T & Iterated local search over insertion moves; \TILS runs T, then ILS, then a final T pass (Section~\ref{sec:tils}); ILS-T omits the initial T phase.\\
\DOVE & Gradient-based relaxation baseline (Section~\ref{sec:relax}).\\
$R^\star$ & Lowest-work order among Natural, Random, Degree, Core, smallest-last, Split, Check and, on the 32 development graphs, stored Neigh orders, each in both directions. Used for the 32 development graphs and, without Neigh, for the pre-registered held-out graphs.\\
$R^\star_N$ & Held-out $R^\star$ plus five Neigh runs in both directions; supplementary analysis (Appendix~\ref{app:neighsupp}).\\
$R^\star_{\rm cheap}$ & $R^\star$ without Neigh, recomputed on each sample of a null-model chain (Section~\ref{sec:reuse}).\\
Reduction & $100(\Lam_R-\Lam)/\Lam_R$ in \% relative to the stated reference $R$; differences in percentage points; means unweighted over graphs.\\
Work ratio & $\Lam/\Lam_R$, aggregated by the geometric mean over graphs.\\
Listing time & Time of the lister given an order (kernel time); ordering time is reported separately and is charged to the method.\\
Gap & Relative bound width of the program, $(\mathrm{UB}-\lceil\mathrm{LB}\rceil)/\mathrm{UB}$ (Section~\ref{sec:exact}).\\
\bottomrule
\end{tabular}
\end{table}

\subsection{Distance to optimality}\label{sec:exact}
We solve the program~\eqref{eq:mip} with HiGHS~1.15.1 \citep{huangfu2018highs}, without warm starts and with a time limit of 60\,s for $n\le32$ and 120\,s for $n>32$, on 49 instances of seven random-graph families with $n\in\{16,\dots,64\}$. The solver runs on one thread with a relative gap tolerance of zero and otherwise default tolerances, and the model includes all the valid cuts of Appendix~\ref{app:proofs}. HiGHS reports a proven optimum on 27 of the 49 instances. We call these optima \emph{solver-certified}: the certificate rests on the solver's floating-point feasibility, integrality and optimality-gap tolerances and was not checked independently. As a check of the formulation, the program matches brute force on all graphs with at least one edge and at most seven vertices (Appendix~\ref{app:extra}).

On the 27 solver-certified instances (Table~\ref{tab:exact}, Appendix~\ref{app:extra}), with 0.2\,s per heuristic run and three seeds, tangent refinement alone is 21\% above the optimum on average, strict insertion (insertion moves accepting only improvements) 10\%, and orientation-space search (edge flips and path reversals, Corollary~\ref{cor:free}) 2\%. Iterated local search reaches the optimum on all 27. The certified instances are the easier ones. On the 22 others, the relative bound width $(\mathrm{UB}-\lceil\mathrm{LB}\rceil)/\mathrm{UB}$, where UB is the work of the best order found and LB the solver's lower bound, is 8--56\% (median 26\%; Table~\ref{tab:uncertified}). It upper-bounds the shortfall from optimality as a fraction of the best order's work, $(\mathrm{UB}-\Lopt{G})/\mathrm{UB}$; it does not estimate the actual shortfall.

\subsection{Public networks}

\paragraph{32 development graphs.}
These SNAP graphs \citep{leskovec2014snap} have up to 1.1M vertices and 2.0M edges and form three groups used at successive stages of development: development, validation and audit (a final check during development). All 32 informed the method; only the graphs below are held out. Table~\ref{tab:dev} in Appendix~\ref{app:extra} lists every graph. On 31 of them, the reference is a stored Neigh order (reversed on 12). \TILS reduces work by \SuiteTIKMean\% on average (95\% bootstrap interval \mbox{\SuiteTIKCI}; improved/tied/regressed \SuiteTIKITR; geometric-mean work ratio 0.941), against \SuiteDoveMean\% for \DOVE and \SuiteTMean\% for T. Against \DOVE, \TILS has wins/ties/losses \SuiteTIKvsDove\ (two-sided sign test, $p\SuiteTIKvsDoveP$), and against T, \SuiteTIKvsT.

\paragraph{Pre-registered held-out graphs.}
Because the graphs above informed development, we fixed ten unused SNAP graphs, together with the methods, budgets, seeds and metrics, before downloading any of them; two share data sources with development graphs (Appendix~\ref{app:prereg}). Facebook page--page failed to download and was reported as missing, not replaced. On the \HeldCount{} graphs completed, \TILS reduces work by \HeldTIKMean\% on average (95\% bootstrap interval \mbox{\HeldTIKCI}; improved/tied/regressed \HeldTIKITR; geometric-mean work ratio 0.847), against \HeldDoveMean\% for \DOVE and \HeldTMean\% for T. Against \DOVE, \TILS has wins/ties/losses \HeldTIKvsDove\ (two-sided sign test, $p=\HeldTIKvsDoveP$), and against T, \HeldTIKvsT\ (Table~\ref{tab:heldout}). Some runs ended slightly after their budget, and our runs did not log timestamped incumbents. A conservative analysis that scores each run by the last saved order known to exist at the deadline retains a mean reduction of 14.17\%, with 8/0/1 wins/ties/losses against \DOVE; this is a rescoring of saved orders, not a rerun with an enforced deadline (Appendix~\ref{app:protocol}).

The pre-registered held-out reference lacks Neigh. In the supplementary rescoring against the stronger reference $R^\star_N$, which was specified before any held-out Neigh order existed but is not part of the original pre-registration, the same \TILS orders reduce work by \NSFrozenTILSMean\% on average (geometric-mean work ratio 0.926) and still improve on all nine networks. When rerun from $R^\star_N$ on a second machine, \TILS reaches \NSReTILSMean\% (geometric-mean work ratio 0.923) and needs fewer candidate tests than \DOVE and T on all nine networks, also when scored conservatively by the orders known to exist at the deadline (\NSReTILSCappedMean\%). One of these graphs reuses the original run, so these paired comparisons mix machines and do not establish hardware-independent superiority (Appendix~\ref{app:neighsupp}).

\begin{table}[t]
\centering\small\setlength{\tabcolsep}{5pt}
\caption{Pre-registered held-out graphs: reference work and reduction (\%) relative to $R^\star$; \TILS: mean$\pm$SD over three seeds. Means are unweighted over graphs.}
\label{tab:heldout}
\begin{tabular}{lrrrrrr}
\toprule
Graph & $n$ & $m$ & $\Lam_{R^\star}$ & T & \DOVE & \TILS\\
\midrule
Email-Eu-core & 986 & 16.1k & 230.1k & 5.19 & 5.81 & 9.52$\pm$0.04 \\
CollegeMsg & 1.9k & 13.8k & 69.0k & 8.17 & 21.05 & 21.29$\pm$0.01 \\
LastFM-Asia & 7.6k & 27.8k & 96.0k & 5.44 & 6.07 & 8.45$\pm$0.05 \\
Deezer-Europe & 28.3k & 92.8k & 225.5k & 5.82 & 7.04 & 9.34$\pm$0.00 \\
\begin{tabular}[t]{@{}l@{}}Facebook\\page--page\end{tabular} & \multicolumn{6}{c}{download failed (excluded)} \\
Amazon0302 & 262.1k & 899.8k & 1.376M & 3.17 & 4.67 & 8.73$\pm$0.00 \\
Wiki-Talk & 2.39M & 4.66M & 57.386M & 11.48 & 10.97 & 13.04$\pm$0.02 \\
com-Youtube & 1.13M & 2.99M & 21.995M & 19.59 & 21.22 & 24.43$\pm$0.00 \\
web-Google & 875.7k & 4.32M & 16.783M & 1.85 & 2.22 & 5.16$\pm$0.00 \\
roadNet-CA & 1.97M & 2.77M & 861.6k & 8.66 & 16.39 & 33.50$\pm$0.09 \\
\midrule
Mean & & & & 7.71 & 10.61 & 14.83 \\
 
\bottomrule
\end{tabular}
\end{table}

\subsection{What drives the gains}\label{sec:ablation}
Table~\ref{tab:ablation} (Appendix~\ref{app:extra}) isolates the components at fixed budgets. Diversification is the main ingredient: plateau moves and kicks add 5.7, 2.4 and 0.5 percentage points over strict insertion on the synthetic development graphs (fresh instances of the tuning families), the synthetic graphs generated after tuning, and the four public development graphs. In this ablation (16 non-bipartite graphs, budgets of 1 and 10\,s), orientation-space search loses to iterated local search on 15 graphs and wins on one by 0.05 points. A final tangent pass cannot increase work and lowers it in 108 of the 123 public-graph runs, by 0.10 percentage points on average.

\subsection{Reusing orders along null-model chains}\label{sec:reuse}
Consecutive samples of the double-edge-swap chains used in motif analysis \citep{milo2002motifs} are related graphs. In a pre-registered study (Appendix~\ref{app:reuse}), we ran such chains from the degree sequences of eight graphs at five spacings that change \RChangedLowS--\RChangedHighS\% of the edges between samples (40 settings), against the reference without Neigh, $R^\star_{\rm cheap}$, recomputed on each sample. Tangent refinement of $R^\star_{\rm cheap}$ (Cheap+T) removes \RCheapTMin--\RCheapTMax\% of its work in all 40 settings. Repairing the previous sample's optimized order by tangent refinement (Reuse+T) beats Cheap+T only at \RChangedLowS\% change per sample (\RReuseTLowS\% vs.\ \RCheapTLowS\%, within \RHThreeGapMax\ percentage points of fresh \TILS at $\ge\RHThreeTimeFactorMin\times$ less ordering time); at \RChangedHighS\% it is worse (\RReuseTHighS\%), so the pre-registered reuse hypothesis failed. Reuse+T never lowers the time per sample, which is \RHFiveRatioMin--\RHFiveRatioMax\ times that of ascending degree: it would pay off only above \RGammaStarMin--\RGammaStarMax\,ns per candidate test, whereas we measure at most \RGammaHashMax\,ns, and at low drift, updating the triangle count incrementally \citep{makkar2017dynamic} is cheaper still when counts rather than listings are needed.

\subsection{Secondary analyses (not pre-registered)}\label{sec:posthoc}
The first three analyses use all \PHAllCount{} public graphs (Appendix~\ref{app:posthoc}), with wins/ties/losses of each variant against \TILS. \emph{Greedy-coloring orders} are far weaker than $R^\star$ (\PHAllColBestMean\% for the best of eight), ILS started from them does not help (\PHAllCILSTBestMean\% vs.\ \PHAllTILSMean\% for \TILS; \PHAllCILSTBestvsTILS), and \emph{ILS without the tangent prefix} reaches \PHAllILSTMean\% (\PHAllILSTvsTILS), so the prefix is neutral.

\emph{Listing time.} With a Numba port of the adaptive lister, the geometric-mean listing-time ratio of \TILS orders to $R^\star$ over the 41 graphs is \PHAllTimeRatioTILS{} (95\% interval \mbox{[0.982, 0.999]}) against a work ratio of \PHAllWorkRatioTILS; the held-out listing-time ratio of \PHHeldTimeRatioTILS{} (\mbox{[0.938, 0.984]}) was not reproduced on a second machine (\NSTimeFrozenGeo), and against $R^\star_N$ it is 1.0102 (\mbox{[0.9954, 1.0245]}); ordering takes seconds, listing milliseconds (Table~\ref{tab:ph-timing}).

\emph{Neigh in the held-out reference.} The supplementary run declared in advance (Appendix~\ref{app:neighsupp}) lowers the held-out reference on \NSRefChanged{} of \NSHeldCount{} graphs, by \NSGapMean\% on average and at most \NSGapMax\%. Rerun from this stronger reference $R^\star_N$ on a second machine, T, \DOVE and \TILS reach \NSReTMean\%, \NSReDoveMean\% and \NSReTILSMean\% (Table~\ref{tab:neighsupp}).

\emph{Other kernels.} For $k$-cliques, tangent refinement speeds up an adaptive lister by 20--27\% on Loc-Brightkite for $k=6$--$8$ (two implementations), but kClist \citep{danisch2018kcliques} remains faster (Appendix~\ref{app:kclique}), so the objective must match the lister's cost model.

\section{Conclusion}
Vertex ordering for adaptive triangle listing is concave minimization over the orientation polytope. This view gives certified rounding, an exact mixed-integer program, and a reason why unconstrained score relaxations at a fixed temperature saturate. Separately, our random-graph analysis shows that degree information alone can lose a constant factor. Since minimizing the work is NP-hard and hard to approximate, we pair certified rounding with diversified discrete search, which reduces work on every public graph tested, including against the stronger Neigh-inclusive reference. In practice, orders designed for reversal-invariant costs should be evaluated in both directions, and discrete heuristics should finish with a tangent pass. For minimizing candidate tests, diversified discrete search outperformed score-based gradient optimization in all our comparisons; for one-off wall-clock listing, the simple degree and degeneracy orders remain the practical choice with our implementations.

\paragraph{Limitations.} The gains are in candidate tests: with our implementations, listing time changes little and not reproducibly across machines, and ordering, even with reuse, costs more than it saves. An optimized order pays for its extra ordering time only after $N=(t^{\rm opt}_{\rm order}-t^{\rm base}_{\rm order})/(t^{\rm base}_{\rm list}-t^{\rm opt}_{\rm list})$ listings, and only if the denominator is positive; on the held-out graphs, the evaluation-machine timings of Table~\ref{tab:ph-timing} give no break-even on Email-Eu-core and roughly $10^4$ to $10^5$ listings on the others, and this listing-time gain was not reproduced on a second machine. Optimality is solver-certified only on small instances, the hardness results are worst-case, Theorem~\ref{thm:mf} assumes bounded degrees and concerns expectations, and the degree gap is proved for cubic graphs only. Generalization rests on nine held-out networks; against the Neigh-inclusive reference, the held-out gain is \NSFrozenTILSMean\% (\NSReTILSMean\% in reruns).

\paragraph{Future directions.} Open questions include approximation algorithms for $\min\Lam$ (Min UnCut admits an $O(\sqrt{\log n})$ approximation \citep{agarwal2005uncut}), concentration and the optimality of V-shaped layouts, degree gaps on heavy-tailed and geometric graphs, such as the adjacency graphs of the Voronoi tessellations formed by bacterial communities \citep{gorgi2026geometric}, annealed relaxations, $k$-cliques and other concave objectives (Appendices~\ref{app:kclique} and~\ref{app:entropy}), and molecular cutoff graphs \citep{pinzon2025threebody}.

\section*{Code and Data Availability}
The code, protocols, run logs and per-graph results of the pre-registered evaluation, the secondary analyses and the Neigh-inclusive held-out run are available at \repo. All graphs are public SNAP datasets \citep{leskovec2014snap}; the repository downloads and preprocesses them rather than redistributing them.

\section*{AI Use Statement}
Generative AI tools were used to assist with manuscript preparation and code debugging. The authors reviewed, verified, and take full responsibility for all content presented in this work.

\bibliographystyle{plainnat}
\bibliography{references}

\begin{thebibliography}{89}
\providecommand{\natexlab}[1]{#1}
\providecommand{\url}[1]{\texttt{#1}}
\expandafter\ifx\csname urlstyle\endcsname\relax
  \providecommand{\doi}[1]{doi: #1}\else
  \providecommand{\doi}{doi: \begingroup \urlstyle{rm}\Url}\fi

\bibitem[Agarwal et~al.(2005)Agarwal, Charikar, Makarychev, and
  Makarychev]{agarwal2005uncut}
Amit Agarwal, Moses Charikar, Konstantin Makarychev, and Yury Makarychev.
\newblock {$O(\sqrt{\log n})$} approximation algorithms for {Min UnCut}, {Min
  2CNF Deletion}, and directed cut problems.
\newblock In \emph{Proceedings of the 37th Annual ACM Symposium on Theory of
  Computing (STOC)}, pages 573--581, 2005.
\newblock \doi{10.1145/1060590.1060675}.

\bibitem[Aldous and Kolesnik(2022)]{aldous2018scores}
David~J. Aldous and Brett Kolesnik.
\newblock To stay discovered: On tournament mean score sequences and the
  {Bradley--Terry} model.
\newblock \emph{Stochastic Processes and their Applications}, 150:\penalty0
  844--852, 2022.
\newblock \doi{10.1016/j.spa.2019.10.011}.

\bibitem[Alimonti and Kann(2000)]{alimonti2000cubic}
Paola Alimonti and Viggo Kann.
\newblock Some {APX}-completeness results for cubic graphs.
\newblock \emph{Theoretical Computer Science}, 237\penalty0 (1--2):\penalty0
  123--134, 2000.
\newblock \doi{10.1016/S0304-3975(98)00158-3}.

\bibitem[Alon et~al.(1997)Alon, Yuster, and Zwick]{alon1997cycles}
Noga Alon, Raphael Yuster, and Uri Zwick.
\newblock Finding and counting given length cycles.
\newblock \emph{Algorithmica}, 17\penalty0 (3):\penalty0 209--223, 1997.
\newblock \doi{10.1007/BF02523189}.

\bibitem[Alon(2007)]{alon2007motifs}
Uri Alon.
\newblock Network motifs: Theory and experimental approaches.
\newblock \emph{Nature Reviews Genetics}, 8\penalty0 (6):\penalty0 450--461,
  2007.
\newblock \doi{10.1038/nrg2102}.

\bibitem[Arifuzzaman et~al.(2013)Arifuzzaman, Khan, and
  Marathe]{arifuzzaman2013patric}
Shaikh Arifuzzaman, Maleq Khan, and Madhav Marathe.
\newblock {PATRIC}: A parallel algorithm for counting triangles in massive
  networks.
\newblock In \emph{Proceedings of the 22nd ACM International Conference on
  Information and Knowledge Management}, pages 529--538, 2013.
\newblock \doi{10.1145/2505515.2505545}.

\bibitem[Armstrong et~al.(2009)Armstrong, Moffat, Webber, and
  Zobel]{armstrong2009improvements}
Timothy~G. Armstrong, Alistair Moffat, William Webber, and Justin Zobel.
\newblock Improvements that don't add up: Ad-hoc retrieval results since 1998.
\newblock In \emph{Proceedings of the 18th ACM Conference on Information and
  Knowledge Management}, pages 601--610, 2009.
\newblock \doi{10.1145/1645953.1646031}.

\bibitem[Batagelj and Zaver{\v{s}}nik(2003)]{batagelj2003core}
Vladimir Batagelj and Matja{\v{z}} Zaver{\v{s}}nik.
\newblock An {$O(m)$} algorithm for cores decomposition of networks.
\newblock arXiv preprint cs/0310049, 2003.

\bibitem[Beck and Teboulle(2003)]{beck2003mirror}
Amir Beck and Marc Teboulle.
\newblock Mirror descent and nonlinear projected subgradient methods for convex
  optimization.
\newblock \emph{Operations Research Letters}, 31\penalty0 (3):\penalty0
  167--175, 2003.
\newblock \doi{10.1016/S0167-6377(02)00231-6}.

\bibitem[Benson et~al.(2016)Benson, Gleich, and Leskovec]{benson2016higher}
Austin~R. Benson, David~F. Gleich, and Jure Leskovec.
\newblock Higher-order organization of complex networks.
\newblock \emph{Science}, 353\penalty0 (6295):\penalty0 163--166, 2016.
\newblock \doi{10.1126/science.aad9029}.

\bibitem[Bj{\"o}rklund et~al.(2014)Bj{\"o}rklund, Pagh, Vassilevska~Williams,
  and Zwick]{bjorklund2014listing}
Andreas Bj{\"o}rklund, Rasmus Pagh, Virginia Vassilevska~Williams, and Uri
  Zwick.
\newblock Listing triangles.
\newblock In \emph{Automata, Languages, and Programming (ICALP)}, volume 8572
  of \emph{Lecture Notes in Computer Science}, pages 223--234. Springer, 2014.
\newblock \doi{10.1007/978-3-662-43948-7_19}.

\bibitem[Blondel et~al.(2020)Blondel, Teboul, Berthet, and
  Djolonga]{blondel2020sorting}
Mathieu Blondel, Olivier Teboul, Quentin Berthet, and Josip Djolonga.
\newblock Fast differentiable sorting and ranking.
\newblock In \emph{Proceedings of the 37th International Conference on Machine
  Learning}, volume 119 of \emph{Proceedings of Machine Learning Research},
  pages 950--959. PMLR, 2020.
\newblock URL \url{https://proceedings.mlr.press/v119/blondel20a.html}.

\bibitem[Bogu{\~n}{\'a} et~al.(2004)Bogu{\~n}{\'a}, Pastor-Satorras, and
  Vespignani]{boguna2004cutoffs}
Mari{\'a}n Bogu{\~n}{\'a}, Romualdo Pastor-Satorras, and Alessandro Vespignani.
\newblock Cut-offs and finite size effects in scale-free networks.
\newblock \emph{The European Physical Journal B}, 38\penalty0 (2):\penalty0
  205--209, 2004.
\newblock \doi{10.1140/epjb/e2004-00038-8}.

\bibitem[Bollob{\'a}s(1980)]{bollobas1980configuration}
B{\'e}la Bollob{\'a}s.
\newblock A probabilistic proof of an asymptotic formula for the number of
  labelled regular graphs.
\newblock \emph{European Journal of Combinatorics}, 1\penalty0 (4):\penalty0
  311--316, 1980.
\newblock \doi{10.1016/S0195-6698(80)80030-8}.

\bibitem[Borradaile et~al.(2017)Borradaile, Iglesias, Migler, Ochoa, Wilfong,
  and Zhang]{borradaile2017egalitarian}
Glencora Borradaile, Jennifer Iglesias, Theresa Migler, Antonio Ochoa, Gordon
  Wilfong, and Lisa Zhang.
\newblock Egalitarian graph orientations.
\newblock \emph{Journal of Graph Algorithms and Applications}, 21\penalty0
  (4):\penalty0 687--708, 2017.
\newblock \doi{10.7155/jgaa.00435}.

\bibitem[Bouritsas et~al.(2023)Bouritsas, Frasca, Zafeiriou, and
  Bronstein]{bouritsas2023gsn}
Giorgos Bouritsas, Fabrizio Frasca, Stefanos Zafeiriou, and Michael~M.
  Bronstein.
\newblock Improving graph neural network expressivity via subgraph isomorphism
  counting.
\newblock \emph{IEEE Transactions on Pattern Analysis and Machine
  Intelligence}, 45\penalty0 (1):\penalty0 657--668, 2023.
\newblock \doi{10.1109/TPAMI.2022.3154319}.

\bibitem[Bradley and Terry(1952)]{bradley1952rank}
Ralph~Allan Bradley and Milton~E. Terry.
\newblock Rank analysis of incomplete block designs: {I}. {T}he method of
  paired comparisons.
\newblock \emph{Biometrika}, 39\penalty0 (3/4):\penalty0 324--345, 1952.
\newblock \doi{10.2307/2334029}.

\bibitem[Burges et~al.(2005)Burges, Shaked, Renshaw, Lazier, Deeds, Hamilton,
  and Hullender]{burges2005ranknet}
Chris Burges, Tal Shaked, Erin Renshaw, Ari Lazier, Matt Deeds, Nicole
  Hamilton, and Greg Hullender.
\newblock Learning to rank using gradient descent.
\newblock In \emph{Proceedings of the 22nd International Conference on Machine
  Learning}, pages 89--96, 2005.
\newblock \doi{10.1145/1102351.1102363}.

\bibitem[Cardinal et~al.(2008)Cardinal, Fiorini, and
  Joret]{cardinal2008entropy}
Jean Cardinal, Samuel Fiorini, and Gwena{\"e}l Joret.
\newblock Minimum entropy orientations.
\newblock \emph{Operations Research Letters}, 36\penalty0 (6):\penalty0
  680--683, 2008.
\newblock \doi{10.1016/j.orl.2008.06.010}.

\bibitem[Chen et~al.(2020)Chen, Chen, Villar, and Bruna]{chen2020substructures}
Zhengdao Chen, Lei Chen, Soledad Villar, and Joan Bruna.
\newblock Can graph neural networks count substructures?
\newblock In \emph{Advances in Neural Information Processing Systems},
  volume~33, pages 10383--10395, 2020.
\newblock URL
  \url{https://proceedings.neurips.cc/paper/2020/hash/75877cb75154206c4e65e76b88a12712-Abstract.html}.

\bibitem[Chiba and Nishizeki(1985)]{chiba1985arboricity}
Norishige Chiba and Takao Nishizeki.
\newblock Arboricity and subgraph listing algorithms.
\newblock \emph{SIAM Journal on Computing}, 14\penalty0 (1):\penalty0 210--223,
  1985.
\newblock \doi{10.1137/0214017}.

\bibitem[Chu and Cheng(2011)]{chu2011triangle}
Shumo Chu and James Cheng.
\newblock Triangle listing in massive networks and its applications.
\newblock In \emph{Proceedings of the 17th ACM SIGKDD International Conference
  on Knowledge Discovery and Data Mining}, pages 672--680, 2011.
\newblock \doi{10.1145/2020408.2020513}.

\bibitem[Cuturi et~al.(2019)Cuturi, Teboul, and Vert]{cuturi2019sorting}
Marco Cuturi, Olivier Teboul, and Jean-Philippe Vert.
\newblock Differentiable ranking and sorting using optimal transport.
\newblock In \emph{Advances in Neural Information Processing Systems},
  volume~32, pages 6861--6871, 2019.

\bibitem[Danisch et~al.(2018)Danisch, Balalau, and Sozio]{danisch2018kcliques}
Maximilien Danisch, Oana Balalau, and Mauro Sozio.
\newblock Listing $k$-cliques in sparse real-world graphs.
\newblock In \emph{Proceedings of the 2018 World Wide Web Conference (WWW)},
  pages 589--598, 2018.
\newblock \doi{10.1145/3178876.3186125}.

\bibitem[Deng et~al.(2023)Deng, Bello, Aragam, and Ravikumar]{deng2023notears}
Chang Deng, Kevin Bello, Bryon Aragam, and Pradeep~Kumar Ravikumar.
\newblock Optimizing {NOTEARS} objectives via topological swaps.
\newblock In \emph{Proceedings of the 40th International Conference on Machine
  Learning}, volume 202 of \emph{Proceedings of Machine Learning Research},
  pages 7563--7595. PMLR, 2023.
\newblock URL \url{https://proceedings.mlr.press/v202/deng23a.html}.

\bibitem[Edmonds(1970)]{edmonds1970submodular}
Jack Edmonds.
\newblock Submodular functions, matroids, and certain polyhedra.
\newblock In \emph{Combinatorial Structures and Their Applications}, pages
  69--87. Gordon and Breach, 1970.

\bibitem[Eppstein et~al.(2017)Eppstein, Goodrich, Mitzenmacher, and
  Torres]{eppstein2017cuckoo}
David Eppstein, Michael~T. Goodrich, Michael Mitzenmacher, and Manuel~R.
  Torres.
\newblock 2-3 cuckoo filters for faster triangle listing and set intersection.
\newblock In \emph{Proceedings of the 36th ACM SIGMOD-SIGACT-SIGAI Symposium on
  Principles of Database Systems (PODS)}, pages 247--260, 2017.
\newblock \doi{10.1145/3034786.3056115}.

\bibitem[Frank and Wolfe(1956)]{frank1956algorithm}
Marguerite Frank and Philip Wolfe.
\newblock An algorithm for quadratic programming.
\newblock \emph{Naval Research Logistics Quarterly}, 3\penalty0
  (1--2):\penalty0 95--110, 1956.
\newblock \doi{10.1002/nav.3800030109}.

\bibitem[Frieze and Suen(1994)]{frieze1994independence}
Alan~M. Frieze and Stephen Suen.
\newblock On the independence number of random cubic graphs.
\newblock \emph{Random Structures \& Algorithms}, 5\penalty0 (5):\penalty0
  649--664, 1994.
\newblock \doi{10.1002/rsa.3240050504}.

\bibitem[Gorgi and Mirsohi(2026)]{gorgi2026auditing}
Melika Gorgi and Kourosh Mirsohi.
\newblock Auditing {Bayesian} graph alignment: Diagnostic comparisons and
  reference failure.
\newblock arXiv preprint arXiv:2609.23232, 2026.

\bibitem[Gorgi et~al.(2026)Gorgi, Kasallis, Trinh, Ortiz~de Ora, Wiles, and
  Siryaporn]{gorgi2026geometric}
Melika Gorgi, Summer~J. Kasallis, Calvin Trinh, Lizett Ortiz~de Ora, Travis~J.
  Wiles, and Albert Siryaporn.
\newblock Geometric ordering in bacterial communities.
\newblock \emph{Proceedings of the National Academy of Sciences}, 123\penalty0
  (20):\penalty0 e2526643123, 2026.
\newblock \doi{10.1073/pnas.2526643123}.

\bibitem[Greene and Zaslavsky(1983)]{greene1983interpretation}
Curtis Greene and Thomas Zaslavsky.
\newblock On the interpretation of {W}hitney numbers through arrangements of
  hyperplanes, zonotopes, non-{R}adon partitions, and orientations of graphs.
\newblock \emph{Transactions of the American Mathematical Society},
  280\penalty0 (1):\penalty0 97--126, 1983.
\newblock \doi{10.1090/S0002-9947-1983-0712251-1}.

\bibitem[Grover et~al.(2019)Grover, Wang, Zweig, and
  Ermon]{grover2019neuralsort}
Aditya Grover, Eric Wang, Aaron Zweig, and Stefano Ermon.
\newblock Stochastic optimization of sorting networks via continuous
  relaxations.
\newblock In \emph{International Conference on Learning Representations}, 2019.
\newblock URL \url{https://arxiv.org/abs/1903.08850}.

\bibitem[Hagberg et~al.(2008)Hagberg, Schult, and Swart]{hagberg2008networkx}
Aric~A. Hagberg, Daniel~A. Schult, and Pieter~J. Swart.
\newblock Exploring network structure, dynamics, and function using {NetworkX}.
\newblock In Ga{\"e}l Varoquaux, Travis Vaught, and Jarrod Millman, editors,
  \emph{Proceedings of the 7th Python in Science Conference (SciPy 2008)},
  pages 11--15, Pasadena, CA, USA, 2008.

\bibitem[Hakimi(1965)]{hakimi1965degrees}
S.~L. Hakimi.
\newblock On the degrees of the vertices of a directed graph.
\newblock \emph{Journal of the Franklin Institute}, 279\penalty0 (4):\penalty0
  290--308, 1965.
\newblock \doi{10.1016/0016-0032(65)90340-6}.

\bibitem[Harvey et~al.(2006)Harvey, Ladner, Lov{\'a}sz, and
  Tamir]{harvey2006semimatchings}
Nicholas J.~A. Harvey, Richard~E. Ladner, L{\'a}szl{\'o} Lov{\'a}sz, and Tami
  Tamir.
\newblock Semi-matchings for bipartite graphs and load balancing.
\newblock \emph{Journal of Algorithms}, 59\penalty0 (1):\penalty0 53--78, 2006.
\newblock \doi{10.1016/j.jalgor.2005.01.003}.

\bibitem[Heckman and Thomas(2001)]{heckman2001independence}
Christopher~Carl Heckman and Robin Thomas.
\newblock A new proof of the independence ratio of triangle-free cubic graphs.
\newblock \emph{Discrete Mathematics}, 233\penalty0 (1--3):\penalty0 233--237,
  2001.
\newblock \doi{10.1016/S0012-365X(00)00242-9}.

\bibitem[Horst and Tuy(1996)]{horst1996global}
Reiner Horst and Hoang Tuy.
\newblock \emph{Global Optimization: Deterministic Approaches}.
\newblock Springer, Berlin, third edition, 1996.
\newblock \doi{10.1007/978-3-662-03199-5}.

\bibitem[Hu et~al.(2013)Hu, Tao, and Chung]{hu2013massive}
Xiaocheng Hu, Yufei Tao, and Chin-Wan Chung.
\newblock Massive graph triangulation.
\newblock In \emph{Proceedings of the 2013 ACM SIGMOD International Conference
  on Management of Data}, pages 325--336, 2013.
\newblock \doi{10.1145/2463676.2463704}.

\bibitem[Huangfu and Hall(2018)]{huangfu2018highs}
Qi~Huangfu and J.~A.~Julian Hall.
\newblock Parallelizing the dual revised simplex method.
\newblock \emph{Mathematical Programming Computation}, 10\penalty0
  (1):\penalty0 119--142, 2018.
\newblock \doi{10.1007/s12532-017-0130-5}.

\bibitem[Hunter(2004)]{hunter2004mm}
David~R. Hunter.
\newblock {MM} algorithms for generalized {Bradley--Terry} models.
\newblock \emph{The Annals of Statistics}, 32\penalty0 (1):\penalty0 384--406,
  2004.
\newblock \doi{10.1214/aos/1079120141}.

\bibitem[Itai and Rodeh(1978)]{itai1978minimum}
Alon Itai and Michael Rodeh.
\newblock Finding a minimum circuit in a graph.
\newblock \emph{SIAM Journal on Computing}, 7\penalty0 (4):\penalty0 413--423,
  1978.
\newblock \doi{10.1137/0207033}.

\bibitem[Jaggi(2013)]{jaggi2013revisiting}
Martin Jaggi.
\newblock Revisiting {F}rank--{W}olfe: Projection-free sparse convex
  optimization.
\newblock In \emph{Proceedings of the 30th International Conference on Machine
  Learning}, volume~28 of \emph{Proceedings of Machine Learning Research},
  pages 427--435. PMLR, 2013.

\bibitem[Khot(2002)]{khot2002unique}
Subhash Khot.
\newblock On the power of unique 2-prover 1-round games.
\newblock In \emph{Proceedings of the 34th Annual ACM Symposium on Theory of
  Computing (STOC)}, pages 767--775, 2002.
\newblock \doi{10.1145/509907.510017}.

\bibitem[Khot et~al.(2007)Khot, Kindler, Mossel, and
  O'Donnell]{khot2007optimal}
Subhash Khot, Guy Kindler, Elchanan Mossel, and Ryan O'Donnell.
\newblock Optimal inapproximability results for {MAX-CUT} and other 2-variable
  {CSPs}?
\newblock \emph{SIAM Journal on Computing}, 37\penalty0 (1):\penalty0 319--357,
  2007.
\newblock \doi{10.1137/S0097539705447372}.

\bibitem[Kingma and Ba(2015)]{kingma2015adam}
Diederik~P. Kingma and Jimmy Ba.
\newblock Adam: A method for stochastic optimization.
\newblock In \emph{International Conference on Learning Representations}, 2015.

\bibitem[Langedal and Manne(2025)]{langedal2025coloring}
Kenneth Langedal and Fredrik Manne.
\newblock Graph neural networks as ordering heuristics for parallel graph
  coloring.
\newblock In \emph{Proceedings of the Symposium on Algorithm Engineering and
  Experiments (ALENEX)}, pages 56--67. SIAM, 2025.
\newblock \doi{10.1137/1.9781611978339.5}.

\bibitem[Latapy(2008)]{latapy2008main}
Matthieu Latapy.
\newblock Main-memory triangle computations for very large (sparse (power-law))
  graphs.
\newblock \emph{Theoretical Computer Science}, 407\penalty0 (1--3):\penalty0
  458--473, 2008.
\newblock \doi{10.1016/j.tcs.2008.07.017}.

\bibitem[L{\'e}cuyer(2023)]{lecuyer2023thesis}
Fabrice L{\'e}cuyer.
\newblock \emph{Ordering nodes to scale to large real-world networks}.
\newblock PhD thesis, Sorbonne Universit{\'e}, 2023.
\newblock URL
  \url{https://fabrice.lecuyer.me/public/pdf/Lecuyer_2023_Thesis.pdf}.

\bibitem[L{\'e}cuyer et~al.(2023)L{\'e}cuyer, Jachiet, Magnien, and
  Tabourier]{lecuyer2023tailored}
Fabrice L{\'e}cuyer, Louis Jachiet, Cl{\'e}mence Magnien, and Lionel Tabourier.
\newblock Tailored vertex ordering for faster triangle listing in large graphs.
\newblock In \emph{Proceedings of the Symposium on Algorithm Engineering and
  Experiments (ALENEX)}, pages 77--85. SIAM, 2023.
\newblock \doi{10.1137/1.9781611977561.ch7}.

\bibitem[Leskovec and Krevl(2014)]{leskovec2014snap}
Jure Leskovec and Andrej Krevl.
\newblock {SNAP} datasets: {Stanford} large network dataset collection.
\newblock \url{https://snap.stanford.edu/data}, 2014.

\bibitem[Li et~al.(2020)Li, Gao, Qin, Wang, Yang, and Yu]{li2020kclique}
Rong-Hua Li, Sen Gao, Lu~Qin, Guoren Wang, Weihua Yang, and Jeffrey~Xu Yu.
\newblock Ordering heuristics for $k$-clique listing.
\newblock \emph{Proceedings of the VLDB Endowment}, 13\penalty0 (11):\penalty0
  2536--2548, 2020.
\newblock \doi{10.14778/3407790.3407843}.

\bibitem[Lipton and Steinhardt(2019)]{lipton2019troubling}
Zachary~C. Lipton and Jacob Steinhardt.
\newblock Troubling trends in machine learning scholarship.
\newblock \emph{Queue}, 17\penalty0 (1):\penalty0 45--77, 2019.
\newblock \doi{10.1145/3317287.3328534}.

\bibitem[Louren{\c{c}}o et~al.(2003)Louren{\c{c}}o, Martin, and
  St{\"u}tzle]{lourenco2003ils}
Helena~R. Louren{\c{c}}o, Olivier~C. Martin, and Thomas St{\"u}tzle.
\newblock Iterated local search.
\newblock In Fred Glover and Gary~A. Kochenberger, editors, \emph{Handbook of
  Metaheuristics}, volume~57 of \emph{International Series in Operations
  Research \& Management Science}, pages 320--353. Kluwer Academic Publishers,
  2003.
\newblock \doi{10.1007/0-306-48056-5_11}.

\bibitem[Makkar et~al.(2017)Makkar, Bader, and Green]{makkar2017dynamic}
Devavret Makkar, David~A. Bader, and Oded Green.
\newblock Exact and parallel triangle counting in dynamic graphs.
\newblock In \emph{Proceedings of the 24th IEEE International Conference on
  High Performance Computing (HiPC)}, pages 2--12. IEEE, 2017.
\newblock \doi{10.1109/HiPC.2017.00011}.

\bibitem[Mangasarian(1997)]{mangasarian1997concave}
O.~L. Mangasarian.
\newblock Solution of general linear complementarity problems via
  nondifferentiable concave minimization.
\newblock \emph{Acta Mathematica Vietnamica}, 22\penalty0 (1):\penalty0
  199--205, 1997.
\newblock URL \url{https://math.ac.vn/public/uploads/files/9701199.pdf}.

\bibitem[Matula and Beck(1983)]{matula1983smallest}
David~W. Matula and Leland~L. Beck.
\newblock Smallest-last ordering and clustering and graph coloring algorithms.
\newblock \emph{Journal of the ACM}, 30\penalty0 (3):\penalty0 417--427, 1983.
\newblock \doi{10.1145/2402.322385}.

\bibitem[Melis et~al.(2018)Melis, Dyer, and Blunsom]{melis2018state}
G{\'a}bor Melis, Chris Dyer, and Phil Blunsom.
\newblock On the state of the art of evaluation in neural language models.
\newblock In \emph{International Conference on Learning Representations}, 2018.

\bibitem[Mena et~al.(2018)Mena, Belanger, Linderman, and Snoek]{mena2018gumbel}
Gonzalo Mena, David Belanger, Scott Linderman, and Jasper Snoek.
\newblock Learning latent permutations with {Gumbel-Sinkhorn} networks.
\newblock In \emph{International Conference on Learning Representations}, 2018.

\bibitem[Milo et~al.(2002)Milo, Shen-Orr, Itzkovitz, Kashtan, Chklovskii, and
  Alon]{milo2002motifs}
Ron Milo, Shai Shen-Orr, Shalev Itzkovitz, Nadav Kashtan, Dmitri Chklovskii,
  and Uri Alon.
\newblock Network motifs: Simple building blocks of complex networks.
\newblock \emph{Science}, 298\penalty0 (5594):\penalty0 824--827, 2002.
\newblock \doi{10.1126/science.298.5594.824}.

\bibitem[Min and Gomes(2025)]{min2025clique}
Yimeng Min and Carla~P. Gomes.
\newblock Unsupervised ordering for maximum clique.
\newblock arXiv preprint arXiv:2503.21814, 2025.

\bibitem[Molloy and Reed(1995)]{molloy1995critical}
Michael Molloy and Bruce Reed.
\newblock A critical point for random graphs with a given degree sequence.
\newblock \emph{Random Structures \& Algorithms}, 6\penalty0 (2--3):\penalty0
  161--180, 1995.
\newblock \doi{10.1002/rsa.3240060204}.

\bibitem[Nerem et~al.(2025)Nerem, Luo, Rafiey, and Wang]{nerem2024birkhoff}
Robert~R. Nerem, Zhishang Luo, Akbar Rafiey, and Yusu Wang.
\newblock Differentiable extensions with rounding guarantees for combinatorial
  optimization over permutations.
\newblock In \emph{Advances in Neural Information Processing Systems},
  volume~38, pages 117299--117338, 2025.
\newblock \doi{10.52202/085713-3535}.

\bibitem[Newman et~al.(2001)Newman, Strogatz, and Watts]{newman2001random}
Mark E.~J. Newman, Steven~H. Strogatz, and Duncan~J. Watts.
\newblock Random graphs with arbitrary degree distributions and their
  applications.
\newblock \emph{Physical Review E}, 64\penalty0 (2):\penalty0 026118, 2001.
\newblock \doi{10.1103/PhysRevE.64.026118}.

\bibitem[Nosek et~al.(2018)Nosek, Ebersole, DeHaven, and
  Mellor]{nosek2018preregistration}
Brian~A. Nosek, Charles~R. Ebersole, Alexander~C. DeHaven, and David~T. Mellor.
\newblock The preregistration revolution.
\newblock \emph{Proceedings of the National Academy of Sciences}, 115\penalty0
  (11):\penalty0 2600--2606, 2018.
\newblock \doi{10.1073/pnas.1708274114}.

\bibitem[Ortmann and Brandes(2014)]{ortmann2014triangle}
Mark Ortmann and Ulrik Brandes.
\newblock Triangle listing algorithms: Back from the diversion.
\newblock In \emph{Proceedings of the Sixteenth Workshop on Algorithm
  Engineering and Experiments (ALENEX)}, pages 1--8. SIAM, 2014.
\newblock \doi{10.1137/1.9781611973198.1}.

\bibitem[Pardalos and Rosen(1986)]{pardalos1986concave}
Panos~M. Pardalos and J.~Ben Rosen.
\newblock Methods for global concave minimization: A bibliographic survey.
\newblock \emph{SIAM Review}, 28\penalty0 (3):\penalty0 367--379, 1986.
\newblock \doi{10.1137/1028106}.

\bibitem[Petersen et~al.(2021)Petersen, Borgelt, Kuehne, and
  Deussen]{petersen2021sortnet}
Felix Petersen, Christian Borgelt, Hilde Kuehne, and Oliver Deussen.
\newblock Differentiable sorting networks for scalable sorting and ranking
  supervision.
\newblock In \emph{Proceedings of the 38th International Conference on Machine
  Learning}, volume 139 of \emph{Proceedings of Machine Learning Research},
  pages 8546--8555. PMLR, 2021.
\newblock URL \url{https://proceedings.mlr.press/v139/petersen21a.html}.

\bibitem[Pinz{\'o}n~Escobar et~al.(2025)Pinz{\'o}n~Escobar,
  M{\"u}hlh{\"a}u{\ss}er, Bungartz, and Neumann]{pinzon2025threebody}
Jos{\'e}~Alfonso Pinz{\'o}n~Escobar, Markus M{\"u}hlh{\"a}u{\ss}er,
  Hans-Joachim Bungartz, and Philipp Neumann.
\newblock Linked cell traversal algorithms for three-body interactions in
  molecular dynamics.
\newblock arXiv preprint arXiv:2510.21230, 2025.

\bibitem[Postnikov(2009)]{postnikov2009permutohedra}
Alexander Postnikov.
\newblock Permutohedra, associahedra, and beyond.
\newblock \emph{International Mathematics Research Notices}, 2009\penalty0
  (6):\penalty0 1026--1106, 2009.
\newblock \doi{10.1093/imrn/rnn153}.

\bibitem[Prillo and Eisenschlos(2020)]{prillo2020softsort}
Sebastian Prillo and Julian~Martin Eisenschlos.
\newblock {SoftSort}: A continuous relaxation for the argsort operator.
\newblock In \emph{Proceedings of the 37th International Conference on Machine
  Learning}, volume 119 of \emph{Proceedings of Machine Learning Research},
  pages 7793--7802. PMLR, 2020.
\newblock URL \url{https://proceedings.mlr.press/v119/prillo20a.html}.

\bibitem[Pr{\v{z}}ulj(2007)]{przulj2007graphlets}
Nata{\v{s}}a Pr{\v{z}}ulj.
\newblock Biological network comparison using graphlet degree distribution.
\newblock \emph{Bioinformatics}, 23\penalty0 (2):\penalty0 e177--e183, 2007.
\newblock \doi{10.1093/bioinformatics/btl301}.

\bibitem[Schank and Wagner(2005)]{schank2005triangles}
Thomas Schank and Dorothea Wagner.
\newblock Finding, counting and listing all triangles in large graphs, an
  experimental study.
\newblock In \emph{Experimental and Efficient Algorithms (WEA)}, volume 3503 of
  \emph{Lecture Notes in Computer Science}, pages 606--609. Springer, 2005.
\newblock \doi{10.1007/11427186_54}.

\bibitem[Schwabe and Acosta(2026)]{schwabe2025join}
Tim Schwabe and Maribel Acosta.
\newblock Gradient-based join ordering.
\newblock In \emph{Proceedings of the Thirty-Fifth International Joint
  Conference on Artificial Intelligence}, pages 2346--2354, 2026.
\newblock \doi{10.24963/ijcai.2026/261}.
\newblock URL \url{https://www.ijcai.org/proceedings/2026/261}.

\bibitem[Shen-Orr et~al.(2002)Shen-Orr, Milo, Mangan, and
  Alon]{shenorr2002motifs}
Shai~S. Shen-Orr, Ron Milo, Shmoolik Mangan, and Uri Alon.
\newblock Network motifs in the transcriptional regulation network of
  {\emph{escherichia coli}}.
\newblock \emph{Nature Genetics}, 31\penalty0 (1):\penalty0 64--68, 2002.
\newblock \doi{10.1038/ng881}.

\bibitem[Shun and Tangwongsan(2015)]{shun2015multicore}
Julian Shun and Kanat Tangwongsan.
\newblock Multicore triangle computations without tuning.
\newblock In \emph{Proceedings of the 31st IEEE International Conference on
  Data Engineering}, pages 149--160, 2015.
\newblock \doi{10.1109/ICDE.2015.7113280}.

\bibitem[Sporns and K{\"o}tter(2004)]{sporns2004motifs}
Olaf Sporns and Rolf K{\"o}tter.
\newblock Motifs in brain networks.
\newblock \emph{PLoS Biology}, 2\penalty0 (11):\penalty0 e369, 2004.
\newblock \doi{10.1371/journal.pbio.0020369}.

\bibitem[Stanley(1980)]{stanley1980decompositions}
Richard~P. Stanley.
\newblock Decompositions of rational convex polytopes.
\newblock In \emph{Combinatorial Mathematics, Optimal Designs and Their
  Applications}, volume~6 of \emph{Annals of Discrete Mathematics}, pages
  333--342. North-Holland, 1980.
\newblock \doi{10.1016/S0167-5060(08)70717-9}.

\bibitem[Staton(1979)]{staton1979independence}
William Staton.
\newblock Some {R}amsey-type numbers and the independence ratio.
\newblock \emph{Transactions of the American Mathematical Society},
  256:\penalty0 353--370, 1979.
\newblock \doi{10.1090/S0002-9947-1979-0546922-6}.

\bibitem[Struski et~al.(2025)Struski, Bednarczyk, Podolak, and
  Tabor]{struski2025lapsum}
{\L}ukasz Struski, Micha{\l}~B. Bednarczyk, Igor~T. Podolak, and Jacek Tabor.
\newblock {LapSum} -- one method to differentiate them all: Ranking, sorting
  and top-$k$ selection.
\newblock In \emph{Proceedings of the 42nd International Conference on Machine
  Learning}, volume 267 of \emph{Proceedings of Machine Learning Research},
  pages 56990--57007. PMLR, 2025.
\newblock URL \url{https://proceedings.mlr.press/v267/struski25a.html}.

\bibitem[Suri and Vassilvitskii(2011)]{suri2011counting}
Siddharth Suri and Sergei Vassilvitskii.
\newblock Counting triangles and the curse of the last reducer.
\newblock In \emph{Proceedings of the 20th International Conference on World
  Wide Web}, pages 607--614, 2011.
\newblock \doi{10.1145/1963405.1963491}.

\bibitem[Tsourakakis(2008)]{tsourakakis2008fast}
Charalampos~E. Tsourakakis.
\newblock Fast counting of triangles in large real networks without counting:
  Algorithms and laws.
\newblock In \emph{Proceedings of the 8th IEEE International Conference on Data
  Mining}, pages 608--617, 2008.
\newblock \doi{10.1109/ICDM.2008.72}.

\bibitem[van~der Hofstad(2016)]{vanderhofstad2016random}
Remco van~der Hofstad.
\newblock \emph{Random Graphs and Complex Networks, Volume 1}.
\newblock Number~43 in Cambridge Series in Statistical and Probabilistic
  Mathematics. Cambridge University Press, 2016.
\newblock \doi{10.1017/9781316779422}.

\bibitem[Wainwright and Jordan(2008)]{wainwright2008graphical}
Martin~J. Wainwright and Michael~I. Jordan.
\newblock Graphical models, exponential families, and variational inference.
\newblock \emph{Foundations and Trends in Machine Learning}, 1\penalty0
  (1--2):\penalty0 1--305, 2008.
\newblock \doi{10.1561/2200000001}.

\bibitem[Wang et~al.(2022)Wang, Zhang, Qin, Wang, Zhang, and
  Lin]{wang2022rlqvo}
Hanchen Wang, Ying Zhang, Lu~Qin, Wei Wang, Wenjie Zhang, and Xuemin Lin.
\newblock Reinforcement learning based query vertex ordering model for subgraph
  matching.
\newblock In \emph{Proceedings of the 38th IEEE International Conference on
  Data Engineering (ICDE)}, pages 245--258, 2022.
\newblock \doi{10.1109/ICDE53745.2022.00023}.

\bibitem[Watts and Strogatz(1998)]{watts1998smallworld}
Duncan~J. Watts and Steven~H. Strogatz.
\newblock Collective dynamics of `small-world' networks.
\newblock \emph{Nature}, 393\penalty0 (6684):\penalty0 440--442, 1998.
\newblock \doi{10.1038/30918}.

\bibitem[Wei et~al.(2016)Wei, Yu, Lu, and Lin]{wei2016gorder}
Hao Wei, Jeffrey~Xu Yu, Can Lu, and Xuemin Lin.
\newblock Speedup graph processing by graph ordering.
\newblock In \emph{Proceedings of the 2016 International Conference on
  Management of Data (SIGMOD)}, pages 1813--1828, 2016.
\newblock \doi{10.1145/2882903.2915220}.

\bibitem[Yu et~al.(2020)Yu, Qin, Zhang, Zhang, and Lin]{yu2020aot}
Michael Yu, Lu~Qin, Ying Zhang, Wenjie Zhang, and Xuemin Lin.
\newblock {AOT}: Pushing the efficiency boundary of main-memory triangle
  listing.
\newblock In \emph{Database Systems for Advanced Applications}, volume 12113 of
  \emph{Lecture Notes in Computer Science}, pages 516--533. Springer, 2020.
\newblock \doi{10.1007/978-3-030-59416-9_31}.

\bibitem[Zhao et~al.(2021)Zhao, Rong, Yu, Huang, Huang, and
  Zhang]{zhao2020graphordering}
Kangfei Zhao, Yu~Rong, Jeffrey~Xu Yu, Wenbing Huang, Junzhou Huang, and Hao
  Zhang.
\newblock Graph ordering: Towards the optimal by learning.
\newblock In \emph{Web Information Systems Engineering -- WISE 2021}, volume
  13080 of \emph{Lecture Notes in Computer Science}, pages 423--437. Springer,
  2021.
\newblock \doi{10.1007/978-3-030-90888-1_33}.

\end{thebibliography}

\clearpage
\appendix
\section*{Contents of the Appendices}
\begingroup\small
\begin{tabular}{@{}r@{\quad}l@{}}
\ref{app:proofs} & Proofs\\
\ref{app:complexity} & Complexity, cubic graphs and stability\\
\ref{app:mirror} & A monotone mirror step\\
\ref{app:mf} & Mean-field limit\\
\ref{app:kclique} & $k$-clique listing\\
\ref{app:protocol} & Methods and protocol, including Algorithm~\ref{alg:tils} and the budget accounting\\
\ref{app:prereg} & Pre-registered held-out evaluation\\
\ref{app:fixedtau} & Fixed-temperature landscape diagnostic\\
\ref{app:posthoc} & Secondary analyses on public graphs (not pre-registered)\\
\ref{app:neighsupp} & Held-out reference with Neigh (supplementary analysis)\\
\ref{app:budget} & Budget curve\\
\ref{app:entropy} & A second concave objective: minimum-entropy orientation\\
\ref{app:nullens} & Degree-preserving null ensembles\\
\ref{app:reuse} & Reusing orders along a null-model chain\\
\ref{app:extra} & Additional tables, including the 32 development graphs (Table~\ref{tab:dev})\\
\end{tabular}
\endgroup

\section{Proofs}\label{app:proofs}

\paragraph{Proposition~\ref{prop:exact}.}
Fix a triangle $\{a,b,c\}$ with $\pi(a)<\pi(b)<\pi(c)$. Then $b,c\in\Nout(a)$ and $c\in\Nout(b)$, so the intersection for edge $ab$ finds $c$. For edge $ac$ the third vertex $b$ is not in $\Nout(c)$, and for edge $bc$ the vertex $a$ is in neither forward set; hence each triangle is found exactly once. On edge $\{u,v\}$ the kernel scans the smaller forward set and tests each element for membership in the other, which costs $\min\{\dout(u),\dout(v)\}$ tests. Summing over edges gives Eq.~\eqref{eq:work}, and every output is a successful test.

\paragraph{Theorem~\ref{thm:rounding}.}
Write $x=\sum_{e=\{u,v\}}[p_ee_u+(1-p_e)e_v]$ with $p_e\in[0,1]$. Then $a^\top x=\sum_e[p_ea_u+(1-p_e)a_v]\ge\sum_e\min(a_u,a_v)$, so $\delta\ge0$. Under $\pi_a$ the earlier endpoint of each edge has the smaller coefficient (ties give equal coefficients), so $a^\top\dout_{\pi_a}=\sum_e\min(a_u,a_v)$. Concavity gives $F(y)\le F(x)+a^\top(y-x)$ for all $y$ in the domain. With $y=\dout_{\pi_a}$ this becomes $F(\dout_{\pi_a})\le F(x)-\delta(x,a)$. For $F_0$, a supergradient at $x$ is $a_v=\sum_{e\ni v}\omega_{e,v}$ with $\omega_{e,v}=1$ if $x_v<x_w$ for the other endpoint $w$, $0$ if $x_v>x_w$, and $\tfrac12$ at ties. For $F_\epsilon$ the gradient is $a_v=\sum_{e\ni v}\tfrac12(1-(x_v-x_w)/\sqrt{(x_v-x_w)^2+\epsilon^2})$. For $\Cpm$, $a_v=d(v)-2x_v$. Each costs one edge scan, and $\pi_a$ needs one sort.

\paragraph{Corollary~\ref{cor:free}.}
Order degree vectors are orientation out-degree vectors, which lie in $\ZG$. This gives $\min_{\ZG}F\le\min_DF\le\min_\pi F$. Conversely, Theorem~\ref{thm:rounding} maps every $x\in\ZG$ to an order with $F(\dout_{\pi_a})\le F(x)$. Apply it at a minimizer, which exists by compactness and continuity.

\paragraph{Proposition~\ref{prop:rules}.}
For $u\to v$ the third vertex of any triangle lies in $\Nout(u)\cap\Nout(v)$. Scanning $\Nout(v)$ costs $\dout(v)$ per incoming edge of $v$, hence $\sum_v d^-(v)\dout(v)=\sum_v\dout(v)(d(v)-\dout(v))$. Scanning $\Nout(u)$ costs $\dout(u)$ per outgoing edge of $u$, hence $\sum_u\dout(u)^2$. Taking the minimum per edge gives $\Lam$, which is concave as a sum of minima of linear functions. $\Cpm$ is a sum of concave quadratics, and $\Cpp$ is a sum of strictly convex quadratics. On $K_3$ the cyclic orientation has out-degrees $(1,1,1)$ and $\Cpp=3$, while every order gives $(2,1,0)$ and $\Cpp=5$. By \citet{borradaile2017egalitarian} (stated there for in-degrees) and \citet{harvey2006semimatchings}, an orientation minimizes $\sum_vf(\dout(v))$ for every strictly convex $f$ (the same $f$ at every vertex) if and only if no directed path leads from $u$ to $w$ with $\dout(u)\ge\dout(w)+2$. Path reversal finds such an orientation in polynomial time. Its $\Cpp$ value lower-bounds $\min_\pi\Cpp$ because orders are orientations.

\paragraph{Proposition~\ref{prop:bounds}.}
(a) The triangle bound is Proposition~\ref{prop:exact}. Let $S$ be the vertices with $\dout=0$. Two vertices of $S$ are never adjacent, because the earlier endpoint of an edge has positive forward degree. Every edge inside $C=V\setminus S$ has both forward degrees at least one and costs at least one. Deleting $E(C)$ leaves the bipartite graph $(S,C)$, so $\Lam\ge|E(C)|\ge\beta(G)$. If $G$ is bipartite, placing one side first gives every edge an endpoint with $\dout=0$ and zero work. Conversely, zero work forces $\beta(G)=0$.
(b) For integers $a,b\ge0$, $\min(a,b)=\sum_{k\ge1}\mathbf 1[a\ge k]\mathbf 1[b\ge k]$; summing over edges gives the layer identity. In the program~\eqref{eq:mip}, monotone binaries with sum $x_v$ force $b_{v,k}=\mathbf 1[x_v\ge k]$, and the optimal $w_{e,k}=\max(0,b_{u,k}+b_{v,k}-1)=\mathbf 1[x_u\ge k]\mathbf 1[x_v\ge k]$. The optimum is therefore $\min F_0$ over integral points of $\ZG$. This set contains every order degree vector, and $F_0\ge\Lopt{G}$ on all of $\ZG$ by Corollary~\ref{cor:free}; hence the optimum equals $\Lopt{G}$.
Valid cuts, which hold at every integral feasible point: $\sum_{e,k}w_{e,k}\ge|\mathcal T(G)|$; $b_{u,1}+b_{v,1}\ge1$ for every edge (as $x_u+x_v\ge1$); $\sum_{e\in C}w_{e,1}\ge1$ for every odd cycle $C$ (the vertices with $x=0$ are independent, so an odd cycle has an edge with both endpoints of degree at least one); and $\sum_ew_{e,1}\ge\beta_{\rm LB}$ for any certified lower bound on $\beta(G)$. The runs of Section~\ref{sec:exact} use all of them: the cycle cuts for every triangle and for each cycle of a greedy packing of edge-disjoint odd cycles (triangles first, then shortest odd cycles), and $\beta_{\rm LB}$ equal to the number of packed cycles, which is a lower bound on $\beta(G)$ because deleting edges to make $G$ bipartite removes an edge from each packed cycle and the packed cycles share no edges.
For maximum degree at most two, the minimum work equals the number of odd-cycle components: paths and even cycles are bipartite, and an odd cycle $(0,1,\dots,2k)$ with $0,2,\dots,2k$ first and the odd vertices last has work one.

\paragraph{Smooth relaxation: complexity, consistency and finite-temperature error.}
One edge scan computes all $p_{uv}$ and accumulates $x$; a second computes $\smin_\epsilon$ and the degree adjoints $a=\nabla F_\epsilon(x)$; a third applies $\partial J/\partial s_v=\tau^{-1}\sum_{w\in N(v)}p_{vw}(1-p_{vw})(a_w-a_v)$. This takes $O(m+n)$ time and memory. If scores are distinct on every edge, $p_{uv}\to\mathbf 1[s_u<s_v]$ as $\tau\to0$, and $\smin_\epsilon\to\min$ with $0\le\smin_\epsilon-\min\le\epsilon/2$. Let $\gamma=\min_{uv\in E}|s_u-s_v|>0$ and $\eta=(1+e^{\gamma/\tau})^{-1}$. Each soft direction is within $\eta$ of its hard value, so $|x_v-\dout_{\pi_s}(v)|\le d(v)\eta$. Since $\min$ is 1-Lipschitz in $\ell_\infty$,
$|J(s)-\Lam(\pi_s)|\le\eta\sum_{uv}\max\{d(u),d(v)\}+m\epsilon/2\le\eta\sum_vd(v)^2+m\epsilon/2$.

\paragraph{Theorem~\ref{thm:landscape}.}
Put $r=-s/\tau$ and $A(r)=\sum_{uv\in E}\log(e^{r_u}+e^{r_v})$. Then $x(s)=\nabla A(r)$ is the mean of the product distribution that orients each edge independently, and $\nabla^2A=L$ is a Laplacian with positive weights $p_{uv}(1-p_{uv})$. Let $H$ be the vectors with zero sum on every connected component.
(i) $L$ is positive definite on $H$, so $A$ is strictly convex there and $\nabla A$ is injective on $H$. Every finite $r$ gives positive probability to every orientation, so $x\in\relint\ZG$. For surjectivity, fix $x\in\relint\ZG$. A ball of radius $c>0$ in the affine hull around $x$ lies in $\ZG$, and $A(r)\ge\sum_{uv}\max(r_u,r_v)=h_{\ZG}(r)$, so $A(r)-r^\top x\ge c\|r\|$ on $H$. The minimizer exists and satisfies $P_H(\nabla A(r)-x)=0$. Both vectors have the same component sums, the edge counts, so $\nabla A(r)=x$.
(ii) By the chain rule $\nabla_sJ=-\tau^{-1}La$ with $a=\nabla F(x)$. If this vanishes, $a$ is constant on components. Every $y-x$ with $y\in\ZG$ lies in $H$, so $F(y)\le F(x)+a^\top(y-x)=F(x)$.
(iii) $x-\dout_{\pi_a}\in H$, so $\delta=a^\top(x-\dout_{\pi_a})=(P_Ha)^\top(x-\dout_{\pi_a})\le\|P_Ha\|_2\|x-\dout_{\pi_a}\|_2$. Moreover $\|P_Ha\|_2\le\|La\|_2/\lambda_*=\tau\|\nabla J\|_2/\lambda_*$.
For $F_\epsilon$ with $\epsilon>0$, $F_\epsilon$ is strictly concave on the affine hull of $\ZG$, so a finite stationary score, if one exists, is unique modulo adding a constant on each connected component and is a strict maximum modulo these shifts, which leave $J$ unchanged.

\paragraph{Counterexamples.}
On $K_{3,3}$ with parts $\{0,1,2\},\{3,4,5\}$, the order $(0,1,3,4,5,2)$ has forward degrees $(3,3,0,1,1,1)$ and work 6. Its doubled $F_0$ supergradient is $(0,0,6,4,4,4)$; sorting it returns the same order with $\delta=0$, although the optimum is 0. On the five-cycle $01,12,23,34,40$, the order $(0,1,4,2,3)$ has work 3 and every adjacent swap has zero change, whereas the optimum is 1. Neither a zero tangent gap nor adjacent-swap optimality certifies a global optimum.

\section{Complexity, Cubic Graphs and Stability}\label{app:complexity}

\paragraph{Theorem~\ref{thm:hard}.}
Let $G=S_2(H)$, where the path replacing $uv\in E(H)$ is $u\,a\,b\,v$ with $a$ adjacent to $u$.

\emph{Orders to cuts ($\beta(H)\le\min\Lam$).} Fix an order $\pi$ of $G$, let $S$ be the set of vertices with $\dout_\pi=0$ and $C=V(G)\setminus S$. As in Proposition~\ref{prop:bounds}a, $S$ is independent and $\Lam(\pi)\ge|E(C)|$, and $G-E(C)$ is bipartite with sides $S$ and $C$. Let $F\subseteq E(H)$ be the set of edges of $H$ whose path contains an edge of $E(C)$, so $|F|\le|E(C)|$. A cycle of length $L$ in $H-F$ maps to a cycle of length $3L$ in $G-E(C)$, so $3L$, and hence $L$, is even; thus $H-F$ is bipartite. A 2-coloring of $H-F$, computed by breadth-first search, is a cut of $H$ whose uncut edges all lie in $F$. Hence the cut leaves at most $|F|\le|E(C)|\le\Lam(\pi)$ edges uncut, in linear time.

\emph{Cuts to orders ($\min\Lam\le\beta(H)$).} Fix a cut $(A,B)$ of $H$ with $k$ uncut edges; color $A$ with 1 and $B$ with 0. For each path $u\,a\,b\,v$: if $uv$ is cut, set $c(a)=1-c(u)$ and $c(b)=1-c(v)$ (no monochromatic edge); if $c(u)=c(v)=0$, set $c(a)=c(b)=1$ (only $ab$ is monochromatic); if $c(u)=c(v)=1$, set $c(a)=1$, $c(b)=0$ (only $ua$ is monochromatic). Then the class $Y$ of color 0 is independent, and there are exactly $k$ monochromatic edges, all inside the class $X$ of color 1. Order the original vertices of $X$, then the subdivision vertices of $X$, then $Y$. Every vertex of $Y$ is a sink, so every edge meeting $Y$ costs 0. In each monochromatic edge the later endpoint is a subdivision vertex $s$ whose other neighbor lies in $Y$ (for $ab$ this holds for both endpoints; for $ua$, $s=a$ comes after the original vertex $u$), so $\dout(s)=1$; the earlier endpoint has $s$ after it, so the edge costs exactly one. Hence $\Lam=k$, and an optimal cut gives $\Lam=\beta(H)$. (For $ua$ the order matters: if $a$ preceded $u$, the edge could cost two.)

\emph{Consequences.} $S_2(H)$ has maximum degree $\max(\Delta(H),2)$ and girth $3\cdot\mathrm{girth}(H)\ge9$, so it is triangle-free and all of its work is excess work beyond the output. Max-Cut is APX-complete on cubic graphs \citep{alimonti2000cubic}, so $\min\Lam$ is NP-hard on triangle-free graphs of maximum degree three. The cut-to-order map preserves the objective exactly, and the order-to-cut map does not increase it; the latter can be strict (for $H=K_3$, $S_2(H)$ is the 9-cycle, which has orders with $\Lam=7$, whereas every cut of $K_3$ leaves at most three edges uncut). Hence the optima are equal, and an order with $\Lam\le c\cdot\Lopt{S_2(H)}$ yields a cut with at most $c\,\beta(H)$ uncut edges. Since $\beta(H)\le|E(H)|/2\le\mathrm{MaxCut}(H)$, a $(1+\varepsilon)$-approximation for $\beta$ yields a cut of size at least $|E|-(1+\varepsilon)\beta\ge(1-\varepsilon)\mathrm{MaxCut}(H)$; a polynomial-time approximation scheme for $\min\Lam$ on subcubic triangle-free graphs would therefore give one for Max-Cut on cubic graphs, so $\min\Lam$ is APX-hard. Under the Unique Games Conjecture it is hard, for every small $\varepsilon>0$, to distinguish $\mathrm{MaxCut}\ge(1-\varepsilon)|E|$ from $\mathrm{MaxCut}\le(1-\tfrac2\pi\sqrt\varepsilon)|E|$ \citep{khot2007optimal}, i.e.\ $\beta\le\varepsilon|E|$ from $\beta\ge\tfrac2\pi\sqrt\varepsilon|E|$; the ratio is unbounded as $\varepsilon\to0$, so Min UnCut, and with it $\min\Lam$, has no constant-factor approximation \citep{khot2002unique}. These statements are worst-case. Because the correspondence is additive ($\Lopt{S_2(H)}=|E(H)|-\mathrm{MaxCut}(H)$), approximation ratios for Max-Cut do not transfer, and whether $\Lopt{G}$ admits, for example, an $O(\sqrt{\log n})$ approximation as Min UnCut does \citep{agarwal2005uncut} is open. For the non-adaptive cost $\Cpm$, \citet[\S 5.3.3]{lecuyer2023thesis} discusses a Min UnCut connection, attributed there to Makarychev, with an $O(\sqrt{\log n})$ approximation and the same UGC consequence; Theorem~\ref{thm:hard} is the corresponding result for $\Lam$, obtained through the exact identity on 2-subdivisions.

\emph{Checks.} On 38 graphs $H$ (cycles $C_3$--$C_9$, $K_4$, $K_5$, $K_4-e$, paw, bowtie, $2K_3$, the wheels on 5 and 6 vertices, $K_{3,3}$, $K_{2,3}$, prism, octahedron, Petersen, Möbius--Kantor, three random cubic graphs and 14 random graphs with at most 7 vertices and 10 edges), the forward map gives $\Lam=\#\text{uncut}$ for every cut (all $2^{|V(H)|}$ cuts, or 500 random cuts when $|V(H)|>10$), and the back map satisfies $\#\text{uncut}\le|F|\le|E(C)|\le\Lam$ on 300 random orders per graph. On the 36 graphs where an exact method finished, $\min\Lam_{S_2(H)}=\beta(H)$ (values 0--4) by up to three independent methods: enumeration of all orders, enumeration of all orientations (Corollary~\ref{cor:free}), and an order-space integer program with transitivity constraints, for which the solver proves the optimum 3 for the Petersen graph ($|V(S_2(H))|=40$) in 205\,s.

\paragraph{Hardness of the relaxation.}
\begin{corollary}[Relaxation]\label{cor:relaxhard}
Fix $\tau>0$ and $0<\epsilon\le1/m$, and let $\mathrm{OPT}=\min_\pi\Lam(\pi)$. Then $\mathrm{OPT}\le\inf_sJ(s)\le\mathrm{OPT}+\tfrac12$, and one tangent step at $x(s)$ maps any scores $s$ to an order $\pi$ with $\Lam(\pi)\le J(s)$. Hence finding scores with $J(s)<\inf J+\tfrac12$ is NP-hard, and under the Unique Games Conjecture no polynomial-time algorithm achieves $J(s)\le c\inf J$ for a constant $c$ on all non-bipartite graphs.
\end{corollary}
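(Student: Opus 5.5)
The plan has three parts: sandwich $\inf_s J$ between $\mathrm{OPT}$ and $\mathrm{OPT}+\tfrac12$, bound the rounded order by $J(s)$, and then transfer hardness from Theorem~\ref{thm:hard}.

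\emph{Lower bound.} For every $s$, the soft degrees $x(s)$ lie in $\ZG$ by Theorem~\ref{thm:landscape}(i). The smoothed minimum dominates the true one, $\smin_\epsilon\ge\min$, so $J(s)=F_\epsilon(x(s))\ge F_0(x(s))$. Corollary~\ref{cor:free} applied to the concave $F_0$ gives $F_0(x(s))\ge\mathrm{OPT}$.

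\emph{Upper bound.} Take an optimal order $\pi$ and integer scores $s_v=\pi(v)$, so scores are distinct on every edge, and scale them by $c\to\infty$. By the consistency estimate in Appendix~\ref{app:proofs}, the soft-direction error is $\eta\to0$. Hence $J(cs)\to F_\epsilon(\dout_\pi)\le\Lam(\pi)+m\epsilon/2\le\mathrm{OPT}+\tfrac12$, using $\epsilon\le1/m$. The infimum is therefore at most $\mathrm{OPT}+\tfrac12$.

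\emph{Rounding.} Given $s$, set $x=x(s)\in\ZG$ and apply Theorem~\ref{thm:rounding} to $F=F_\epsilon$. This $F$ is concave and smooth on all of $\mathbb R^V$, and its gradient is a supergradient. The resulting order satisfies $F_\epsilon(\dout_{\pi_a})\le F_\epsilon(x)=J(s)$. Since $\Lam(\pi_a)=F_0(\dout_{\pi_a})\le F_\epsilon(\dout_{\pi_a})$, we get $\Lam(\pi_a)\le J(s)$. This step costs $O(m+n\log n)$ in the real-RAM model assumed in the text.

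\emph{Hardness transfer.} Suppose scores satisfy $J(s)<\inf J+\tfrac12\le\mathrm{OPT}+1$. Rounding yields an integer $\Lam(\pi_a)<\mathrm{OPT}+1$, hence $\Lam(\pi_a)=\mathrm{OPT}$. Such scores therefore solve the ordering problem exactly, which is NP-hard by Theorem~\ref{thm:hard}.

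For the UGC part, suppose $J(s)\le c\inf J$. Then $\Lam(\pi_a)\le c(\mathrm{OPT}+\tfrac12)$. On non-bipartite graphs, $\mathrm{OPT}\ge1$ by Proposition~\ref{prop:bounds}a, so this is at most $\tfrac32c\,\mathrm{OPT}$, a constant-factor approximation of $\Lopt{G}$. Theorem~\ref{thm:hard} excludes this under the UGC, and the hard instances $S_2(H)$ with $H$ non-bipartite are themselves non-bipartite.

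The main obstacle is the exactness caveat: the argument needs $J$, the supergradient and the sort to be computed exactly from $s$. I would state the reduction in the real-RAM model the text already adopts, so that a polynomial-time score-finder composed with one tangent step gives a polynomial-time ordering algorithm.
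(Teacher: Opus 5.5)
Your proposal is correct and follows the paper's proof. It uses the same three ingredients: the sandwich $F_0\le F_\epsilon\le F_0+m\epsilon/2$ combined with Corollary~\ref{cor:free}, one tangent step giving $\Lam(\pi)\le J(s)$, and the same transfer from Theorem~\ref{thm:hard} via integrality and $\mathrm{OPT}\ge1$. The differences are minor and both are valid. You bound $\inf J$ from above with an explicit scaled-score sequence at an optimal order rather than via surjectivity of $s\mapsto x(s)$ onto $\relint\ZG$. You also take the tangent step with $\nabla F_\epsilon$ rather than the $F_0$ supergradient; the paper's choice needs only exact comparisons of soft degrees.
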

\begin{proof}
By Theorem~\ref{thm:landscape}i, $x(s)$ ranges over $\relint\ZG$, and $F_\epsilon$ is continuous, so $\inf J=\min_{\ZG}F_\epsilon$. Write $\smin_\epsilon(a,b)=\min(a,b)+\tfrac12\big(\epsilon-\sqrt{(a-b)^2+\epsilon^2}+|a-b|\big)$; the bracket lies in $[0,\epsilon]$, so $F_0\le F_\epsilon\le F_0+m\epsilon/2$. By Corollary~\ref{cor:free}, $\min_{\ZG}F_0=\mathrm{OPT}$, which gives the first claim. Theorem~\ref{thm:rounding} for $F_0$ at $x=x(s)$ gives $\Lam(\pi_a)=F_0(\dout_{\pi_a})\le F_0(x(s))\le J(s)$. If $J(s)<\inf J+\tfrac12\le\mathrm{OPT}+1$, then $\Lam(\pi_a)=\mathrm{OPT}$, and Theorem~\ref{thm:hard} applies. If $J(s)\le c\inf J$ on a non-bipartite graph ($\mathrm{OPT}\ge1$; bipartiteness is tested in linear time), then $\Lam(\pi_a)\le c(\mathrm{OPT}+\tfrac12)\le\tfrac{3c}2\mathrm{OPT}$, contradicting Theorem~\ref{thm:hard} under the Unique Games Conjecture.
\end{proof}
The reduction evaluates $x(s)$ and compares soft degrees exactly (real-RAM model of the relaxation). A statement in the Turing model would also need a representation of the scores and a precision model for the logistic evaluations and the comparisons in the rounding step, so the corollary is the idealized, exact-arithmetic consequence of Theorem~\ref{thm:hard} for score optimization; the discrete NP-hardness of Theorem~\ref{thm:hard} does not depend on it. The corollary says that the absence of finite local minima (Theorem~\ref{thm:landscape}) does not make the relaxation tractable: the difficulty sits at infinity, where Theorem~\ref{thm:landscape}iii shows that gradients certify nothing.

\paragraph{Proposition~\ref{prop:cubic}.}
\emph{Random order.} Let $uv$ be an edge in no triangle. The two other neighbors of $u$ and the two other neighbors of $v$ are four distinct vertices, distinct from $u$ and $v$, so $(\dout(u),\dout(v))$ is a function of the relative order of six distinct vertices, which is uniform over $6!=720$ patterns. Enumerating them gives $\mathbb E\min\{\dout(u),\dout(v)\}=59/90$, and $\tfrac{59}{90}\cdot\tfrac{3n}2=\tfrac{59}{60}n$. At most $3t$ edges lie in triangles, and for each of them both the true and the assumed contribution lie in $[0,3]$, so the error is at most $9t$.
\emph{Upper bound.} Let $I$ be a maximum independent set and $C=V\setminus I$. Since $I$ is maximal, every vertex of $C$ has a neighbor in $I$, so $G[C]$ has maximum degree two and is a union of paths and cycles. Orient every $C$--$I$ edge towards $I$, and orient each path and cycle of $G[C]$ alternately, so that every vertex is a source or a sink within $C$; an odd cycle $c_0\cdots c_{2k}$, with even-indexed vertices as sources, keeps one defect edge $c_0\to c_{2k}$. The orientation is acyclic, so it is induced by an order (any topological order). Vertices of $I$ are sinks, so edges at $I$ cost 0. An edge of $G[C]$ whose endpoint $w$ is a sink within $C$ costs at most $\dout(w)=3-\deg_C(w)\le2$; the defect edge costs at most $\dout(c_{2k})=2$ (one edge to $c_{2k-1}$, one to $I$). Hence $\Lam\le2|E(C)|=2(\tfrac{3n}2-3|I|)=3n-6\alpha(G)$.
\emph{Consequences.} Triangle-free subcubic graphs have $\alpha\ge5n/14$ \citep{staton1979independence,heckman2001independence}, so $\Lopt{G}\le3n-\tfrac{30}{14}n=\tfrac67n$. The uniformly random cubic graph $G_n$ has a law invariant under relabeling, so for an order $\pi$ chosen independently of the edges, $\mathbb E\Lam_{G_n}(\pi)=\mathbb E_{G}\mathbb E_{\sigma\text{ uniform}}\Lam_G(\sigma)=\tfrac{59}{60}n+O(\mathbb Et)$, and $\mathbb Et\to4/3$ \citep{bollobas1980configuration}. Asymptotically almost surely $\alpha(G_n)\ge(6\ln\tfrac32-2-\varepsilon)n$ \citep{frieze1994independence}, so $\Lopt{G_n}\le(15-36\ln\tfrac32+o(1))n\approx0.4033n$ a.a.s. Since $\Lopt{G_n}\le3m=\tfrac92n$ always, also $\mathbb E\Lopt{G_n}\le(15-36\ln\tfrac32+o(1))n$, and therefore $\mathbb E\Lam_{G_n}(\pi)/\mathbb E\Lopt{G_n}\ge\tfrac{59/60}{15-36\ln(3/2)}-o(1)=2.438\ldots-o(1)$. For degree $d\ge4$ the upper-bound argument fails ($G[C]$ can have degree $d-1$); Table~\ref{tab:cubic} reports 4-regular graphs numerically only (random-order limit $\tfrac{137}{70}n$, from $\mathbb E\min=137/140$ per edge by the same enumeration over $8!$ patterns).

\emph{Checks.} Exact enumeration gives $59/90$ per edge for $d=3$ ($137/140$ for $d=4$); the mean of $\Lam$ over all $8!$ orders of the cube $Q_3$ is exactly $118/15=\tfrac{59}{60}\cdot8$. On 32 random cubic graphs with $n=8$--$16$, exact $\mathrm{OPT}$ (orientation enumeration) $\le\Lam$ of the construction $\le3n-6\alpha$ in all cases. Table~\ref{tab:cubic} gives large random cubic graphs.

\paragraph{Edge edits.}
Let $G$ and $H$ be graphs on the same vertex set, $U=G\cup H$, $k_+=|E(H)\setminus E(G)|$, $k_-=|E(G)\setminus E(H)|$, $\Delta=\Delta(U)$, and fix an order $\pi$. For $G\subseteq U$ let
\[
B_\pi(G\to U)=\sum_{f=uv\in E(U)\setminus E(G),\ u\text{ earlier}}\Big[d_G(u)+\min\{\dout_{U}(u),\dout_{U}(v)\}\Big].
\]
\begin{lemma}[Stability]\label{lem:stab}
(a) If $G\subseteq U$, then $\Lam_G(\pi)\le\Lam_U(\pi)\le\Lam_G(\pi)+B_\pi(G\to U)$ and $B_\pi(G\to U)\le(2\Delta-1)|E(U)\setminus E(G)|$.
(b) $|\Lam_G(\pi)-\Lam_H(\pi)|\le\max\{B_\pi(G\to U),B_\pi(H\to U)\}\le(2\Delta-1)\max(k_+,k_-)$.
(c) $|\min\Lam_G-\min\Lam_H|\le(2\Delta-1)\max(k_+,k_-)$, and the optimality gap satisfies $\mathrm{gap}_H(\pi)\le\mathrm{gap}_G(\pi)+B_\pi(G\to U)+(2\Delta-1)k_-\le\mathrm{gap}_G(\pi)+(2\Delta-1)(k_++k_-)$; tangent refinement on $H$ does not increase $\mathrm{gap}_H$.
\end{lemma}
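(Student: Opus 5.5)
The plan is to prove (a) directly by comparing $G$ and $U$ edge by edge under the fixed order $\pi$. Parts (b) and (c) then follow by routing every comparison through the union $U$. Throughout, the order $\pi$ is fixed, so each edge has a well-defined earlier endpoint in both graphs.

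\emph{Part (a).} Adding an edge $f=uv$ with $u$ earlier raises $\dout(u)$ by one and leaves every other forward degree unchanged. Hence $\dout_G\le\dout_U$ coordinatewise. Since $\min$ is monotone and $\Lam_U$ has extra nonnegative terms, $\Lam_G(\pi)\le\Lam_U(\pi)$.

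For the upper bound, I would split $\Lam_U-\Lam_G$ into two parts:
\begin{itemize}
\item the new terms $\sum_{f\in E(U)\setminus E(G)}\min\{\dout_U(u),\dout_U(v)\}$;
\item the increases of the old terms.
\end{itemize}
For an old edge $xy$, write $\alpha_x,\alpha_y$ for the number of new edges whose earlier endpoint is $x$, respectively $y$. Then $\min(a+\alpha_x,b+\alpha_y)-\min(a,b)\le\alpha_x+\alpha_y$. Summing over old edges and exchanging the order of summation, each new edge $f$ with earlier endpoint $u$ is charged once for every old edge at $u$, that is, $d_G(u)$ times. This gives exactly $B_\pi(G\to U)$.

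For the numeric bound, $u$ is incident to $f\in E(U)\setminus E(G)$, so $d_G(u)\le\Delta-1$. Also $\min\{\dout_U(u),\dout_U(v)\}\le\Delta$. Hence each new edge contributes at most $2\Delta-1$. This exchange-of-summation bookkeeping is the only step needing care, and it is the step I expect to be the main obstacle.

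\emph{Part (b).} Apply (a) to $G\subseteq U$ and to $H\subseteq U$. Both $\Lam_G(\pi)$ and $\Lam_H(\pi)$ lie in an interval whose upper end is $\Lam_U(\pi)$. Each is within its own $B$ of $\Lam_U(\pi)$, so their difference is at most $\max\{B_\pi(G\to U),B_\pi(H\to U)\}$. Finally, $|E(U)\setminus E(G)|=k_+$ and $|E(U)\setminus E(H)|=k_-$, which gives the bound $(2\Delta-1)\max(k_+,k_-)$.

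\emph{Part (c).} For the minima, let $\pi_H$ be optimal for $H$. By (b),
\[
\min\Lam_G\le\Lam_G(\pi_H)\le\min\Lam_H+(2\Delta-1)\max(k_+,k_-),
\]
and the symmetric argument gives the other direction.

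For the gap, $\mathrm{gap}_H(\pi)=\Lam_H(\pi)-\min\Lam_H$, and I bound the two terms separately:
\begin{itemize}
\item By (a), $\Lam_H(\pi)\le\Lam_U(\pi)\le\Lam_G(\pi)+B_\pi(G\to U)$.
\item Let $\pi^*$ be optimal for $H$. By (a) applied to $G\subseteq U$ and then to $H\subseteq U$,
\[
\min\Lam_G\le\Lam_G(\pi^*)\le\Lam_U(\pi^*)\le\min\Lam_H+(2\Delta-1)k_-.
\]
\end{itemize}
Subtracting the second bound from the first gives
\[
\mathrm{gap}_H(\pi)\le\mathrm{gap}_G(\pi)+B_\pi(G\to U)+(2\Delta-1)k_-,
\]
and $B_\pi(G\to U)\le(2\Delta-1)k_+$ yields the last form of the bound.

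Lastly, tangent refinement on $H$ never increases $\Lam_H$ (Theorem~\ref{thm:rounding}; it iterates only while $\Lam$ strictly decreases), and $\min\Lam_H$ is fixed. Hence $\mathrm{gap}_H$ does not increase.
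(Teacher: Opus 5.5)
Your proposal is correct and follows the paper's proof essentially step by step. Both use the same split of $\Lam_U-\Lam_G$ into new-edge terms and old-edge increments bounded by $r_v+r_w$, the same exchange of summation giving $\sum_f d_G(u_f)$, the sandwich $[\Lam_U-\max B,\Lam_U]$ for (b), and the same chain through $\Lam_U$ at an $H$-optimal order for (c). The only cosmetic difference is in bounding $|\min\Lam_G-\min\Lam_H|$: you invoke (b) directly, whereas the paper uses the sharper one-sided bound $(2\Delta-1)k_-$ (and $(2\Delta-1)k_+$ symmetrically), which is the bound you also use in the gap inequality.
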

\begin{proof}
Let $x=\dout_G$ and $x'=\dout_U$ under $\pi$; then $x'-x=r\ge0$, where $r_v$ counts the edges of $U\setminus G$ whose earlier endpoint is $v$. Write $\Lam_U-\Lam_G=\sum_{f\in U\setminus G}\min_U(f)+\sum_{e=vw\in G}[\min(x_v+r_v,x_w+r_w)-\min(x_v,x_w)]$. Each bracket lies in $[0,\max(r_v,r_w)]\subseteq[0,r_v+r_w]$, which gives monotonicity, and summing over $e\in G$ gives at most $\sum_vr_vd_G(v)=\sum_fd_G(u_f)$. With $\min_U(f)\le\dout_U(u_f)\le\Delta$ and $d_G(u_f)\le\Delta-1$ this gives (a). Both $\Lam_G$ and $\Lam_H$ lie in $[\Lam_U-\max B,\Lam_U]$, which gives (b). For (c), let $\sigma$ minimize $\Lam_H$; then $\min\Lam_G\le\Lam_G(\sigma)\le\Lam_U(\sigma)\le\Lam_H(\sigma)+(2\Delta-1)k_-$, and symmetrically; and $\Lam_H(\pi)\le\Lam_U(\pi)\le\Lam_G(\pi)+B_\pi(G\to U)$. Theorem~\ref{thm:rounding} gives the last claim.
\end{proof}
The bound in (b) sharpens $2\Delta(k_++k_-)$. $\Lam_H(\pi)$ itself is computable exactly in $O(m)$ time, so the lemma's use is on the optimum and gap side: an order that is near-optimal for $G$ stays near-optimal after few edits, with a cheap localized certificate. \emph{Checks.} No violation of (a)--(b) or of $2\Delta k$ over all 98,304 pairs of labeled 4-vertex graphs with all 24 orders, all 6.76M combinations of 5-vertex graph pairs differing in at most two edges with all 120 orders, and 2,000 random and hub-heavy graph pairs ($n=20$--$200$); no violation of (c) on 3,000 pairs of 6-vertex graphs with exact optima. The worst observed ratio of actual change to bound is $1/2$ for $2\Delta k$, $2/3$ for $(2\Delta-1)\max(k_+,k_-)$, and $1$ (attained) for the localized bound. Along the swap chains of Appendix~\ref{app:reuse}, none of the bounds is violated (\RStabViol\ violations in \RStabChecks\ frame transitions), but they are loose: the median ratio of bound to actual change is \RStabLocSlack\ for the localized bound and \RStabUniSlack\ for $(2\Delta-1)\max(k_+,k_-)$, so they are worst-case guarantees, not predictions.

\begin{table}[H]
\centering\small\setlength{\tabcolsep}{3pt}
\caption{Work per vertex $\Lam/n$ on random regular graphs (NetworkX \citep{hagberg2008networkx}, three graph seeds each; mean$\pm$SD). The construction row applies the proof of Proposition~\ref{prop:cubic} to a greedy, not a maximum, independent set $I$. Random-order limits (Proposition~\ref{prop:cubic}): $59/60\approx0.983$ for $d=3$ and $137/70\approx1.957$ for $d=4$; for $d=3$ the optimum is at most $0.4033n$ a.a.s. \TILS uses 100\,$\mu$s per edge (seed 1). For $d=4$ only numerics are available.}
\label{tab:cubic}
\begin{tabular}{@{}lrrr@{}}
\toprule
 & \multicolumn{2}{c}{cubic ($d=3$)} & $d=4$\\
\cmidrule(lr){2-3}\cmidrule(lr){4-4}
Order & $n=10^4$ & $n=10^5$ & $n=10^5$\\
\midrule
Uniformly random order (5 per graph) & 0.986{\tiny$\pm$0.005} & 0.984{\tiny$\pm$0.001} & 1.959{\tiny$\pm$0.002}\\
Greedy independent-set construction & 0.313{\tiny$\pm$0.005} & 0.312{\tiny$\pm$0.000} & --\\
\quad its bound $3-6|I|/n$ & 0.405{\tiny$\pm$0.005} & 0.402{\tiny$\pm$0.001} & --\\
T from the random order & 0.633{\tiny$\pm$0.010} & 0.635{\tiny$\pm$0.001} & 1.333{\tiny$\pm$0.002}\\
\TILS from the construction ($d=3$) / random order ($d=4$) & 0.275{\tiny$\pm$0.000} & 0.281{\tiny$\pm$0.000} & 0.798{\tiny$\pm$0.005}\\
\TILS from the random order & 0.281{\tiny$\pm$0.007} & 0.288{\tiny$\pm$0.001} & --\\
Limit / bound & 0.983 / 0.403 & 0.983 / 0.403 & 1.957 / --\\
\bottomrule
\end{tabular}
\end{table}

\section{A Monotone Mirror Step}\label{app:mirror}
With $r=-s/\tau$ and $A$ as in Appendix~\ref{app:proofs}, $x=\nabla A(r)$ and $a=\nabla F(x)$ for a differentiable concave $F$, set $r'=r-\alpha a$ for $\alpha>0$ and $x'=\nabla A(r')$. For an edge $e=(u,v)$ let $p_e=\sigma(r_u-r_v)$ and $p'_e=\sigma(r'_u-r'_v)$.
\begin{proposition}
$F(x')\le F(x)-\alpha^{-1}\sum_e[\mathrm{KL}(p'_e\|p_e)+\mathrm{KL}(p_e\|p'_e)]$, with strict decrease unless $a$ is constant on every component, in which case $x$ maximizes $F$. If $\alpha>\max_{a_u\ne a_v}|r_u-r_v|/|a_u-a_v|$, sorting $r-\alpha a$ in descending order yields an order with $F\le F(x)-\delta(x,a)$.
\end{proposition}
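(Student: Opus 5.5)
The plan is to read the update as mirror descent on $\ZG$ with the log-partition $A$ as the mirror map. Three facts from Appendix~\ref{app:proofs} are used: $x=\nabla A(r)$, the Bregman divergence of $A$ splits edge by edge, and $F$ is concave.

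\emph{Descent inequality.} At $x'$, concavity of $F$ gives
\[
F(x')\le F(x)+a^\top(x'-x).
\]
Since $r'-r=-\alpha a$, we have $a^\top(x'-x)=-\alpha^{-1}(r'-r)^\top(\nabla A(r')-\nabla A(r))$. This is minus $\alpha^{-1}$ times the symmetrized Bregman divergence $D_A(r,r')+D_A(r',r)$.

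Because $A(r)=\sum_e\log(e^{r_u}+e^{r_v})$ is a sum over edges of log-partition functions of Bernoulli variables in the natural parameter $r_u-r_v$, each edge term of $D_A(r',r)$ equals $\mathrm{KL}(p_e\|p'_e)$, and symmetrically for the other order. Adding the two gives the KL sum, and this step is a short calculation.

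\emph{Strictness.} The symmetrized sum is zero only if $p'_e=p_e$ on every edge, i.e.\ $a_u=a_v$ on every edge. That means $a$ is constant on each component, and Theorem~\ref{thm:landscape}(ii) (the argument via $y-x\in H$) then shows that $x$ maximizes $F$ on $\ZG$. Otherwise some KL term is positive and the decrease is strict.

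\emph{Sorting claim.} This is the step that needs care. Take the order $\sigma$ that sorts $r-\alpha a$ in descending order. Because $s=-\tau r$, this is ascending order in the score, matching the convention $p_{uv}=\sigma(r_u-r_v)$ under which the earlier vertex has larger $r$.

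The key observation is that for an edge with $a_u\ne a_v$, the threshold on $\alpha$ forces $\alpha|a_u-a_v|>|r_u-r_v|$. The sign of $(r_u-\alpha a_u)-(r_v-\alpha a_v)$ is therefore the sign of $a_v-a_u$. So on every such edge the endpoint with the smaller $a$ comes first, exactly as under $\pi_a$.

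On edges with $a_u=a_v$, the contribution to $a^\top\dout_\sigma$ is $a_u=\min(a_u,a_v)$ whichever endpoint is first. Hence
\[
a^\top\dout_\sigma=\sum_e\min(a_u,a_v),
\]
and the proof of Theorem~\ref{thm:rounding} applies verbatim. The supergradient inequality at $y=\dout_\sigma$ gives $F(\dout_\sigma)\le F(x)-\delta(x,a)$.

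I expect the only obstacle to be tie handling in the sort when $r-\alpha a$ has equal entries on an edge. This can happen only on edges with $a_u=a_v$, since otherwise the strict inequality above separates the two values. On those edges either direction gives the same linear value, so any tie-breaking rule works.
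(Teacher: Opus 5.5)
Your proof is correct and follows the paper's argument. The symmetrized Bregman divergence $D_A(r,r')+D_A(r',r)=(r'-r)^\top(x'-x)$ is exactly the paper's summed Bernoulli identity $\mathrm{KL}(p'\|p)+\mathrm{KL}(p\|p')=(p'-p)(\operatorname{logit}p'-\operatorname{logit}p)$, used with the same supergradient inequality, and your sorting step is the paper's logit-gap sign argument followed by Theorem~\ref{thm:rounding}. You also spell out the strictness clause (every KL term vanishes if and only if $a_u=a_v$ on every edge, after which Theorem~\ref{thm:landscape}(ii) applies), which the paper leaves implicit.
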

\begin{proof}
Summing the Bernoulli identity $\mathrm{KL}(p'\|p)+\mathrm{KL}(p\|p')=(p'-p)(\operatorname{logit}p'-\operatorname{logit}p)$ over the edges gives $\sum_e[\mathrm{KL}(p'_e\|p_e)+\mathrm{KL}(p_e\|p'_e)]=(r'-r)^\top(x'-x)=-\alpha a^\top(x'-x)$, and the supergradient inequality $F(x')\le F(x)+a^\top(x'-x)$ gives the first claim. For the second, an edge with $a_u<a_v$ has new logit gap $(r_u-r_v)+\alpha(a_v-a_u)>0$, so $u$ precedes $v$, and the edgewise linear minimum is attained. Concavity then applies as in Theorem~\ref{thm:rounding}.
\end{proof}
This is entropic mirror descent \citep{beck2003mirror} specialized to the edge marginals. In tuned, time-matched comparisons on 19 graphs, measured against a reference without reversed orders, standalone mirror steps trailed tuned Adam and insertion search in all three graph groups, and mirror plus insertion was within 0.4 points of tangent plus insertion; we therefore do not use it in the main experiments.

\section{Mean-Field Limit}\label{app:mf}
\paragraph{Proof of Theorem~\ref{thm:mf}.}
If $\sum_iD_i$ is odd, the degree of vertex $n$ is increased by one, as is standard \citep{vanderhofstad2016random}; this changes one degree by one, the bounds below then hold with $K+1$ in place of $K$, and the limit is unaffected. $\Lam$ is evaluated on the simple graph obtained by deleting loops and multi-edges. Let $K$ bound the degrees. Fix the layout, and let block membership and within-block ranks depend only on degrees, labels and independent randomness, not on the pairing (vertex-level Split with identifier ties is of this form and converges to two half-blocks per class). (1) In $\mathrm{CM}_n(p)$ the expected numbers of self-loops and multi-edges are $O(1)$. Removing one edge changes at most two forward degrees by one and $\Lam$ by $O(K)$, which is negligible after division by $n$. (2) Pick a uniformly random edge $\{u,w\}$. With probability $1-O(K^2/n)$ the neighborhoods of $u$ and $w$ at distance one are disjoint and contain no further edges among themselves. Their other $2(K-1)$ half-edges attach to distinct vertices whose degrees are, up to total variation $O(K^2/n)$, i.i.d.\ with law $q$, and $\deg u,\deg w$ are themselves i.i.d.\ $q$ \citep{vanderhofstad2016random}. (3) Given the degree sequence, the ranks inside each block form a uniformly random permutation (for label-based ranks this holds in distribution because configuration-model labels are exchangeable), and the block boundaries converge because the class frequencies converge to $p$. Hence the normalized positions of a bounded number of distinct vertices converge jointly to independent draws from their block laws, and every neighbor of $u$ other than $w$ is later than $u$ independently with limiting probability $h(t_u)$. (4) Conditioned on $t_u<t_w$, $\dout(u)=1+\mathrm{Bin}(k_u-1,h(t_u))$ and $\dout(w)=\mathrm{Bin}(k_w-1,h(t_w))$ in the limit, independently. Symmetrizing over which endpoint is earlier gives $\mathbb E\min=2\,\mathbb E[\mathbf 1\{t_u<t_w\}\min(\cdot,\cdot)]$. (5) Given the degree sequence $\mathbf D$, the number of edges $m=\tfrac12\sum_iD_i$ is fixed, and $\mathbb E[\Lam\mid\mathbf D]=m\,\mathbb E[\min\{\dout(u),\dout(w)\}\mid\mathbf D]$ for a uniformly random edge $\{u,w\}$, up to the correction of step (1). Steps (2)--(4) hold conditionally on $\mathbf D$ whenever its empirical degree law converges to $p$, which happens almost surely, and the minimum is bounded by $K$, so the conditional expectation of the minimum converges to the expectation under the limit law. Since $m/n\to\bar d/2$ almost surely and $(m/n)\min\le K^2/2$, bounded convergence applied to $\mathbb E\Lam/n=\mathbb E\big[(m/n)\,\mathbb E[\min\mid\mathbf D]\big]$ gives the limit. Finally, $\mathbb E\min(A,B)=\sum_{j\ge1}\Pr(A\ge j)\Pr(B\ge j)$ for independent nonnegative integers. For unbounded degrees with $\mathbb E D^2<\infty$ we expect the same limit via uniform integrability, since $\mathbb E\min\le\mathbb E D_{\rm size\text{-}biased}$, but we do not claim it.

\paragraph{Computation and search.}
Table~\ref{tab:mf} reports $\Lam/n$ per non-isolated vertex. Positions inside each block are integrated with $N_q$ midpoint nodes, and suffix sums give $O(PK)$ time for $P$ nodes in total and at most $K$ binomial terms. Values at $N_q=8$ differ from those at $N_q=128$ by at most 0.05\%. Block layouts are searched by first-improvement insertion over class sequences, with random and structured restarts. Splitting each class into two blocks changes the best value found by at most 0.3\%. The best layouts found are V-shaped; for PL3.0 roughly every other class lies on the descending front arm and the rest on the ascending back arm.

\paragraph{Validation on sampled graphs.}
At $n=10^5$ (two graph seeds), sampled $\Lam/n$ of ascending, descending and optimized degree orders lies within 0.1--0.7\% of the limit for ER8 and PL2.1/2.5/3.0. Split lies within 0.2--0.8\% of its predicted value (ER8 8.791 vs.\ 8.807 for Poisson(8); PL2.5 3.021 vs.\ 3.039). An independent reimplementation reproduced the limit values to four decimals and the sampled comparisons to within 0.85\%. Individual heavy-tailed graphs fluctuate by 2--3\%, so Table~\ref{tab:scaling} compares methods on the same graphs. Coloring orders are greedy colorings with vertices processed in reverse smallest-last order (SL) or by descending degree (DEG), with color classes as blocks in ascending or descending color order (four variants; Table~\ref{tab:scaling-colour}). At $n=10^5$ the best of the four is 4.8--42.5\% below the best-found degree-based order (PL2.1 lowest); SL-desc is best on ER8, PL2.5 and PL3.0, and SL-asc on PL2.1. Both descending variants are worse than that order on PL2.1.

\section{\texorpdfstring{$k$}{k}-Clique Listing}\label{app:kclique}
In kClist-style listing \citep{danisch2018kcliques}, a partial clique $Q$ carries the candidate set $C(Q)$ of common later neighbors. Extending $Q$ by $v\in C(Q)$ computes $C(Q)\cap\Nout(v)$. With adaptive intersection, the exact work is
\[
W_k(\pi)=\sum_{l=1}^{k-2}\ \sum_{(l+1)\text{-cliques }Q+v,\ v\text{ last}}\min\{|C(Q)|,\dout(v)\}.
\]
The level $l=1$ is exactly $\Lam$, so $W_3=\Lam$; the two one-sided rules give $\Cpm$ and $\Cpp$ at $k=3$, as in Proposition~\ref{prop:rules}. Because the sinks of an acyclic orientation are determined by its out-degrees, peeling shows that an acyclic orientation is determined by its out-degree vector. Every order functional is therefore formally a function of $\dout$, but $W_k$ has no local form in degrees for $k\ge4$. In two graphs whose triangle $(0,1,2)$ has forward degrees $(3,2,2)$ under the identity order, the level-2 term is 2 in one and 1 in the other. Hence the $O(m)$ tangent step of Theorem~\ref{thm:rounding} does not transfer to $W_k$.

\paragraph{End-to-end benchmark.}
All orders were computed in C++ with full cost charged, five interleaved repetitions, $k=5$--$8$, on the four public development graphs and a random geometric graph ($n=20{,}000$, radius 0.025). Clique counts matched NetworkX \citep{hagberg2008networkx} on 2{,}400 checks. Findings:
\begin{itemize}
\item For a kClist-style lister, smallest-last is fastest per call and end-to-end on every expensive cell; tangent and \TILS orders list 7--50\% slower.
\item For the adaptive lister, tangent refinement from smallest-last lists 22--27\% faster than smallest-last on Loc-Brightkite ($k=6$--$8$), at 0.09\,s extra ordering time, i.e., break-even within one call; an independent reimplementation measured 20--25\%.
\item The adaptive lister remains slower than the kClist-style lister with smallest-last.
\end{itemize}

\section{Methods and Protocol}\label{app:protocol}
\paragraph{Reference orders.} Natural is the input order after dense remapping; Random uses seed 7; Degree is ascending by degree with identifier ties; core ordering sorts by core number; smallest-last repeatedly removes a vertex of minimum remaining degree \citep{matula1983smallest}. These last two differ, and smallest-last is the stronger baseline for $\Cpp$ and $k$-clique listing. Split and Check follow \citet{lecuyer2023tailored}. Neigh uses the published VOLT implementation with threshold 0.01 and seeds 1--5. Every candidate is evaluated in both directions, and $R^\star$ is the minimum exact $\Lam$. The reduction of an order is $100(\Lam_{R^\star}-\Lam)/\Lam_{R^\star}$, and $0$ (a tie) when $\Lam_{R^\star}=0$; every certified optimum in Table~\ref{tab:exact} is positive.

\paragraph{\DOVE.} Scores start at $s_v=2i_v/(n-1)-1$, where $i_v\in\{0,\dots,n-1\}$ is the rank of $v$ in the reference order. Adam (learning rate $0.08$, moments $(0.9,0.999)$) minimizes $J/m$ for 200 steps with $\tau_t=0.5(0.04/0.5)^{t/200}$ and $\epsilon_t=\max(0.01,\tau_t^2)$. Scores are standardized after each step, and the exact work of the sorted scores is checked every five steps to keep the best incumbent.

\paragraph{Tangent refinement and \TILS.} T iterates Theorem~\ref{thm:rounding} with the $F_0$ supergradient and identifier tie-breaking until work stops strictly decreasing, for at most 100 rounds. ILS proposes a uniformly random edge, one of its endpoints $u$, and a side (before or after the other endpoint $v$), and moves $u$ to that position. Only edges between $u$ and vertices it passes change direction, and the work change is computed exactly from edges incident to vertices whose forward degree changes. Proposals come in chunks of 256. Moves with negative delta are accepted, zero-delta moves with probability $\rho$, and after $S$ consecutive chunks without improvement the state resets to the incumbent and applies $\kappa$ random moves. The grid $\rho\in\{0,0.3,1\}$, $\kappa\in\{0,10\}$ with $S=50$ was evaluated on six synthetic development graphs, giving mean gains of 9.8, 15.2, 10.8, 16.6, 10.9 and 17.0\% for $(\rho,\kappa)=(0,0),(0,10),(0.3,0),(0.3,10),(1,0),(1,10)$; $(1,10)$ was selected. \TILS spends at most 10\% of its budget on the initial T phase and ends with a charged T pass.

\begin{algorithm}[H]
\caption{Tangent refinement with iterated local search (\TILS)}
\label{alg:tils}
\small
\begin{algorithmic}[1]
\Require graph $G$, reference order $\pi_0$, time budget $B$, parameters $(\rho,\kappa,S)$
\State $\pi^\star\gets\pi_0$ \Comment{exact incumbent, never worse than $\pi_0$}
\Repeat\ $x\gets\dout_{\pi^\star}$; $a\gets\partial F_0(x)$; $\pi\gets\mathrm{sort}(a)$ \Comment{Thm.~\ref{thm:rounding}}
\State \textbf{if} $\Lam(\pi)<\Lam(\pi^\star)$ \textbf{then} $\pi^\star\gets\pi$
\Until{no strict decrease, or 10\% of $B$ used}
\State $\pi\gets\pi^\star$; $c\gets0$
\While{time $<B$ minus the reserved final pass}
 \State draw 256 moves ($u$ next to a neighbor $v$); apply if $\Delta<0$, or if $\Delta=0$ w.p.\ $\rho$
 \State \textbf{if} $\Lam(\pi)<\Lam(\pi^\star)$ \textbf{then} $\pi^\star\gets\pi$; $c\gets0$ \textbf{else} $c\gets c+1$
 \State \textbf{if} $c=S$ \textbf{then} $\pi\gets\pi^\star$ with $\kappa$ random moves; $c\gets0$
\EndWhile
\State final tangent pass on $\pi^\star$ (lines 2--3 until no strict decrease, at most 100 rounds, without the 10\% limit; time reserved in line 6, charged)
\State \Return $\pi^\star$
\end{algorithmic}
\end{algorithm}

\paragraph{Orientation-space search.} Using Corollary~\ref{cor:free}, we searched over orientations with edge flips and directed-path reversals of length up to four, which move one unit of out-degree between the path's endpoints. Exact $F_0$ deltas were verified on 13{,}937 random moves by one implementation and on 6{,}016 by a second, independent one. The search used plateau moves, kicks and periodic tangent rounding. In a separate tuning comparison of path variants on 18 graphs, adjoint-guided and longer paths were slightly worse than single flips, and even the best post-hoc path configuration per graph lost to ILS on every graph (by 0.65--14.9 points). This comparison differs from the ablation of Section~\ref{sec:ablation}, which runs the selected configuration on 16 non-bipartite graphs at budgets of 1 and 10\,s.

\paragraph{$\Cpp$ lower bound tightness.} On all 1{,}245 non-isomorphic graphs with two to seven vertices and at least one edge, the path-reversal bound equals $\min_\pi\Cpp$ on 72 graphs; the mean relative gap is 17.3\% and the maximum 40\% ($K_3$). On the four public development graphs, smallest-last is within 0.42--2.12\% of the bound; on non-regular synthetic graphs within 0.57--8.4\%; on 4-regular graphs within 10.6\%, where an Eulerian orientation attains the bound but no order can.

\paragraph{Licenses.} SNAP datasets are distributed for research use by their maintainers \citep{leskovec2014snap}. VOLT, which provides Neigh, is GPL-3.0 licensed; a small build patch leaves its ordering objective and update rule unchanged, and VOLT is not redistributed. Our code, protocols, run logs and per-graph results are at \repo; Table~\ref{tab:repo} maps its folders to the paper. Saved orders are not included because of their size; the drivers regenerate them from the seeds stored with each result.

\begin{table}[H]
\centering\small
\caption{Contents of the code repository \repo.}
\label{tab:repo}
\begin{tabular}{@{}>{\raggedright\arraybackslash}p{0.33\textwidth}>{\raggedright\arraybackslash}p{0.62\textwidth}@{}}
\toprule
Folder & Contents (sections of the paper)\\
\midrule
\repopath{0_dove_baseline_and_kernels} & \DOVE, reference orders, C++ exact lister, VOLT setup and patch (Section~\ref{sec:relax}, Appendix~\ref{app:protocol})\\
\repopath{1_preregistered_eval}, \repopath{1_preregistered_eval_results} & Pre-registration and its SHA-256, evaluation pipeline, tests, per-graph results, run plans, environment record and run log for the 32 development graphs and the held-out graphs (Section~\ref{sec:experiments}, Appendices~\ref{app:protocol} and~\ref{app:prereg})\\
\repopath{2_secondary_analyses}, \repopath{2_secondary_analyses_results} & Secondary analyses, kernels and per-graph results (Section~\ref{sec:posthoc}, Appendix~\ref{app:posthoc})\\
\repopath{3_heldout_with_neigh} & Protocol, hashes, protocol history, driver and results of the Neigh-inclusive held-out analysis (Appendix~\ref{app:neighsupp})\\
\repopath{4_exact_small_instances} & Exact optima of 26 graphs with at most seven vertices by enumerating all orders, validation and geometry checks\\
\bottomrule
\end{tabular}
\end{table}

\paragraph{Graphs and timing.} Development synthetic graphs come from six families (Erd\H{o}s--R\'enyi, Barab\'asi--Albert, Watts--Strogatz, powerlaw-cluster, stochastic block, core--periphery; $n=145$--$270$); the prospective group (bipartite, random geometric, random regular) was generated after all tuning, and the small certified instances use seven families (Erd\H{o}s--R\'enyi, Barab\'asi--Albert, Watts--Strogatz, powerlaw-cluster, stochastic block, geometric, regular). The four public development graphs are Bitcoin-Alpha, P2P-Gnutella05, Loc-Brightkite and GitHub. Small-instance, ablation and random-graph experiments ran on a Linux machine (Intel Xeon 2.1\,GHz, 2 vCPUs; Python 3.11, NumPy 2.4, SciPy 1.17, Numba 0.67, HiGHS 1.15, GCC 13) with one thread per run. The 32-graph and held-out evaluations ran on a Windows machine (AMD Ryzen 5 3600; Python 3.11.9, NumPy 2.1, SciPy 1.15, Numba 0.61). In the small-instance, ablation and random-graph experiments, a candidate counts only if it is exported and checked before the deadline.

\paragraph{Budget accounting in the public-graph runs.} In the 32-graph and held-out runs, the \TILS clock starts before any array is built and stops when the final order is returned, so it covers the initial tangent phase, ILS and the final pass; verification and export are not charged to any method. ILS stops once the elapsed time reaches $B-t_{\mathrm f}$, where $t_{\mathrm f}$ is the time reserved for the final pass (the first protocol amendment, Appendix~\ref{app:prereg}); the last batch of proposals is not interrupted. The final pass then runs until no strict decrease (at most 100 rounds), and any time beyond $B$ is reported as overrun. The main (returned-order) analysis therefore permits these reported overruns, whereas the capped analyses below rescore the saved orders so that no run uses time beyond $B$; the two estimate different quantities, and neither is a rerun with an enforced deadline. Averaged over runs, \TILS used 0.5\% less than $B$ on the held-out graphs and 0.1\% less on the 32 development graphs. Of the 123 runs (41 graphs, three seeds), 44 ended after $B$: 38 by less than 1\% of $B$, five by 1.1--1.6\% on the 32 development graphs with budgets below 0.3\,s, and one by 3.7\% (Wiki-Talk, where the final pass hit 100 rounds). On sub-second budgets the last ILS batch can itself end a few milliseconds after $B$ (23 runs, at most 6\,ms). The final pass lowers work in 108 of 123 runs, by 0.10 points of reduction on average (0.05 on the 32 development graphs, 0.28 on the held-out graphs, at most 1.02). Of the total, the initial tangent phase accounts for 4.43 (32 development graphs) and 7.71 (held-out) points and ILS for 1.22 and 6.84; without the prefix, ILS alone reaches the same level (Section~\ref{sec:posthoc}). In the 44 over-budget runs, the returned order did not yet exist at $B$. Our runs do not log timestamped incumbents, so the strict rule is a lower bound for a hard-capped implementation, not an estimate of it; logging the incumbent at $B$ would settle this. Two capped scores bound the effect of overruns. Dropping the final pass in every over-budget run gives means of 5.69\% (32 development graphs) and 14.78\% (held-out), with \TILS against \DOVE at 31/1/0 and 9/0/0. The strict rule, the conservative analysis of Section~\ref{sec:experiments}, counts only saved orders known to exist at $B$ (the ILS incumbent if ILS ended by $B$, otherwise the order after the initial tangent phase), and may therefore miss a better incumbent that existed at $B$ but was not saved. It gives 5.57\% and 14.17\% (95\% bootstrap intervals \mbox{[3.79, 7.52]\%} and \mbox{[9.07, 20.36]\%}) and 27/1/4 and 8/0/1; the losses come from runs whose ILS phase ended at most 6\,ms late on budgets below 2\,s and fall back to the tangent-phase order.

\section{Pre-Registered Held-Out Evaluation}\label{app:prereg}
The protocol was fixed and hashed before any held-out graph was downloaded. On Wiki-Talk, the candidate orders of the reference span a factor of 4.6 in work: ascending degree gives 57,386,025 candidate tests (the reference), the input order 257,969,291, and the reversed smallest-last (degeneracy) order 264,800,723.
\paragraph{Graphs.} Email-Eu-core, CollegeMsg, LastFM-Asia, Deezer-Europe, Facebook page--page, Amazon0302, Wiki-Talk, com-Youtube, web-Google and roadNet-CA (communication, social, product, web and road networks), all from SNAP \citep{leskovec2014snap}. None of these files was used before the protocol was fixed; two share sources with the 32 development graphs (Amazon0302 and Com-Amazon are different crawls of Amazon co-purchasing data; Email-Eu-core and Email-EuAll are different datasets of email at a European research institution). The first two columns of each file are read, comments and headers are skipped, and the common preprocessing is applied: remove self-loops, symmetrize, deduplicate and remap non-isolated vertices densely. A graph that fails to download or parse is reported as missing and never replaced. In the executed run, Facebook page--page was missing (HTTP 404 on all five attempts); no VOLT binary was available, so Neigh did not enter the held-out reference; there was no other deviation. Two amendments were made before any held-out file was downloaded: one fixed the time reserved for the final tangent pass inside the budget, and one added the bootstrap interval and sign test to the reported statistics.
\paragraph{Methods and metrics.} The reference, T, \DOVE and \TILS are as above; \TILS receives \DOVE's measured time on each graph and uses seeds 1--3. Neigh enters the reference only if the VOLT binary is available on the evaluation machine, and this is reported. We report per-graph reduction relative to $R^\star$, the mean, median, improved/tied/regressed counts, the geometric-mean work ratio, and paired wins/ties/losses of \TILS against \DOVE and T. Exact work and triangle counts are recomputed for every saved order. Graphs with more than 20M edges would be skipped for optimization; none of the listed graphs is that large.
\paragraph{Record and its limits.} The repository contains the protocol (\repopath{1_preregistered_eval/PREREGISTRATION.md}, which records its writing time as 2026-09-28 05:58 UTC), its SHA-256, the evaluation code, which refuses to start the held-out part if the hash differs, and the run log, which records the hash check at 07:44 UTC, before the first download. These records were produced by our own scripts and made public after the run, and the protocol was not deposited with an independent registry. A checksum identifies a file but does not by itself show when the file existed, so these records document the protocol and its execution without independently timestamping them. The Neigh-inclusive analysis (Appendix~\ref{app:neighsupp}) has its own hashed protocol and protocol history, the reuse study (Appendix~\ref{app:reuse}) was frozen with a checksum before its full run, and all analyses labeled secondary or post hoc are exploratory.

\section{Fixed-Temperature Landscape Diagnostic}\label{app:fixedtau}
Theorem~\ref{thm:landscape} concerns unnormalized scores at a fixed temperature, whereas \DOVE anneals $\tau$ and standardizes the scores after every step. To separate the two effects, we ran gradient descent and Adam on $J$ with $\tau=0.1$ and $\epsilon=0.01$ fixed for 20{,}000 steps, from ascending degree, on an Erd\H{o}s--R\'enyi graph $G(2000,8/1999)$ (largest component: $n=1998$, $m=7938$) and a Barab\'asi--Albert graph ($n=2000$, four edges per new vertex, $m=7984$); Figure~\ref{fig:fixedtau} shows the ER graph. Without standardization, Adam saturates as predicted on both graphs: $\|\nabla J\|$ falls by more than eight orders of magnitude, the score range $\max_v s_v-\min_v s_v$ grows from $2$ to $14.1$ and $17.4$, 99\% of the edges have $p_{uv}(1-p_{uv})<10^{-6}$, and the Rayleigh quotient of the numerically computed Fiedler vector is below $10^{-20}$. This value is a floating-point estimate, not a certified bound: saturated weights can underflow (at a score gap of $40\tau$, $\sigma$ rounds to 1 in double precision although the exact weight $e^{-40}/(1+e^{-40})^2\approx4.2\times10^{-18}$ is positive), and a rigorous upper bound would evaluate $w(z)=e^{-|z|}/(1+e^{-|z|})^2$ directly, project the test vector onto the zero-sum subspace of each component, and control the rounding error. Theorem~\ref{thm:landscape} itself is exact; gradient descent behaves the same way on the ER graph (five orders of magnitude) but had not converged on the BA graph. With per-step standardization the score range stays bounded and $\|\nabla J\|$ levels off at 0.3--3.9, so the iterates are not stationary points of $J$ and part~(ii) does not apply to them: gradient descent approaches stationary points of the norm-constrained problem, while Adam keeps moving. In every variant, including the annealed schedule, the order exported from the scores ends 14--20\% above the work that iterated local search reaches on the same graph, and a final tangent pass recovers at most 0.6\%.

\begin{figure}[H]
\centering
\includegraphics[width=0.55\textwidth]{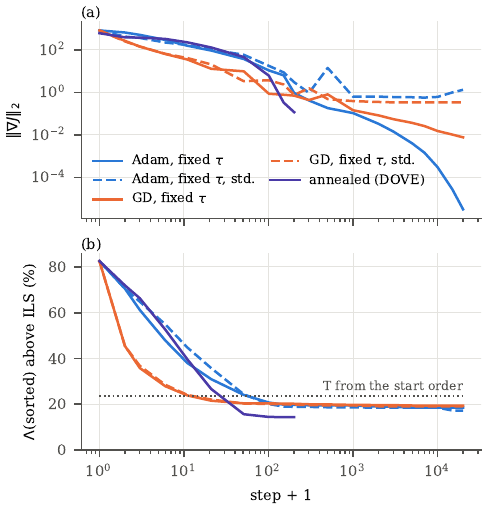}
\caption{Erd\H{o}s--R\'enyi graph ($n=1998$ after taking the largest component, $m=7938$), fixed $\tau=0.1$, $\epsilon=0.01$ unless annealed. (a) $\|\nabla J\|_2$ per step. (b) Exact work of the sorted scores, in \% above the best order found by iterated local search on the same graph; dotted: tangent refinement from the start order. Solid: unnormalized scores; dashed: scores standardized after each step.}
\label{fig:fixedtau}
\end{figure}

\section{Secondary Analyses on Public Graphs (Not Pre-Registered)}\label{app:posthoc}
These comparisons were run after the pre-registered analysis, from the saved results and cached graphs, without modifying them. Coloring orders are greedy colorings with vertices processed in reverse smallest-last order (SL) or by descending degree (DEG), with color classes as blocks in ascending or descending color order. A descending order induces the same orientation as the reversal of the ascending one, so the eight orders have at most four distinct work values. ILS-T is ILS from $R^\star$ without the tangent prefix, followed by the final tangent pass, with the same budget, seeds and parameters as \TILS. C+ILS-T is ILS-T started from a coloring order, with its construction charged. Kernel times are medians of seven alternating repetitions of a Numba port of the adaptive-forward lister; they are not the C++ bitmap kernel of our baseline code (\repopath{0_dove_baseline_and_kernels} in the repository). The intervals in Section~\ref{sec:posthoc} are percentile bootstraps over graphs of the per-graph ratio of median times (seed-1 orders); per-graph intervals paired over the seven repetitions lie entirely below 1 on 15 of 41 graphs (7 of 9 held-out) and above 1 on 9 (1 held-out). Tables~\ref{tab:ph-colour}--\ref{tab:ph-timing} list the held-out graphs; the corresponding tables for the 32 development graphs are in the repository (\repopath{2_secondary_analyses_results}).

\begin{table}[H]
\centering\small
\caption{Held-out graphs: coloring orders, reduction of work relative to $R^\star$ (\%; negative = more work), the best variant (rev.: reversed), and the construction time of DEG-desc (s).}
\label{tab:ph-colour}
\begin{tabular}{lrrrrrlr}
\toprule
Graph & $\Lam_{R^\star}$ & SL-asc & SL-desc & DEG-asc & DEG-desc & Best & $t$\\
\midrule
Email-Eu-core & 230.1k & $-$9.84 & $-$26.70 & $-$5.80 & $-$28.83 & DEG-asc & 0.00 \\
CollegeMsg & 69.0k & $-$82.40 & $-$81.51 & $-$47.35 & $-$68.41 & DEG-asc & 0.00 \\
LastFM-Asia & 96.0k & $-$15.83 & $-$33.35 & $-$13.95 & $-$46.58 & DEG-asc & 0.01 \\
Deezer-Europe & 225.5k & $-$15.85 & $-$23.98 & $-$18.59 & $-$32.83 & SL-asc & 0.03 \\
Amazon0302 & 1.376M & $-$22.95 & $-$17.36 & $-$22.13 & $-$24.58 & SL-asc, rev. & 0.35 \\
Wiki-Talk & 57.386M & $-$1.53 & $-$316.27 & $-$1.84 & $-$324.29 & SL-asc & 1.81 \\
com-Youtube & 21.995M & $-$16.54 & $-$76.42 & 2.11 & $-$89.04 & DEG-asc & 1.29 \\
web-Google & 16.783M & $-$24.38 & $-$27.60 & $-$9.65 & $-$30.41 & DEG-asc & 3.39 \\
roadNet-CA & 861.6k & 9.76 & 9.71 & $-$11.11 & $-$11.03 & SL-asc & 1.01 \\
 
\bottomrule
\end{tabular}
\end{table}

\begin{table}[H]
\centering\small
\caption{Held-out graphs: reduction (\%) relative to $R^\star$ of \TILS, ILS without the tangent prefix (ILS-T), and ILS-T started from a coloring order (C+ILS-T); mean$\pm$SD over seeds 1--3.}
\label{tab:ph-ils}
\begin{tabular}{lrrrr}
\toprule
Graph & \TILS & ILS-T & C+ILS-T (DEG-desc) & C+ILS-T (best)\\
\midrule
Email-Eu-core & 9.52 & 9.61$\pm$0.07 & 9.56$\pm$0.05 & 9.76$\pm$0.01 \\
CollegeMsg & 21.29 & 21.29$\pm$0.00 & 21.18$\pm$0.08 & 21.26$\pm$0.02 \\
LastFM-Asia & 8.45 & 8.46$\pm$0.03 & 8.72$\pm$0.02 & 8.38$\pm$0.31 \\
Deezer-Europe & 9.34 & 9.39$\pm$0.15 & 9.33$\pm$0.16 & 9.05$\pm$0.17 \\
Amazon0302 & 8.73 & 8.82$\pm$0.04 & 8.50$\pm$0.02 & 8.73$\pm$0.05 \\
Wiki-Talk & 13.04 & 12.26$\pm$0.13 & 11.09$\pm$0.02 & 11.86$\pm$0.09 \\
com-Youtube & 24.43 & 24.63$\pm$0.02 & 26.21$\pm$0.37 & 28.21$\pm$0.00 \\
web-Google & 5.16 & 5.18$\pm$0.00 & 3.67$\pm$0.06 & 4.46$\pm$0.00 \\
roadNet-CA & 33.50 & 33.72$\pm$0.01 & 33.10$\pm$0.00 & 35.85$\pm$0.06 \\
 
\bottomrule
\end{tabular}
\end{table}

\begin{table}[H]
\centering\small
\caption{Held-out graphs: median kernel time of $R^\star$ (ms), kernel-time ratios to $R^\star$, and ordering times (s).}
\label{tab:ph-timing}
\begin{tabular}{lrrrrrrr}
\toprule
Graph & $R^\star$ ms & T & \DOVE & \TILS & $t_{\rm T}$ & $t_{\rm DOVE}$ & $t_{\rm T+ILS}$\\
\midrule
Email-Eu-core & 0.47 & 1.017 & 1.019 & 1.007 & 0.002 & 0.3 & 0.3 \\
CollegeMsg & 0.30 & 0.982 & 0.935 & 0.929 & 0.001 & 0.2 & 0.2 \\
LastFM-Asia & 0.68 & 1.011 & 0.998 & 0.992 & 0.003 & 0.4 & 0.4 \\
Deezer-Europe & 2.36 & 1.014 & 0.993 & 0.990 & 0.011 & 1.9 & 1.9 \\
Amazon0302 & 21.13 & 0.995 & 0.988 & 0.985 & 0.172 & 34.1 & 34.0 \\
Wiki-Talk & 305.94 & 0.984 & 1.006 & 0.974 & 28.890 & 228.8 & 230.5 \\
com-Youtube & 136.12 & 0.941 & 0.927 & 0.912 & 4.567 & 124.1 & 121.6 \\
web-Google & 260.61 & 0.994 & 0.938 & 0.961 & 5.231 & 168.6 & 165.0 \\
roadNet-CA & 46.34 & 0.969 & 0.959 & 0.909 & 0.489 & 130.4 & 130.2 \\
 
\bottomrule
\end{tabular}
\end{table}

\section{Held-Out Reference with Neigh (Supplementary)}\label{app:neighsupp}
\paragraph{Protocol.} This run is not part of the pre-registration. Its analyses were fixed in a protocol file (\repopath{3_heldout_with_neigh/SUPP_PROTOCOL.md}, hashed in \repopath{SHA256SUMS.txt}) before any held-out Neigh order existed, and it only reads the original results. An amendment, also recorded before any held-out Neigh order existed, moved its execution from the Windows evaluation machine to the Linux machine of the synthetic experiments (2 vCPUs, one thread per run). There, VOLT \citep{lecuyer2023tailored} is built by the route of our baseline code (\repopath{scripts/setup_volt.sh}: a pinned VOLT commit and a reproducibility patch); this build reproduces all 20 stored Neigh orders of the four public development graphs exactly. The graphs are the caches written by the original run and are checked against it (dense-edge SHA-256 and the work of every saved order). A Windows build route prepared for the evaluation machine is documented in the repository but was not used. Neigh uses threshold 0.01 and seeds 1--5 (\NSNeighSeeds{} runs succeeded), each order in both directions. The strengthened reference $R^\star_N$ is the lowest-work order among the original $R^\star$ and these ten orders (ties keep $R^\star$).

\paragraph{Reruns.} If $R^\star_N\ne R^\star$, T, \DOVE and \TILS are rerun from $R^\star_N$ with the pre-registered code, seeds and budget rule: \TILS receives the new \DOVE time, measured on the Linux machine. For the five largest changed graphs this time is 21--46\% shorter than on the evaluation machine, which likely explains why the reruns from $R^\star_N$ end below the original \TILS orders on Deezer-Europe, Amazon0302 and roadNet-CA. If $R^\star_N=R^\star$, the original runs are reused (\NSReReused{} graph, Wiki-Talk, where every Neigh order is worse). Every new order is verified as in the main run (\NSCheck).

\paragraph{Results.} Against $R^\star_N$, the original orders reach \NSFrozenTMean\% (T), \NSFrozenDoveMean\% (\DOVE) and \NSFrozenTILSMean\% (\TILS; 95\% bootstrap interval \mbox{\NSFrozenTILSCI}, improved/tied/regressed \NSFrozenTILSITR). The reruns give \NSReTMean\% (T), \NSReDoveMean\% (\DOVE, which returns its start order on Amazon0302, web-Google and roadNet-CA) and \NSReTILSMean\% (\TILS; \mbox{\NSReTILSCI}; improved/tied/regressed \NSReTILSITR). Wins/ties/losses of \TILS against \DOVE and against T are \NSReTILSvsDove{} and \NSReTILSvsT{} (sign tests, $p=\NSReTILSvsDoveP$ each); Table~\ref{tab:neighsupp} lists the graphs. Of the \NSReRuns{} \TILS runs scored against $R^\star_N$ (reruns and reused original runs), \NSReOverRuns{} ended after $B$ (largest overrun \NSReMaxOver, Wiki-Talk). Rescoring them conservatively by a saved order known to exist at $B$ gives \NSReTILSCappedMean\% (\mbox{\NSReTILSCappedCI}), and \TILS still beats \DOVE and T on all nine graphs (9/0/0 each, two-sided $p=0.0039$); dropping only their final pass gives \NSReTILSNoFinalMean\%. As in Appendix~\ref{app:protocol}, these are rescorings of saved orders, not reruns with an enforced deadline.

\paragraph{Interpretation.} Rescoring the original orders isolates the effect of the stronger reference ($14.83\%\to\NSFrozenTILSMean\%$). The reruns additionally change the start order, the machine and the budgets, so the difference between \NSFrozenTILSMean\% and \NSReTILSMean\% does not estimate the benefit of restarting from Neigh, and these comparisons do not establish hardware-independent superiority of a time-limited optimizer. A per-run table in the repository (\repopath{3_heldout_with_neigh/results_linux/supp_runs.csv}) marks each record as original-order rescoring, Linux rerun or reused evaluation-machine run; each \TILS run scored against $R^\star_N$ also lists its budget, times and strict-cutoff incumbent.

\paragraph{Kernel time.} Kernel times were measured on the Linux machine (21 alternating repetitions; geometric mean over graphs of the per-graph paired ratios). \TILS relative to $R^\star_N$: \NSTimeReGeo{} (\mbox{\NSTimeReCI}; work ratio \NSTimeReWorkGeo). The original \TILS orders relative to the original $R^\star$: \NSTimeFrozenGeo{} (\mbox{\NSTimeFrozenCI}; work ratio \NSTimeFrozenWorkGeo), whereas the evaluation machine measured \PHHeldTimeRatioTILS{} (\mbox{[0.938, 0.984]}) for the same orders (Section~\ref{sec:posthoc}). The held-out kernel-time gain was therefore not reproduced on a second machine, and there is none against $R^\star_N$.

\IfFileExists{supp_heldout_rows.tex}{\def\supprowsfile{supp_heldout_rows.tex}}{\def\supprowsfile{supp_heldout_rows_default.tex}}
\begin{table}[H]
\centering\small\setlength{\tabcolsep}{3pt}
\caption{Supplementary held-out run with Neigh (not pre-registered). $R^\star_N$: winning reference candidate; Gap: reduction of $R^\star_N$ relative to the original $R^\star$ (\%); Original: original \TILS orders relative to $R^\star_N$; Rerun: methods rerun from $R^\star_N$ (\%, relative to $R^\star_N$; \TILS mean$\pm$SD over three seeds). Reruns on the Linux machine; $^\dagger$: $R^\star_N=R^\star$, original evaluation-machine runs reused.}
\label{tab:neighsupp}
\begin{tabular}{llrrrrr}
\toprule
& & & Original & \multicolumn{3}{c}{Rerun from $R^\star_N$}\\
Graph & $R^\star_N$ & Gap & \TILS & T & \DOVE & \TILS\\
\midrule
\multicolumn{7}{c}{\pend{} (supplementary run pending)} \\
 
\bottomrule
\end{tabular}
\end{table}

\section{Budget Curve}\label{app:budget}
To check that the advantage of \TILS does not depend on the single 200-step schedule, we ran, on the four public development graphs, \DOVE with 25, 50, 100, 200 and 400 steps (the temperature is annealed over the steps) from the same $R^\star$, and gave \TILS the measured \DOVE time at each step count (\DOVE: one run, seed 7; \TILS: three seeds; Linux machine, one thread; Table~\ref{tab:budget}). \TILS is ahead at every budget on every graph and seed, and \TILS with the 25-step budget (one sixth to one eighth of the default time) is ahead of \DOVE with 400 steps on all four graphs. \DOVE is not monotone in its schedule (GitHub: 4.15\% at 200 steps, 2.38\% at 400).

\begin{table}[H]
\centering\small\setlength{\tabcolsep}{3pt}
\caption{Budget curve: reduction (\%) relative to $R^\star$ with \DOVE at 25--400 steps and \TILS at the matched measured time $B$ (mean of three seeds).}
\label{tab:budget}
\begin{tabular}{llrrrrr}
\toprule
Graph & & 25 & 50 & 100 & 200 & 400\\
\midrule
Bitcoin-Alpha & \DOVE & 6.02 & 8.30 & 8.86 & 9.07 & 9.10 \\
 & \TILS & 9.30 & 9.40 & 9.45 & 9.49 & 9.51 \\
 & $B$ (s) & 0.04 & 0.08 & 0.15 & 0.29 & 0.60 \\
P2P-Gnutella05 & \DOVE & 2.31 & 2.77 & 2.86 & 2.83 & 2.87 \\
 & \TILS & 3.74 & 4.27 & 4.60 & 4.85 & 4.85 \\
 & $B$ (s) & 0.08 & 0.16 & 0.32 & 0.64 & 1.23 \\
Loc-Brightkite & \DOVE & 5.93 & 7.75 & 8.39 & 8.62 & 8.66 \\
 & \TILS & 9.05 & 9.13 & 9.33 & 9.43 & 9.52 \\
 & $B$ (s) & 0.62 & 0.94 & 1.91 & 3.68 & 7.28 \\
GitHub & \DOVE & 0.00 & 3.53 & 4.01 & 4.15 & 2.38 \\
 & \TILS & 6.36 & 6.37 & 6.39 & 6.40 & 6.42 \\
 & $B$ (s) & 0.62 & 1.31 & 2.46 & 4.84 & 9.59 \\
\bottomrule
\end{tabular}
\end{table}

\section{A Second Concave Objective: Minimum-Entropy Orientation}\label{app:entropy}
Minimum-entropy orientation \citep{cardinal2008entropy} assigns each edge to one endpoint and minimizes the entropy (in bits) of the load distribution $x_v/m$. With $x$ the forward-degree vector of an order, the objective is $F(x)=\sum_v f(x_v)$ with $f(t)=-(t/m)\log_2(t/m)$ and $f(0)=0$. It is continuous and concave on $\ZG$, so some order attains the minimum. Its derivative $f'(t)=-(\log_2(t/m)+1/\ln2)/m$ is finite for $t>0$ but $f'(0^+)=+\infty$, and every order has vertices of forward degree zero (its sinks), so Theorem~\ref{thm:rounding}, which needs a finite supergradient, does not apply as stated. The following boundary extension restricts the tangent step to the face on which zero loads stay zero.

\begin{proposition}[Tangent step with zero loads]\label{prop:entropyboundary}
Let $F(x)=\sum_vf(x_v)$ with $f$ continuous and concave on $[0,\infty)$ and differentiable on $(0,\infty)$. Let $x=\dout_\pi$ for an order $\pi$, $Z=\{v:x_v=0\}$, $U=V\setminus Z$ and $a_v=f'(x_v)$ for $v\in U$. Let $\pi'$ list $U$ by ascending $(a_v,\mathrm{rank}_\pi(v))$, followed by $Z$ in $\pi$-order, and let
\[
\delta=\sum_{v\in U}a_vx_v-\!\!\sum_{\substack{uv\in E\\u,v\in U}}\!\!\min(a_u,a_v)-\!\!\sum_{\substack{uv\in E\\u\in U,\,v\in Z}}\!\!a_u .
\]
Then $\delta\ge0$ and $F(\dout_{\pi'})\le F(x)-\delta$.
\end{proposition}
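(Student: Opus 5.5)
The plan is to run the proof of Theorem~\ref{thm:rounding}, but only on the face of $\ZG$ where the loads of $Z$ stay zero. On that face only the coordinates in $U$ move, and there $a$ is a finite supergradient. Throughout, write $y=\dout_{\pi'}$.

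\emph{Step 1: structure of $Z$ and of $x$.} $Z$ is the set of sinks of $\pi$, so it is independent: the earlier endpoint of any edge has positive forward degree, as in Proposition~\ref{prop:bounds}a. Every edge therefore lies inside $U$ or joins $U$ to $Z$. Since $x_v=0$ for $v\in Z$, every $U$--$Z$ edge is oriented by $\pi$ into $Z$ and gives its whole unit to its $U$ endpoint. Writing $x$ as a sum of edge splits, as in the proof of Theorem~\ref{thm:rounding}, gives
\[
\textstyle\sum_{v\in U}a_vx_v=\sum_{uv\in E,\,u,v\in U}[p_ea_u+(1-p_e)a_v]+\sum_{uv\in E,\,u\in U,\,v\in Z}a_u .
\]
Each bracket is at least $\min(a_u,a_v)$, so $\delta\ge0$.

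\emph{Step 2: the rounded order $y$.} In $\pi'$ all of $Z$ comes after $U$, and $Z$ is independent, so every vertex of $Z$ is again a sink: $y_Z=0$. Every $U$--$Z$ edge again contributes one unit to its $U$ endpoint. Each edge inside $U$ is oriented from the endpoint with the smaller $a$; at ties the rank tie-break decides, and the coefficients are equal anyway. Hence
\[
\textstyle\sum_{v\in U}a_vy_v=\sum_{uv\in E,\,u,v\in U}\min(a_u,a_v)+\sum_{uv\in E,\,u\in U,\,v\in Z}a_u ,
\]
which means $\sum_{v\in U}a_v(y_v-x_v)=-\delta$.

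\emph{Step 3: the supergradient inequality on the face.} For $v\in U$, $x_v\ge1$ is an interior point of $[0,\infty)$, so $a_v=f'(x_v)$ is finite. Concavity of $f$ on $[0,\infty)$ then gives $f(t)\le f(x_v)+a_v(t-x_v)$ for all $t>0$, and by continuity also at $t=0$. For $v\in Z$ we have $f(y_v)=f(0)=f(x_v)$. Summing over $V$:
\[
F(y)\le F(x)+\textstyle\sum_{v\in U}a_v(y_v-x_v)=F(x)-\delta .
\]

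I expect the main obstacle to be the boundary behaviour of $f$. The argument must never use a supergradient at a zero coordinate, which is exactly where $f'(0^+)=+\infty$ for the entropy. Step 2 keeps $Z$ at load zero, and Step 3 only needs the tangent inequality at the interior points $x_v\ge1$, extended to $t=0$ by continuity. The remaining checks are the independence of $Z$ and the tie-handling, which mirror Theorem~\ref{thm:rounding}.
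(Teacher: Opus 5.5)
Your proof is correct and follows the paper's argument. Independence of $Z$ keeps $y_Z=x_Z=0$. The edgewise comparison gives $\delta=\sum_{v\in U}a_v(x_v-y_v)\ge0$; the paper phrases this as $y$ minimizing $\sum_{v\in U}a_vz_v$ over the face $\{z\in Z_G:z_Z=0\}$ that contains $x$. The tangent inequality is then used only at the positive coordinates $x_v$. Your continuity step at $t=0$ is harmless but unnecessary, since concavity on $[0,\infty)$ already gives the inequality at the endpoint.
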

\begin{proof}
$Z$ is independent, because the earlier endpoint of any edge has positive forward degree. In $\pi'$ every vertex of $Z$ comes after all its neighbors, so $y=\dout_{\pi'}$ has $y_Z=0$, each edge between $U$ and $Z$ is charged to its endpoint in $U$, and each edge inside $U$ is charged to its endpoint with smaller $a$. Hence $\sum_{v\in U}a_vy_v$ equals the two sums subtracted in $\delta$. This is the minimum of $\sum_{v\in U}a_vz_v$ over the face $\{z\in\ZG:z_Z=0\}$, on which every edge between $U$ and $Z$ puts its whole unit on its endpoint in $U$ and every edge inside $U$ contributes at least $\min(a_u,a_v)$; $x$ lies on this face, so $\delta=\sum_{v\in U}a_v(x_v-y_v)\ge0$. For $v\in U$, concavity of $f$ on $[0,\infty)$ and differentiability at $x_v>0$ give $f(t)\le f(x_v)+a_v(t-x_v)$ for all $t\ge0$. Summing over $U$ and using $y_Z=x_Z=0$ gives $F(y)\le F(x)+\sum_{v\in U}a_v(y_v-x_v)=F(x)-\delta$.
\end{proof}
No infinite quantity enters $\delta$. Our implementation realizes $\pi'$ by giving zero-load vertices the key $+\infty$ and breaking ties by the current rank, and it accepts a step only on a strict decrease of $F$. For the relaxation, after isolated vertices are removed, every finite score vector gives positive soft degrees, where $F$ is differentiable, so Theorem~\ref{thm:landscape}(ii)--(iii) apply on this open domain; (ii) yields a global maximizer over $\ZG$ by continuity. The greedy algorithm analyzed by \citet{cardinal2008entropy} (additive error at most $\log_2e\approx1.44$ bits, which is tight; they also give a different linear-time algorithm within one bit) repeatedly takes the vertex covering most uncovered edges, which is itself an order. We compare random orders, tangent refinement from them, greedy, tangent refinement from greedy, and Adam on the unconstrained pairwise-logistic relaxation at fixed $\tau=0.1$ (3000 steps, learning rate 0.05, initial scores i.i.d.\ $N(0,1)$), rounded by sorting and then refined; three seeds.

As Theorem~\ref{thm:landscape} predicts, the relaxation only saturates. With $N(0,1)$ initial scores and $\tau=0.1$ it starts nearly saturated (on Bitcoin-Alpha, seed 1, numerical estimates of $\lambda_*$ are $2.5\times10^{-12}$ at step 0 and $1.3\times10^{-13}$ at the end; see Appendix~\ref{app:fixedtau} for their limits); on all four graphs the gradient norm falls by more than three orders of magnitude while the score range keeps growing, and the rounded orders end 0.05--0.14 bits above greedy, which tangent refinement barely changes. From near-uniform scores ($N(0,0.01^2)$, Bitcoin-Alpha, three seeds) the dynamics are visible: the estimate of $\lambda_*$ falls from 0.18 to $3.5\times10^{-9}$ within 100 steps and below numerical resolution afterwards, the gradient norm falls from 0.92 to $9.5\times10^{-5}$, and the rounded order (8.356 bits, 8.355 after refinement) is as good as greedy (8.357). Saturation thus says nothing about the quality of the rounded order, in either direction, which is the content of Theorem~\ref{thm:landscape}(iii). Tangent refinement alone brings random orders to within 0.06--0.19 bits of greedy in 9--19 sorts, and the greedy order is a fixed point of this step on every graph: $\pi'=\pi$ and $\delta=0$ exactly (a zero gap is not a certificate of global optimality; Appendix~\ref{app:proofs}). These comparisons concern the interpretation of gradients and certificates; they do not establish a better entropy optimizer than greedy.

\begin{table}[H]
\centering\small\setlength{\tabcolsep}{2.5pt}
\caption{Minimum-entropy orientation: entropy of the load distribution (bits; lower is better), mean of three seeds where random. +T: after tangent refinement of the order to its left; Relax.: relaxation from $N(0,1)$ initial scores, rounded by sorting.}
\label{tab:entropy}
\begin{tabular}{lrrrrrrrr}
\toprule
Graph & $n$ & $m$ & Random & +T & Greedy & +T & Relax. & +T\\
\midrule
Bitcoin-Alpha & 3,783 & 14,124 & 9.87 & 8.46 & 8.357 & 8.357 & 8.423 & 8.419 \\
P2P-Gnutella05 & 8,846 & 31,839 & 12.14 & 11.50 & 11.311 & 11.311 & 11.455 & 11.453 \\
GitHub & 37,700 & 289,003 & 13.15 & 11.33 & 11.270 & 11.270 & 11.319 & 11.316 \\
Loc-Brightkite & 58,228 & 214,078 & 13.89 & 12.88 & 12.766 & 12.766 & 12.845 & 12.839 \\
\bottomrule
\end{tabular}
\end{table}

\section{Degree-Preserving Null Ensembles}\label{app:nullens}
Network-motif analysis compares the number of triangles and other small subgraphs in a network with an ensemble of randomizations that preserve every vertex degree \citep{milo2002motifs,alon2007motifs}. For each of the four public development graphs we drew five such randomizations by double-edge swaps ($10m$ successful swaps, at most $100m$ attempts, NetworkX \citep{hagberg2008networkx}; degree sequences checked) and measured the work of ascending and descending degree, Split (better direction), smallest-last, and \TILS with a 1\,s budget from the best of these orders (seed 1; Linux machine). Table~\ref{tab:nullens} also reports the limit of Theorem~\ref{thm:mf} for ascending and descending degree, evaluated for the empirical degree distribution (4 quadrature nodes per class), although these distributions violate its bounded-degree assumption. This analysis is exploratory and not pre-registered; ten successful swaps per edge and five randomizations need not give equilibrated, independent samples from the uniform null, and the graphs are not biological networks.

The limit matches the sampled work to within 0.3\% on P2P-Gnutella05 and 4.3\% on Loc-Brightkite, but overestimates it by 18--26\% on Bitcoin-Alpha and 29--97\% on GitHub. The error grows with the ratio of the maximum degree to the structural cutoff $\sqrt{2m}$ \citep{boguna2004cutoffs}, beyond which a simple graph cannot keep the degrees of adjacent vertices independent, as the configuration model assumes. The V-shaped Split order beats ascending degree by 9.8\% on P2P-Gnutella05, as the limit predicts (V layout: 9.4\% below ascending), but by only 0.5\% on Bitcoin-Alpha and 1.6\% on Loc-Brightkite, and it is 1.9\% worse on GitHub. On every ensemble, \TILS removes 16.5--31.4\% of the work of the best degree-based or degeneracy order, so the gain from looking beyond degree is not an artifact of the structure of the original networks. Here each randomization is optimized separately; Appendix~\ref{app:reuse} tests whether one optimized order can be reused along a chain of randomizations.

\begin{table}[H]
\centering\small\setlength{\tabcolsep}{2pt}
\caption{Degree-preserving null ensembles (five randomizations): mean work ($\times10^3$) of ascending (Asc) and descending (Desc) degree, sampled and limit (MF) values, Split (better direction), smallest-last (SL), and \TILS with 1\,s (reduction relative to the best of the standard orders, \%). $r=d_{\max}/\sqrt{2m}$.}
\label{tab:nullens}
\begin{tabular}{lrrrrrrrr}
\toprule
Graph & $r$ & Asc & MF & Desc & MF & Split & SL & \TILS\\
\midrule
Bitcoin-Alpha & 3.04 & 87.8 & 104.0 & 152.4 & 191.8 & 87.4 & 95.4 & 68.1 (21.9) \\
P2P-Gnutella05 & 0.35 & 98.5 & 98.7 & 120.9 & 121.2 & 88.9 & 128.9 & 61.0 (31.4) \\
Loc-Brightkite & 1.73 & 1,407.5 & 1,456.4 & 2,466.2 & 2,571.9 & 1,384.8 & 1,491.3 & 1,040.8 (24.8) \\
GitHub & 12.44 & 3,563.8 & 4,606.4 & 6,409.5 & 12,655.8 & 3,632.3 & 3,707.2 & 2,977.6 (16.5) \\
\bottomrule
\end{tabular}
\end{table}

\section{Reusing Orders along a Null-Model Chain}\label{app:reuse}
\paragraph{Design (pre-registered).} Motif analysis lists triangles on many samples of a degree-preserving null model \citep{milo2002motifs}; consecutive samples of a double-edge-swap chain are related graphs with fixed vertex identities. The protocol (hypotheses, claim rules, policies and analysis) was frozen with a checksum before any full run; an earlier pilot on three of the graphs is disclosed in it. For each of eight graphs (Bitcoin-Alpha, P2P-Gnutella05, Loc-Brightkite, GitHub, and the held-out Email-Eu-core, CollegeMsg, LastFM-Asia, Deezer-Europe, i.e., the four held-out graphs with at most 100k edges; only their degree sequences are used, and the pre-registered held-out analysis is unaffected) and two chain seeds, we run $10m$ successful swaps of burn-in and then 20 frames (samples) separated by $s\cdot m$ successful swaps, $s\in\{0.001,0.01,0.1,1,10\}$, which change on average \RChangedLowS, \RChangedSmallS, \RChangedMidS, \RChangedOneS\ and \RChangedHighS\% of the edges per frame (80 runs, 1,600 frames). Frame~0 is optimized once by \TILS (100\,$\mu$s per edge, 2--60\,s), which removes \RFrameZeroRedMin--\RFrameZeroRedMax\% of the work of the fresh cheap reference $R^\star_{\rm cheap}$ (the best of Natural, Random, Degree, Core, smallest-last, Split and Check, each in both directions; all of its construction and evaluation time is charged). On frames $t\ge1$ we compare: ascending degree (Asc); $R^\star_{\rm cheap}$ (Cheap); Cheap followed by tangent refinement (Cheap+T); the frame-0 order unchanged (Frozen); the previous frame's order repaired by tangent refinement (Reuse+T); the same repaired by ILS without tangent passes for exactly the time Reuse+T took, at least 1\,ms (Reuse+ILS); fresh \TILS from $R^\star_{\rm cheap}$ (10\,$\mu$s per edge, at least 0.5\,s); and fresh Neigh, with and without T (frames 5, 10, 15, 20). As an incremental counting comparison, which maintains the number of triangles rather than listing them and so serves only count-based applications, we maintain the triangle \emph{count} along the chain, updating it for each swap from the common neighborhoods of the four affected edges (Inc) \citep{makkar2017dynamic}. The metric is $r=100(\Lam_{R^\star_{\rm cheap}}-\Lam)/\Lam_{R^\star_{\rm cheap}}$, averaged over 20 frames and both seeds. Every order is checked by listing: the number of candidate tests equals $\Lam$ and the triangle counts agree across all orders and with Inc in all 1,600 frames. Ordering times of the deterministic policies, build and listing times (the marker kernel, a Numba port of the adaptive lister that holds one forward set in a marker array, median of five alternating repetitions; and a kernel whose candidate test is an open-addressing hash probe, median of three) come from a separate single-process timing pass on the saved orders.

\paragraph{Work.} Table~\ref{tab:reuse}, Figure~\ref{fig:reuse}. (H1, passed) Tangent refinement of a fresh cheap order reduces work in all \RConfigs\ configurations (\RCheapTMin--\RCheapTMax\%, mean \RCheapTMean\%). (H2, failed) Reuse beats it only when very few edges change: at $s=0.001$ Reuse+T is better on \RHTwoWinsLowS\ graphs (one-sided sign test, as the protocol specifies, $p=\RHTwoPLowS$; two-sided $p=0.0078$; \RReuseTLowS\% vs.\ \RCheapTLowS\%), but at $s=0.01$ only on \RHTwoWinsSmallS, and at $s\in\{1,10\}$ it is more than half a point worse on every graph (\RReuseTHighS\% vs.\ \RCheapTHighS\% at $s=10$): repairing a stale order by tangent refinement stops at a worse fixed point than refining a fresh cheap order, and the gap accumulates along the chain. The frozen order decays quickly and is worse than $R^\star_{\rm cheap}$ itself from $s=0.1$ on. (H3, passed) At $s=0.001$ Reuse+T is within \RHThreeGapMax\ points of fresh \TILS on all eight graphs at $\ge\RHThreeTimeFactorMin\times$ lower ordering time. (H4, passed) Over all frames and spacings, the per-graph median ordering time of Reuse+T is \RHFourRatioMin--\RHFourRatioMax\ times that of computing $R^\star_{\rm cheap}$. (P2) At matched time, tangent repair beats ILS repair in all 40 configurations (wins/ties/losses \RPTwo); in the time a tangent repair takes, ILS gets only 1--143 chunks of 256 proposals (median 7). Fresh Neigh gains only \RNeighHighS\% over $R^\star_{\rm cheap}$ on these samples (\RNeighTHighS\% after T).

\paragraph{Time.} Table~\ref{tab:reusetime}. (H5) No configuration shows a wall-clock gain: per frame, ordering plus building plus listing with Reuse+T takes \RHFiveRatioLowSMin--\RHFiveRatioLowSMax\ times as long as with ascending degree at $s=0.001$ and up to \RHFiveRatioMax\ times at $s=10$, so the frame-0 optimization never pays off. With the marker kernel the listing time barely depends on $\Lam$ (within-frame slope at most \RGammaMarkerMax\,ns per candidate test; within $R^2\le0.01$ on Bitcoin-Alpha, Loc-Brightkite and GitHub and $\le0.19$ on Email-Eu-core and CollegeMsg); with a hash-probe test it is \RGammaHashMin--\RGammaHashMax\,ns over individual graph--spacing configurations (Table~\ref{tab:reusetime} gives per-graph medians). Ordering beyond a degree sort would pay off only if a candidate test cost more than the break-even $\gamma^\ast$ of \RGammaStarLowSMin--\RGammaStarLowSMax\,ns (Reuse+T, $s=0.001$; up to \RGammaStarMax\,ns at $s=10$; \RGammaStarCheapTMin--\RGammaStarCheapTMax\,ns for Cheap+T). (H6) Updating the count incrementally costs \RIncListLowS\ (median) of one listing at $s=0.001$ and \RIncListSmallS\ at $s=0.01$, but \RIncListMidS, \RIncListOneS\ and \RIncListHighS\ times a listing at $s=0.1$, $1$ and $10$: when few edges change, incremental counting is cheaper than relisting. Peak memory per run was at most 459\,MB.

\paragraph{Stability.} For the carried order, the change of $\Lam$ between consecutive frames never exceeded the bounds of Lemma~\ref{lem:stab} (\RStabViol\ violations in \RStabChecks\ transitions), but the bounds are loose (median ratio of bound to actual change \RStabLocSlack\ for the localized and \RStabUniSlack\ for the uniform bound).

\paragraph{Post hoc (not pre-registered).} Keeping on each frame the better of Reuse+T and Cheap+T and carrying it forward costs 2--14\% more time than Cheap+T and gains \RPHGainLowS, \RPHGainSmallS\ and \RPHGainMidS\ points over Cheap+T at $s=0.001$, $0.01$ and $0.1$ (none at $s\ge1$, where the reused order is never better).

\begin{table}[H]
\centering\small\setlength{\tabcolsep}{3pt}
\caption{Summary of the order-reuse study. Changed edges: mean fraction of edges changed per frame. $r$: reduction of $\Lam$ (\%) relative to $R^\star_{\rm cheap}$, mean over the eight graphs of per-graph means. Fresh: fresh \TILS. Reuse+T/Cheap time: median over the eight graphs of the per-graph ratio of the median ordering time of Reuse+T to that of computing $R^\star_{\rm cheap}$. Inc/list: median over the eight graphs of the per-graph median time of an incremental triangle-count update relative to one listing. Time per frame with Reuse+T is never lower than with ascending degree (Table~\ref{tab:reusetime}).}
\label{tab:reusemain}
\begin{tabular}{@{}rrrrrrrr@{}}
\toprule
$s$ & changed & Cheap+T & Frozen & Reuse+T & Fresh & Reuse+T/Cheap & Inc/list\\
 & edges (\%) & \multicolumn{4}{c}{$r$ (\%)} & time & \\
\midrule
0.001 & 0.2 & 8.4 & 10.0 & 10.1 & 10.4 & 0.02 & 0.09\\
0.01 & 2.0 & 8.4 & 5.6 & 8.2 & 10.3 & 0.06 & 0.71\\
0.1 & 18.0 & 8.3 & $-$8.1 & 5.3 & 10.3 & 0.11 & 4.28\\
1 & 83.1 & 8.4 & $-$12.7 & 4.1 & 10.3 & 0.14 & 38.72\\
10 & 94.8 & 8.4 & $-$12.7 & 4.0 & 10.4 & 0.15 & 382.93\\
\bottomrule
\end{tabular}

\end{table}
\begin{table}[H]
\centering\small\setlength{\tabcolsep}{2pt}
\caption{Order reuse along degree-preserving swap chains: reduction $r$ (\%) of $\Lam$ relative to the fresh cheap reference $R^\star_{\rm cheap}$, mean over 20 frames and two chain seeds; median ordering time per frame (ms); and the median time of the incremental count update relative to one listing (Inc/list). $s$: swaps per frame divided by $m$ (0.2\%, 2.0\% and 94.8\% of edges changed per frame on average). Neigh: frames 5, 10, 15, 20 only.}
\label{tab:reuse}
\begin{tabular}{@{}llrrrrrrrrr@{}}
\toprule
& & \multicolumn{5}{c}{$r$ (\%)} & \multicolumn{3}{c}{ordering (ms)} & \\
\cmidrule(lr){3-7}\cmidrule(lr){8-10}
Graph & $s$ & Cheap+T & Frozen & Reuse+T & Fresh & Neigh & Cheap & Reuse+T & Fresh & Inc/list\\
\midrule
Bitcoin-Alpha & 0.001 & 11.4 & 12.5 & 12.7 & 12.5 & 0.9 & 39.3 & 0.56 & 537 & 0.11\\
Gnutella05 & 0.001 & 7.0 & 11.5 & 11.7 & 12.4 & 5.7 & 92.4 & 1.73 & 593 & 0.07\\
Brightkite & 0.001 & 12.0 & 12.6 & 12.7 & 12.9 & $-$0.4 & 774.9 & 20.56 & 2862 & 0.08\\
GitHub & 0.001 & 12.0 & 12.2 & 12.3 & 12.1 & $-$0.5 & 903.2 & 45.61 & 3726 & 0.12\\
Email-Eu & 0.001 & 4.5 & 5.2 & 5.2 & 5.3 & 1.3 & 27.6 & 0.82 & 527 & 0.16\\
CollegeMsg & 0.001 & 7.3 & 8.0 & 8.1 & 8.3 & 1.3 & 28.8 & 0.60 & 528 & 0.12\\
LastFM-Asia & 0.001 & 7.0 & 9.0 & 9.3 & 9.7 & 2.1 & 83.3 & 1.68 & 578 & 0.07\\
Deezer & 0.001 & 6.0 & 8.9 & 9.2 & 9.9 & 3.1 & 331.7 & 6.90 & 1244 & 0.05\\
\midrule
Bitcoin-Alpha & 0.01 & 11.1 & 8.8 & 10.8 & 12.2 & 0.1 & 38.6 & 1.36 & 538 & 0.77\\
Gnutella05 & 0.01 & 6.8 & 4.4 & 8.0 & 12.1 & 6.0 & 92.2 & 4.06 & 594 & 0.54\\
Brightkite & 0.01 & 12.0 & 9.3 & 11.6 & 12.8 & $-$0.4 & 773.7 & 48.54 & 2872 & 0.64\\
GitHub & 0.01 & 12.0 & 10.2 & 11.5 & 12.2 & $-$0.4 & 892.8 & 102.87 & 3760 & 1.03\\
Email-Eu & 0.01 & 4.6 & 2.9 & 4.5 & 5.4 & 1.2 & 28.1 & 2.72 & 526 & 1.03\\
CollegeMsg & 0.01 & 7.1 & 5.1 & 7.0 & 8.1 & 1.4 & 30.5 & 1.84 & 528 & 0.85\\
LastFM-Asia & 0.01 & 7.2 & 2.9 & 6.3 & 9.6 & 1.5 & 80.0 & 3.52 & 583 & 0.57\\
Deezer & 0.01 & 6.0 & 1.4 & 5.5 & 9.8 & 3.3 & 323.7 & 16.23 & 1240 & 0.46\\
\midrule
Bitcoin-Alpha & 10 & 11.6 & $-$1.7 & 9.2 & 12.9 & 0.9 & 38.4 & 3.61 & 537 & 412.69\\
Gnutella05 & 10 & 6.8 & $-$30.1 & $-$1.8 & 12.0 & 6.1 & 93.2 & 9.70 & 591 & 232.53\\
Brightkite & 10 & 11.9 & $-$5.9 & 8.8 & 12.7 & $-$0.3 & 780.2 & 141.45 & 2886 & 353.18\\
GitHub & 10 & 12.0 & 3.5 & 10.8 & 12.1 & $-$0.5 & 882.1 & 248.15 & 3711 & 673.69\\
Email-Eu & 10 & 4.5 & $-$6.1 & 2.5 & 5.4 & 1.4 & 28.8 & 6.66 & 526 & 715.64\\
CollegeMsg & 10 & 6.9 & $-$6.8 & 4.4 & 8.0 & 1.3 & 30.0 & 5.25 & 529 & 460.66\\
LastFM-Asia & 10 & 7.1 & $-$22.5 & 1.1 & 9.7 & 1.8 & 79.8 & 10.01 & 581 & 243.14\\
Deezer & 10 & 6.1 & $-$32.4 & $-$2.8 & 9.9 & 2.7 & 332.6 & 38.86 & 1244 & 226.31\\
\bottomrule
\end{tabular}

\end{table}

\begin{table}[H]
\centering\small\setlength{\tabcolsep}{2pt}
\caption{Listing time and break-even probe cost. List: median listing time of the Asc and Reuse+T orders (marker kernel). $\gamma$: within-frame slope of listing time on $\Lam$ (ns per candidate test; within $R^2$ in parentheses), median over the five spacings, for the marker and the hash-probe kernel. $\gamma^\ast$: median break-even cost per candidate test at which Reuse+T's extra ordering time equals its saved tests. Last columns: mean per-frame time (ordering, building and listing) of Reuse+T relative to Asc, with a 95\% bootstrap interval, as the protocol specifies, from resampling the 40 frames (20 per chain, two chains) individually; consecutive frames of a chain are correlated, so these intervals can understate the uncertainty.}
\label{tab:reusetime}
\begin{tabular}{@{}lrrrrrrrr@{}}
\toprule
& \multicolumn{2}{c}{list (ms)} & \multicolumn{2}{c}{$\gamma$ (ns)} & \multicolumn{2}{c}{$\gamma^\ast_{\rm Reuse+T}$ (ns)} & \multicolumn{2}{c}{time Reuse+T / Asc}\\
\cmidrule(lr){2-3}\cmidrule(lr){4-5}\cmidrule(lr){6-7}\cmidrule(lr){8-9}
Graph & Asc & Reuse+T & marker & hash & $s{=}0.001$ & $s{=}10$ & $s{=}0.001$ & $s{=}10$\\
\midrule
Bitcoin-Alpha & 0.42 & 0.43 & $-$0.0 (0.01) & 6.7 (0.62) & 24 & 216 & 1.60 [1.52, 1.70] & 4.69 [4.40, 4.98]\\
Gnutella05 & 1.10 & 0.93 & 5.3 (0.84) & 7.3 (0.74) & 40 & 338 & 1.50 [1.43, 1.58] & 4.52 [4.34, 4.73]\\
Brightkite & 8.10 & 8.29 & $-$0.1 (0.00) & 4.9 (0.60) & 50 & 416 & 1.98 [1.88, 2.10] & 7.99 [7.52, 8.49]\\
GitHub & 11.83 & 12.13 & $-$0.2 (0.01) & 8.4 (0.70) & 69 & 422 & 2.55 [2.40, 2.70] & 9.75 [9.30, 10.21]\\
Email-Eu & 0.60 & 0.58 & 0.4 (0.14) & 7.3 (0.76) & 13 & 122 & 1.77 [1.67, 1.88] & 6.73 [6.24, 7.30]\\
CollegeMsg & 0.48 & 0.47 & 0.6 (0.19) & 6.1 (0.68) & 16 & 174 & 1.60 [1.51, 1.71] & 5.99 [5.56, 6.46]\\
LastFM-Asia & 1.00 & 0.88 & 3.5 (0.79) & 6.1 (0.69) & 37 & 273 & 1.53 [1.47, 1.59] & 5.24 [4.96, 5.57]\\
Deezer & 3.60 & 3.03 & 5.7 (0.86) & 6.9 (0.69) & 52 & 391 & 1.60 [1.53, 1.67] & 4.99 [4.80, 5.19]\\
\bottomrule
\end{tabular}

\end{table}

\begin{figure}[H]
\centering
\includegraphics[width=\textwidth]{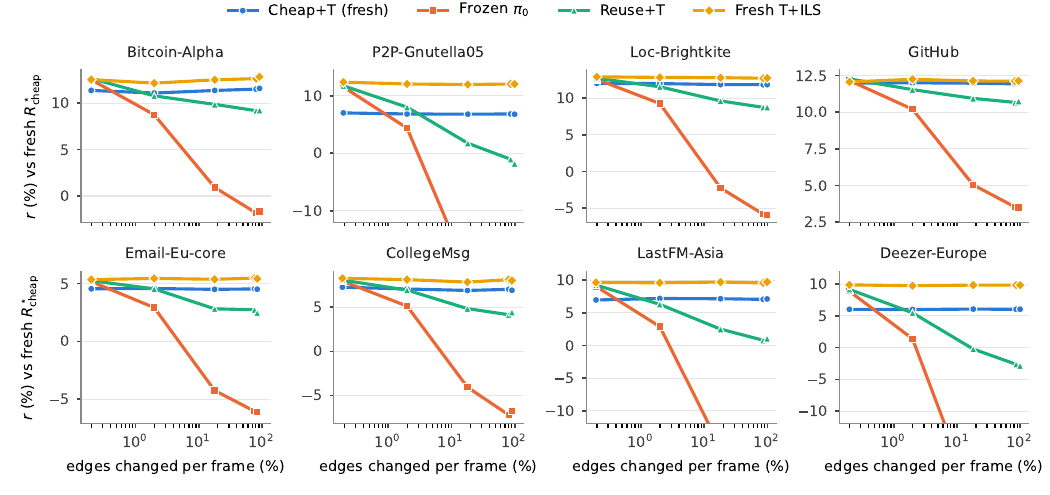}
\caption{Reduction $r$ relative to the fresh cheap reference versus the fraction of edges changed per frame (log scale), for the eight swap chains. Frozen values below $-12\%$ are clipped.}
\label{fig:reuse}
\end{figure}

\section{Additional Tables}\label{app:extra}
\begin{table}[H]
\centering\small\setlength{\tabcolsep}{4pt}
\caption{What is prior or classical and what is new.}
\label{tab:novelty}
\begin{tabular}{@{}>{\raggedright\arraybackslash}p{0.62\columnwidth}>{\raggedright\arraybackslash}p{0.35\columnwidth}@{}}
\toprule
Result & Source\\
\midrule
Adaptive lister; its candidate-test count is $\Lam$, Eq.~\eqref{eq:work} (time $\Theta(n+m+\Lam)$ with constant-time membership tests) & \citet{yu2020aot}\\
Non-adaptive costs: NP-hardness, orders & \citet{lecuyer2023tailored}\\
$\Cpm$: Min UnCut connection, UGC inapproximability; $\Cpp$ over orientations & \citet[\S 5.3.3]{lecuyer2023thesis}\\
Polytope vertices are the orders & \citet{stanley1980decompositions}\\
Linear minimization by sorting & \citet{edmonds1970submodular}\\
Tangent step as linearization & \citet{mangasarian1997concave}\\
Logistic map onto interior (Thm.~\ref{thm:landscape}i) & \citet{wainwright2008graphical}\\
\midrule
$\Lam$ concave in forward degrees; certified rounding (Thm.~\ref{thm:rounding}) & new\\
$\Lopt{S_2(H)}=\beta(H)$; hardness and inapproximability for $\Lam$ (Thm.~\ref{thm:hard}) & new\\
Program without acyclicity (Prop.~\ref{prop:bounds}b) & new\\
Stationary points are maxima (Thm.~\ref{thm:landscape}ii) & new\\
Limit of degree-based orders (Thm.~\ref{thm:mf}) & new\\
Degree loses a constant factor (Prop.~\ref{prop:cubic}) & new\\
V-shapes; repair (Sec.~\ref{sec:random}, App.~\ref{app:reuse}) & new, empirical\\
\bottomrule
\end{tabular}
\end{table}

\paragraph{Further random-graph results.} On the random graphs of Table~\ref{tab:scaling} at $n=10^5$, \DOVE ends 5.4--19.9 percentage points of reduction short of \TILS; there \TILS has about five times \DOVE's time, so this shows attainable improvement, not equal-compute dominance. From $n=10^3$ to $10^4$ the \TILS reductions fall by 4--5 points on the power laws (Table~\ref{tab:scaling-full}). Greedy-coloring orders never match \TILS; starting ILS from the pre-declared DEG-desc coloring adds 0.1--1.1 points at $n=10^5$ (Table~\ref{tab:scaling-full}) but not on public graphs (Section~\ref{sec:posthoc}).

\paragraph{Tables.} Table~\ref{tab:novelty} separates prior ingredients from new results. Table~\ref{tab:exact} gives the excess work over the certified optimum (Section~\ref{sec:exact}) and Table~\ref{tab:scaling} the random-graph comparison of Section~\ref{sec:random}. As a correctness check, the program of Proposition~\ref{prop:bounds}(b) matches brute force on all 1{,}245 unlabeled graphs with at least one edge on two to seven vertices. Table~\ref{tab:dev} lists the 32 development graphs, Table~\ref{tab:suite} summarizes them by group (development, validation and audit), and Table~\ref{tab:ablation} gives the ablation of Section~\ref{sec:ablation}. Tables~\ref{tab:scaling-full} and~\ref{tab:scaling-colour} give all columns of the random-graph comparison, and Table~\ref{tab:ablation-public} the per-seed dispersion of the public-graph ablation: without the tangent prefix, ILS differs by $+0.10$ points, with per-graph seed SDs of up to 0.33; the differences between insertion variants (0.1--0.5 points) are carried mostly by P2P-Gnutella05.

\begin{table}[H]
\centering\small\setlength{\tabcolsep}{2.5pt}
\caption{Excess work over the solver-certified optimum on 27 small instances (0.2\,s per heuristic run, three seeds). Mean and maximum excess are taken over instances of the seed-mean excess. \# optimal: instances on which the seed-mean work equals the optimum, that is, every seed reached it.}
\label{tab:exact}
\begin{tabular}{lrrr}
\toprule
Method & Mean excess & Max excess & \# optimal\\
\midrule
Reference $R^\star$ & 26.0\% & 120\% & 1\\
Tangent refinement (T) & 21.1\% & 120\% & 2\\
Strict insertion & 10.1\% & 80\% & 11\\
T+strict insertion & 10.5\% & 80\% & 10\\
Orientation-space search & 2.1\% & 25\% & 21\\
Iterated local search (ILS) & 0.0\% & 0\% & 27\\
\bottomrule
\end{tabular}
\end{table}

\begin{table}[H]
\centering\small\setlength{\tabcolsep}{2.5pt}
\caption{Reduction of $\Lam$ (\%) relative to the best-found degree-based order on sampled graphs with $n=10^5$ (ER: $G(n,8/(n-1))$; PL$\gamma$: erased configuration model with $p_k\propto k^{-\gamma}$ on $[2,100]$), mean$\pm$SD over five graph seeds; all methods start from that order. \TILS uses 100\,$\mu$s per edge (about $5\times$ the time of \DOVE). Col.: best of four greedy-coloring orders, without search. Table~\ref{tab:scaling-full} gives $n=10^3,10^4$ and further variants.}
\label{tab:scaling}
\begin{tabular}{lrrrr}
\toprule
& T & \DOVE & \TILS & Col.\\
\midrule
ER8 & 23.1{\tiny$\pm$0.0} & 27.0{\tiny$\pm$0.1} & 36.1{\tiny$\pm$0.1} & 14.5{\tiny$\pm$0.2}\\
PL2.1 & 19.8{\tiny$\pm$0.3} & 22.4{\tiny$\pm$0.4} & 27.8{\tiny$\pm$0.1} & 4.8{\tiny$\pm$0.2}\\
PL2.5 & 25.0{\tiny$\pm$0.4} & 29.8{\tiny$\pm$0.6} & 41.0{\tiny$\pm$0.6} & 17.6{\tiny$\pm$1.2}\\
PL3.0 & 28.8{\tiny$\pm$0.3} & 38.0{\tiny$\pm$0.2} & 57.9{\tiny$\pm$0.4} & 42.5{\tiny$\pm$0.3}\\
\bottomrule
\end{tabular}
\end{table}

\begin{table}[H]
\centering\small\setlength{\tabcolsep}{3.2pt}
\caption{Mean work reduction (\%) relative to $R^\star$ on the 32 development graphs, by group (unweighted means over graphs); \TILS receives \DOVE's measured time on each graph. I/T/R: improved/tied/regressed for \TILS. All three groups informed the method; none is held out.}
\label{tab:suite}
\begin{tabular}{lrrrrr}
\toprule
Group & Graphs & \DOVE & T & \TILS & I/T/R\\
\midrule
Development & 13 & 1.67 & 1.63 & 2.20 & 13/0/0 \\
Validation & 11 & 5.09 & 5.22 & 6.11 & 11/0/0 \\
Audit & 8 & 7.72 & 7.88 & 10.83 & 8/0/0 \\
\midrule
All & 32 & 4.36 & 4.43 & 5.70 & 32/0/0 \\
 
\bottomrule
\end{tabular}
\end{table}

\clearpage
{\small\setlength{\tabcolsep}{3pt}
\begin{longtable}{@{}llrrrlrrr@{}}
\caption{The 32 development graphs, all from SNAP \citep{leskovec2014snap}; Ego-$k$ is the ego network of user $k$ in ego-Facebook. Names follow the repository. Group: stage of development at which the graph was used (all groups informed the method). $\Lam_{R^\star}$: reference work; $R^\star$: winning candidate (Neigh-s$i$: stored Neigh order of seed $i$; rev.: reversed). T, \DOVE, \TILS: reduction (\%) relative to $R^\star$; \TILS: mean$\pm$SD over three seeds. Unweighted means: \SuiteTMean\ (T), \SuiteDoveMean\ (\DOVE) and \SuiteTIKMean\ (\TILS).}\label{tab:dev}\\
\toprule
Graph & Group & $n$ & $m$ & $\Lam_{R^\star}$ & $R^\star$ & T & \DOVE & \TILS\\
\midrule\endfirsthead
\caption[]{The 32 development graphs (continued).}\\
\toprule
Graph & Group & $n$ & $m$ & $\Lam_{R^\star}$ & $R^\star$ & T & \DOVE & \TILS\\
\midrule\endhead
\bottomrule\endfoot
CA-CondMat & Development & 23,133 & 93,439 & 316,262 & Neigh-s4, rev. & 3.07 & 3.16 & 4.23{\tiny$\pm$0.00}\\
CA-GrQc & Development & 5,241 & 14,484 & 54,867 & Neigh-s2 & 0.91 & 0.85 & 1.45{\tiny$\pm$0.01}\\
CA-HepPh & Development & 12,006 & 118,489 & 3,821,951 & Neigh-s2 & 2.29 & 2.18 & 2.53{\tiny$\pm$0.00}\\
Ego-0 & Development & 333 & 2,519 & 16,307 & Neigh-s5, rev. & 2.89 & 2.96 & 3.99{\tiny$\pm$0.00}\\
Ego-107 & Development & 1,034 & 26,749 & 624,726 & Neigh-s4, rev. & 1.58 & 1.68 & 1.82{\tiny$\pm$0.01}\\
Ego-1684 & Development & 786 & 14,024 & 201,862 & Neigh-s4 & 1.88 & 2.04 & 2.38{\tiny$\pm$0.01}\\
Ego-1912 & Development & 747 & 30,025 & 1,095,493 & Neigh-s2, rev. & 1.11 & 1.24 & 1.32{\tiny$\pm$0.00}\\
Ego-3437 & Development & 534 & 4,813 & 31,251 & Neigh-s5 & 1.58 & 1.82 & 2.35{\tiny$\pm$0.01}\\
Ego-348 & Development & 224 & 3,192 & 34,870 & Neigh-s4 & 1.83 & 1.87 & 1.98{\tiny$\pm$0.00}\\
Ego-3980 & Development & 52 & 146 & 221 & Neigh-s2 & 1.36 & 1.36 & 1.36{\tiny$\pm$0.00}\\
Ego-414 & Development & 150 & 1,693 & 12,871 & Neigh-s5 & 0.64 & 0.66 & 0.99{\tiny$\pm$0.00}\\
Ego-686 & Development & 168 & 1,656 & 11,657 & Neigh-s5, rev. & 1.66 & 1.96 & 3.17{\tiny$\pm$0.01}\\
Ego-698 & Development & 61 & 270 & 748 & Neigh-s1 & 0.40 & 0.00 & 1.02{\tiny$\pm$0.08}\\
CA-AstroPh & Validation & 18,771 & 198,050 & 2,502,046 & Neigh-s4 & 3.90 & 3.99 & 4.41{\tiny$\pm$0.01}\\
CA-HepTh & Validation & 9,875 & 25,973 & 53,365 & Neigh-s2 & 2.47 & 2.64 & 3.87{\tiny$\pm$0.06}\\
Cit-HepTh & Validation & 27,769 & 352,285 & 4,212,943 & Neigh-s1, rev. & 2.53 & 1.72 & 2.91{\tiny$\pm$0.00}\\
Com-Amazon & Validation & 334,863 & 925,872 & 1,084,032 & Neigh-s5, rev. & 0.57 & 0.00 & 2.44{\tiny$\pm$0.00}\\
Com-DBLP & Validation & 317,080 & 1,049,866 & 3,556,614 & Neigh-s4, rev. & 3.71 & 3.52 & 4.80{\tiny$\pm$0.00}\\
Email-Enron & Validation & 36,692 & 183,831 & 2,158,516 & Neigh-s1 & 10.39 & 10.38 & 10.79{\tiny$\pm$0.00}\\
P2P-Gnutella31 & Validation & 62,586 & 147,892 & 203,725 & Neigh-s2 & 4.23 & 4.75 & 7.62{\tiny$\pm$0.00}\\
Soc-Epinions1 & Validation & 75,879 & 405,740 & 7,985,079 & Neigh-s4 & 12.21 & 11.64 & 12.33{\tiny$\pm$0.00}\\
Soc-Slashdot0811 & Validation & 77,360 & 469,180 & 4,906,888 & Neigh-s4 & 11.21 & 11.32 & 11.55{\tiny$\pm$0.00}\\
Web-NotreDame & Validation & 325,729 & 1,090,108 & 9,563,275 & Neigh-s1, rev. & 0.30 & 0.00 & 0.49{\tiny$\pm$0.00}\\
Wiki-Vote & Validation & 7,115 & 100,762 & 1,891,565 & Neigh-s4 & 5.96 & 6.00 & 6.04{\tiny$\pm$0.00}\\
AS-Caida & Audit & 26,475 & 53,381 & 131,589 & Neigh-s2 & 16.72 & 16.18 & 17.53{\tiny$\pm$0.01}\\
Bitcoin-OTC & Audit & 5,881 & 21,492 & 115,800 & Neigh-s1 & 10.08 & 10.40 & 10.79{\tiny$\pm$0.00}\\
Cit-HepPh & Audit & 34,546 & 420,877 & 3,972,842 & Neigh-s1, rev. & 2.85 & 2.94 & 3.43{\tiny$\pm$0.03}\\
Email-EuAll & Audit & 265,009 & 364,481 & 1,276,904 & Degree & 19.76 & 19.28 & 21.07{\tiny$\pm$0.00}\\
Loc-Gowalla & Audit & 196,591 & 950,327 & 8,631,304 & Neigh-s5 & 10.10 & 9.55 & 10.62{\tiny$\pm$0.00}\\
P2P-Gnutella04 & Audit & 10,876 & 39,994 & 73,280 & Neigh-s2 & 2.98 & 3.38 & 5.09{\tiny$\pm$0.05}\\
RoadNet-PA & Audit & 1,088,092 & 1,541,898 & 395,391 & Neigh-s4, rev. & 0.05 & 0.00 & 17.48{\tiny$\pm$0.12}\\
Web-Stanford & Audit & 281,903 & 1,992,636 & 12,284,557 & Neigh-s1, rev. & 0.54 & 0.00 & 0.63{\tiny$\pm$0.00}\\
 
\end{longtable}}

\begin{table}[H]
\centering\small\setlength{\tabcolsep}{3pt}
\caption{Ablation on the four public development graphs and the synthetic graphs: mean reduction (\%) relative to $R^\star$ after a final tangent pass; five seeds. Public: 4 graphs, 10\,s; synthetic development: 6 graphs, 1\,s; synthetic prospective: 9 graphs generated after tuning (3 bipartite with optimum 0), 1\,s. The six tuning graphs are separate instances of the development families.}
\label{tab:ablation}
\begin{tabular}{lrrr}
\toprule
Method & Public & Synth.\ dev. & Synth.\ prosp.\\
\midrule
T+strict insertion & 7.48 & 10.82 & 9.68\\
Strict insertion & 7.37 & 11.06 & 9.52\\
Orientation-space search & 7.15 & 12.32 & 11.02\\
ILS & 7.85 & 16.72 & 11.92\\
\TILS & 7.75 & 16.51 & 11.75\\
\bottomrule
\end{tabular}
\end{table}

\begin{table}[H]
\centering\small\setlength{\tabcolsep}{4pt}
\caption{Sampled graphs: reduction (\%) of the search methods relative to the best-found degree-based order, mean$\pm$SD over graph seeds (ER: $G(n,8/(n-1))$). ILS-T: ILS without the tangent prefix. C+ILS-T: ILS-T (Numba kernel, 100\,$\mu$s per edge, one ILS seed per graph) from the DEG-desc coloring order (pre-declared) or from the per-graph best coloring order of Table~\ref{tab:scaling-colour} (selection cost charged); construction charged to the budget.}
\label{tab:scaling-full}
\begin{tabular}{@{}llrrrr@{}}
\toprule
& & & & \multicolumn{2}{c}{C+ILS-T from}\\
\cmidrule(l){5-6}
& $n$ & \TILS & ILS-T & DEG-desc & best coloring\\
\midrule
ER8 & $10^3$ & 36.8$\pm$0.8 & 36.8$\pm$0.4 & 36.9$\pm$0.5 & 36.8$\pm$0.4\\
 & $10^4$ & 36.3$\pm$0.3 & 36.4$\pm$0.7 & 36.3$\pm$0.5 & 36.5$\pm$0.6\\
 & $10^5$ & 36.1$\pm$0.1 & 36.1$\pm$0.1 & 36.2$\pm$0.1 & 36.4$\pm$0.1\\
\addlinespace
PL2.1 & $10^3$ & 31.5$\pm$3.9 & 31.4$\pm$3.3 & 32.0$\pm$3.8 & 31.1$\pm$3.0\\
 & $10^4$ & 26.9$\pm$1.1 & 27.3$\pm$1.2 & 27.4$\pm$1.3 & 27.3$\pm$1.2\\
 & $10^5$ & 27.8$\pm$0.1 & 28.5$\pm$0.3 & 28.7$\pm$0.2 & 28.3$\pm$0.2\\
\addlinespace
PL2.5 & $10^3$ & 45.3$\pm$3.5 & 45.6$\pm$3.9 & 44.6$\pm$3.3 & 45.9$\pm$3.5\\
 & $10^4$ & 40.2$\pm$2.2 & 40.8$\pm$2.6 & 40.5$\pm$2.4 & 41.0$\pm$2.0\\
 & $10^5$ & 41.0$\pm$0.6 & 41.5$\pm$0.5 & 41.7$\pm$0.7 & 41.8$\pm$0.6\\
\addlinespace
PL3.0 & $10^3$ & 61.2$\pm$4.8 & 60.4$\pm$4.8 & 61.1$\pm$4.6 & 62.3$\pm$4.2\\
 & $10^4$ & 57.0$\pm$2.5 & 56.8$\pm$2.1 & 58.5$\pm$1.1 & 58.8$\pm$1.3\\
 & $10^5$ & 57.9$\pm$0.4 & 58.2$\pm$0.3 & 59.0$\pm$0.3 & 59.8$\pm$0.5\\
\bottomrule
\end{tabular}
\end{table}

\begin{table}[H]
\centering\small\setlength{\tabcolsep}{4pt}
\caption{Sampled graphs: reduction (\%) of greedy-coloring orders relative to the best-found degree-based order, without search; mean$\pm$SD over graph seeds. Greedy coloring with vertices in reverse smallest-last (SL) or descending-degree (DEG) order, color classes ascending or descending (each has the same work as the reversal of the other direction). Best: best of the four by exact work.}
\label{tab:scaling-colour}
\begin{tabular}{@{}llrrrrr@{}}
\toprule
& $n$ & SL-asc & SL-desc & DEG-asc & DEG-desc & Best\\
\midrule
ER8 & $10^3$ & 11.6$\pm$1.0 & 15.2$\pm$0.5 & 9.5$\pm$0.9 & 10.6$\pm$1.0 & 15.2$\pm$0.5\\
 & $10^4$ & 11.0$\pm$0.7 & 14.4$\pm$0.7 & 10.0$\pm$0.5 & 11.0$\pm$0.5 & 14.4$\pm$0.7\\
 & $10^5$ & 10.9$\pm$0.2 & 14.5$\pm$0.2 & 9.8$\pm$0.2 & 10.9$\pm$0.1 & 14.5$\pm$0.2\\
\addlinespace
PL2.1 & $10^3$ & 11.0$\pm$3.3 & $-$1.5$\pm$6.0 & 7.5$\pm$4.2 & $-$10.7$\pm$5.2 & 11.0$\pm$3.3\\
 & $10^4$ & 4.5$\pm$0.6 & $-$14.3$\pm$3.6 & 2.8$\pm$0.7 & $-$18.4$\pm$4.0 & 4.5$\pm$0.6\\
 & $10^5$ & 4.8$\pm$0.2 & $-$12.6$\pm$0.6 & 4.0$\pm$0.2 & $-$16.6$\pm$0.6 & 4.8$\pm$0.2\\
\addlinespace
PL2.5 & $10^3$ & 17.1$\pm$1.7 & 21.6$\pm$7.2 & 10.6$\pm$3.3 & 16.6$\pm$4.9 & 22.4$\pm$6.4\\
 & $10^4$ & 13.8$\pm$2.2 & 15.4$\pm$4.5 & 9.5$\pm$1.3 & 9.4$\pm$4.8 & 15.7$\pm$4.1\\
 & $10^5$ & 14.4$\pm$0.7 & 17.6$\pm$1.2 & 10.0$\pm$0.4 & 12.4$\pm$1.1 & 17.6$\pm$1.2\\
\addlinespace
PL3.0 & $10^3$ & 29.1$\pm$9.9 & 43.3$\pm$7.1 & 20.7$\pm$11.4 & 31.8$\pm$7.1 & 43.3$\pm$7.1\\
 & $10^4$ & 25.9$\pm$3.4 & 40.6$\pm$1.4 & 18.3$\pm$2.6 & 33.3$\pm$1.4 & 40.6$\pm$1.4\\
 & $10^5$ & 27.8$\pm$0.3 & 42.5$\pm$0.3 & 18.8$\pm$0.3 & 33.9$\pm$0.5 & 42.5$\pm$0.3\\
\bottomrule
\end{tabular}
\end{table}

\begin{table}[H]
\centering\small\setlength{\tabcolsep}{4pt}
\caption{Public graphs at 10\,s: reduction (\%) relative to $R^\star$ after a final tangent pass, mean$\pm$SD over five seeds. SI: strict insertion; T+SI: tangent prefix then strict insertion; Orient.: orientation-space search. $\Delta$: ILS minus \TILS, paired by seed (mean$\pm$SD of the five per-seed differences, percentage points).}
\label{tab:ablation-public}
\begin{tabular}{lrrrrrr}
\toprule
Graph & T+SI & SI & Orient. & ILS & \TILS & $\Delta$\\
\midrule
Bitcoin-Alpha & 9.60$\pm$0.09 & 9.59$\pm$0.14 & 9.58$\pm$0.02 & 9.80$\pm$0.25 & 9.61$\pm$0.03 & 0.18$\pm$0.25\\
P2P-Gnutella05 & 4.62$\pm$0.18 & 4.12$\pm$0.06 & 3.68$\pm$0.02 & 5.84$\pm$0.27 & 5.57$\pm$0.12 & 0.27$\pm$0.33\\
Loc-Brightkite & 9.29$\pm$0.05 & 9.38$\pm$0.05 & 8.95$\pm$0.01 & 9.40$\pm$0.04 & 9.41$\pm$0.01 & $-$0.01$\pm$0.05\\
GitHub & 6.41$\pm$0.00 & 6.38$\pm$0.00 & 6.37$\pm$0.00 & 6.38$\pm$0.00 & 6.41$\pm$0.00 & $-$0.04$\pm$0.00\\
\midrule
Mean & 7.48 & 7.37 & 7.15 & 7.85 & 7.75 & 0.10\\
\bottomrule
\end{tabular}
\end{table}

{\small\setlength{\tabcolsep}{4pt}
\begin{longtable}{@{}lrrrrrrrr@{}}
\caption{The 22 small instances without a certified optimum (HiGHS, no warm start; 60\,s for $n\le32$, 120\,s otherwise). LB: solver lower bound; MIP UB: solver incumbent objective; MIP order: exact work of the order obtained from the incumbent by tangent rounding; best heuristic: lowest work of any heuristic run (0.2\,s, seeds 1--3); ILS: seed mean. Gap: relative bound width $(\mathrm{UB}_{\rm best}-\lceil\mathrm{LB}\rceil)/\mathrm{UB}_{\rm best}$, with $\mathrm{UB}_{\rm best}$ the lower of the MIP order and the best heuristic run. It upper-bounds the shortfall from optimality as a fraction of the best order's work, $(\mathrm{UB}_{\rm best}-\Lopt{G})/\mathrm{UB}_{\rm best}$; it does not estimate the actual shortfall. Rounding LB up is valid for an exact lower bound because the optimum is an integer; for the solver's floating-point bound it assumes that the numerical error does not cross an integer.}\label{tab:uncertified}\\
\toprule
Instance & $n$ & $m$ & LB & MIP UB & MIP order & Best heur. & ILS mean & Gap\\
\midrule\endfirsthead
\caption[]{The small instances without a certified optimum (continued).}\\
\toprule
Instance & $n$ & $m$ & LB & MIP UB & MIP order & Best heur. & ILS mean & Gap\\
\midrule\endhead
\bottomrule\endfoot
ws-24-2024 & 24 & 72 & 74.7 & 90 & 90 & 89 & 89.0 & 15.7\% \\
ws-32-2032 & 32 & 96 & 81.0 & 113 & 111 & 110 & 110.0 & 26.4\% \\
sbm-32-2032 & 32 & 100 & 83.3 & 114 & 114 & 114 & 114.0 & 26.3\% \\
geometric-32-2032 & 32 & 84 & 92.0 & 104 & 104 & 104 & 104.0 & 11.5\% \\
ws-48-1048 & 48 & 144 & 109.0 & 159 & 159 & 158 & 158.0 & 31.0\% \\
sbm-48-1048 & 48 & 253 & 288.0 & 662 & 609 & 562 & 565.7 & 48.8\% \\
geometric-48-1048 & 48 & 139 & 157.0 & 199 & 199 & 199 & 199.0 & 21.1\% \\
er-48-2048 & 48 & 151 & 120.7 & 169 & 169 & 168 & 168.3 & 28.0\% \\
ws-48-2048 & 48 & 144 & 116.3 & 157 & 156 & 154 & 154.0 & 24.0\% \\
sbm-48-2048 & 48 & 260 & 259.0 & 698 & 641 & 592 & 594.0 & 56.2\% \\
geometric-48-2048 & 48 & 127 & 142.0 & 157 & 155 & 155 & 155.0 & 8.4\% \\
er-64-1064 & 64 & 224 & 126.5 & 332 & 302 & 273 & 274.0 & 53.5\% \\
ws-64-1064 & 64 & 192 & 160.7 & 224 & 219 & 203 & 203.0 & 20.7\% \\
powerlaw-cluster-64-1064 & 64 & 181 & 97.0 & 185 & 185 & 184 & 184.3 & 47.3\% \\
sbm-64-1064 & 64 & 457 & 666.0 & 1892 & 1713 & 1450 & 1450.0 & 54.1\% \\
geometric-64-1064 & 63 & 206 & 273.0 & 335 & 330 & 317 & 317.0 & 13.9\% \\
regular-64-1064 & 64 & 128 & 47.5 & 54 & 54 & 54 & 54.0 & 11.1\% \\
er-64-2064 & 64 & 210 & 132.5 & 258 & 254 & 221 & 221.0 & 39.8\% \\
ws-64-2064 & 64 & 192 & 163.6 & 234 & 225 & 209 & 209.0 & 21.5\% \\
powerlaw-cluster-64-2064 & 64 & 183 & 135.7 & 186 & 186 & 186 & 186.0 & 26.9\% \\
sbm-64-2064 & 64 & 454 & 707.0 & 1929 & 1675 & 1505 & 1505.7 & 53.0\% \\
geometric-64-2064 & 64 & 185 & 225.0 & 278 & 274 & 267 & 267.0 & 15.7\% \\
\end{longtable}}

\end{document}